\documentclass[11pt]{article}
\usepackage{lmodern}
\usepackage{dsfont}
\usepackage{environ}
\input{definitions.sty}

\newcommand{\Omeg}{\mathrm{\Omega}}
\newcommand{\Omic}{\mathrm{O}}

\newcommand{\Rne}{{\mathbb{R}_{\ge 0}}}
\newcommand{\Rnen}{{\mathbb{R}_{\ge 0}^n}}
\newcommand{\Rnenminusone}{{\mathbb{R}_{\ge 0}^{n-1}}}

\renewcommand{\Pr}{\mathop{\bf Pr\/}}
\newcommand{\E}{\mathop{\mathbb{E}\/}}  %

\newcommand{\eps}{\epsilon}

\newcommand{\calD}{\mathcal{D}}

\newcommand{\ba}{\boldsymbol{a}}

\def\<{\langle}
\def\>{\rangle}

\def\wt{\widetilde}

\def\vec{\bm}

\newcommand{\emailhref}[1]{\href{mailto:#1}{\tt #1}}

\newcommand{\Twin}{{\ensuremath{\mathsf{Twin}}}}

\newcommand{\Viol}{{\ensuremath{\mathsf{Viol}}}}

\newcommand{\sucx}{{\ensuremath{\mathsf{success}}}}

\newcommand{\DSIC}{{\ensuremath{\mathsf{DSIC}}}}
\newcommand{\Mo}{{\ensuremath{\mathsf{M}}}}
\newcommand{\CMo}{{\ensuremath{\mathsf{CM}}}}

\begin{document}

\title{On testing the incentive compatibility \\ of single-parameter allocation mechanisms}%

\author{
 	    \textbf{Jason Milionis} \\
        \small Department of Computer Science \\
 		\small Columbia University \\
 		\small \emailhref{jm@cs.columbia.edu}
   \and
        \textbf{William Pires} \\
        \small Department of Computer Science \\
 		\small Columbia University \\
 		\small \emailhref{williampires@cs.columbia.edu}
}
\date{}
\maketitle

\thispagestyle{empty}

\begin{abstract}
This paper is the first work at the intersection of game theory and property testing, giving algorithms and lower bounds for \emph{efficiently} testing whether an allocation mechanism is incentive compatible (IC). We propose distinguishing whether a mechanism is \emph{$\eps$-far} from being IC, i.e., when it observes many monotonicity ``violations.'' Conceptually, inspired by the literature on Boolean function monotonicity testing, we construct a tester for discrete single-parameter allocation rules. Technically, our work is the first to consider monotonicity testing of \emph{vector-valued} functions on the hypergrid. We give a $\wt{\Omic}(n/\eps)$-query algorithm to test whether a function (representing $n$-player allocation mechanisms) is coordinate-wise monotone versus $\eps$-far from it. We also show a matching lower bound: the class of coordinate-wise monotone vector-valued functions on a Boolean hypercube or hypergrid requires $\wt{\Omeg}(n/\eps)$ queries to test whether it is $\eps$-far from monotonicity, and this holds even if the tester is two-sided and allowed to make adaptive queries. Finally, we extend our upper bound to and give a tester of the same query complexity for pricing functions of allocation mechanisms. This requires overcoming the technical challenge that the
path in function space to the closest IC mechanism may involve interdependent changes to both the price and
the allocation rule.
\end{abstract}

\section{Introduction}
\label{sec:intro}

In the field of algorithmic game theory, mechanism design studies the derivation of rules that induce rational agents to behave in a desired way. A mechanism is a system that specifies the actions, outcomes, and payoffs for each agent, given their preferences and information. A mechanism is said to be incentive compatible if it aligns the agents' individual incentives with the mechanism's social objective, such that no agent has an incentive to deviate from the prescribed strategy.
Carefully considering and reasoning for strategic behavior has played a significant role in the design and analysis of modern algorithms and infrastructure, especially with the advent of the internet \citep{tim_agt}, and has led to the flourishing of the field of mechanism design.

However, nowadays, an increasing array of downstream applications of mechanisms has been operating within environments characterized by partial or obscured transparency, most of the times by virtue of the sheer intricacy and computational intensity of the implemented resource allocation mechanisms.
In this light, and in the era of complex infrastructure and massive computation, a key concern remains unaddressed:
\begin{question}
Is a mechanism that players engage in truthful (even on most inputs) or not?
\end{question}
Simply put, how can participants \emph{efficiently verify} that an allocation mechanism they participate in---and to the internals of which they do not have access to---is actually truthful?
This question forms the focal point of our paper.

A prime example of the significance of probing the incentive compatibility (IC) of black-box
mechanisms—as opposed to trusting that mechanisms are out-of-the-box guaranteed to be IC—
arises from the allegations of anti-competitiveness and non-truthful behavior brought by the \citet{usvgoogle}, illustrating that non-IC mechanisms may have severe real-world implications: distorting market incentives, harming participants, and undermining the economic fairness of allocation mechanisms.
This paper initiates the study of efficiently testing whether a black-box allocation mechanism is truthful to some accuracy $\eps$.
More specifically, our investigation is inspired by and closely follows the extensive literature on property testing.

In property testing, one is interested in designing a fast randomized algorithm (called a tester) to distinguish if an input object has some property $\Pi$, or is far from \textit{any} object with this property.
In particular, in the model we hereby adopt, the algorithm only has query access to the object, and the goal is to develop a tester that makes as few queries as possible, thereby being \textit{efficient}.

A natural question one might ask is: why not test whether the mechanism conforms to \textit{exact} IC, and reject it if it does not?
The reason is that, in the property testing light, such a test is highly inefficient from a computational standpoint---intractable one might say---intuitively because, e.g., in the case of discrete bids, one would have to test every single combination of bids to verify that the mechanism is exactly IC (there might be a single violation of incentive compatibility in a hidden combination of bids that leads to untruthful pricing or allocations); we call such deviations from IC principles, "violations."
Furthermore, besides the utter inefficiency of such a quest, in the key application we have in mind where the allocations are black-box for a good reason (e.g., for problems whose optimal---or even approximate---welfare allocation is computationally taxing to find out), performing all these oracle queries would put a significant strain on outside computational resources.
The query complexity hardness of exactly testing whether an allocation rule $\vec x$ is IC follows roughly from ideas inspired by the traditional principles of property testing literature.
As a matter of fact, as we will later see that our lower bounds establish (see, e.g., \Cref{infthm:second_lb} and set $\eps=2^{-n}$), it is indeed hopeless to aim for the detection of just a \emph{single} violation of incentive compatibility in the whole binary bid space of an auction, as such a quest would necessitate, even under adaptive two-sided testers, exponential query complexity, on the order of $\Omega(2^n)$.

This paper is the first, then, to tackle the problem of \textit{efficiently} verifying whether a given allocation mechanism with black-box access is ($\eps$-)close to some incentive-compatible mechanism, or far from \textit{any} IC mechanism.

\subsection{Description of the Setting}
\label{subsec:intro:setting}

Before we go into our main results and techniques, let us first
describe the setting under which we operate.
Suppose that there are $n$ players. In any mechanism, there is an outcome space $X$ that the mechanism chooses depending on strategic inputs received by players, and the mechanism also directs each player to pay some monetary amount. Players have some utility for (equivalently, valuation of) each outcome of the mechanism; in a single-parameter setting, any user's utility can be fully described by a single, private value $v_i\in\Rne,$ for $i\in [n]$.
An allocation mechanism takes as input from the users bids $b_i$, and each user $i$ obtains an allocation $x_i$ and makes a payment $p_i$ (determined by the mechanism); their utility is then $ u_i = v_i x_i - p_i\,. $
All of the above can equivalently be given in vector format.

A mechanism is called dominant-strategy incentive-compatible (DSIC), if and only if, for all $i\in [n]$ and any bids by the other users $\vec{b}_{-i}\in\Rnenminusone$, the best action for the user is to truthfully report their private value, i.e., for all $b_i\in\Rne$, it holds that $u_i(v_i, \vec{b}_{-i}) \ge u_i(b_i, \vec{b}_{-i})$.
The following is a crisp characterization of all implementable single-parameter DSIC mechanisms, due to Myerson.

\begin{lemma}[\citep{myerson}]
An allocation rule $\vec{x}:\Rnen \to S^n$ for a set $S$ with a total order can be implemented (i.e., there exists an appropriate payment rule that makes the pair of the allocation and payment rule truthful) as a dominant-strategy incentive-compatible (DSIC) mechanism, if and only if
\[ \forall i\in [n], \vec{b}_{-i} \in \Rnenminusone : x_i(b_i, \vec{b}_{-i}) \text{ is monotone in } b_i  \,. \]
We call the above a \emph{coordinate-wise monotone} function. Under the normalization $p_i(0,\vec b_{-i})=0$, the payment rule $\vec{p}:\Rnen\to\Rnen$ must be
$
\forall i\in [n], \vec{b}_{-i} \in \Rnenminusone :
p_i(\vec{b})
=
\int_0^{b_i} s ~dx_i(s; \vec{b}_{-i})
\,,
$
where the integral is a Riemann--Stieltjes integral.
\end{lemma}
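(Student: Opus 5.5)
The argument splits player by player. Fix $i$ and $\vec b_{-i}$, and write $x(z)=x_i(z,\vec b_{-i})$ and $p(z)=p_i(z,\vec b_{-i})$. The DSIC condition for player $i$ then only involves this slice, since other players' bids are held fixed. So it suffices to prove the one-dimensional statement: a pair $(x,p)$ on $\Rne$ satisfies
\[
vx(v)-p(v)\ \ge\ vx(z)-p(z)\qquad\text{for all } v,z\ge 0
\]
if and only if $x$ is monotone non-decreasing and $p(v)=\int_0^v s\,dx(s)$. We identify the ordered set $S$ with a subset of $\mathbb{R}$, since utilities are linear in $x_i$.

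\emph{Necessity of monotonicity.} Take $0\le z<v$. Write the IC inequality with value $v$ and deviation $z$, and again with value $z$ and deviation $v$:
\[
vx(v)-p(v)\ge vx(z)-p(z),\qquad zx(z)-p(z)\ge zx(v)-p(v).
\]
Adding them gives $(v-z)(x(v)-x(z))\ge0$, so $x(v)\ge x(z)$. Rearranging the same two inequalities gives the sandwich
\[
z\,(x(v)-x(z))\ \le\ p(v)-p(z)\ \le\ v\,(x(v)-x(z)).
\]

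\emph{Necessity of the payment formula.} Partition $[0,b]$ as $0=s_0<\dots<s_k=b$ and apply the sandwich on each piece $[s_{j-1},s_j]$. Telescoping shows that $p(b)-p(0)$ lies between the lower and upper Riemann--Stieltjes sums of $s$ against $x$. Because $x$ is monotone it has bounded variation, and $s\mapsto s$ is continuous, so the Riemann--Stieltjes integral exists. The two sums differ by at most $\text{mesh}\cdot(x(b)-x(0))\to0$. Together with the normalization $p(0)=0$, this forces $p(b)=\int_0^b s\,dx(s)$.

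\emph{Sufficiency.} Given monotone $x$, define $p$ by the formula. For $z<v$ we need $v(x(v)-x(z))\ge p(v)-p(z)=\int_z^v s\,dx(s)$. This holds because $s\le v$ on the range of integration and $dx\ge0$. For $z>v$ we need $p(z)-p(v)=\int_v^z s\,dx(s)\ge v(x(z)-x(v))$, which holds because $s\ge v$ there. This is the area-under-the-curve argument.

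The main obstacle is expected to be the limiting step. One must check that the Riemann--Stieltjes integral is well defined even when $x$ has jumps; this is fine because the integrand $s$ is continuous. The remaining care is just making the sandwich bounds converge, which follows from the bounded-variation estimate above. Monotone here means non-decreasing, as the IC argument forces.
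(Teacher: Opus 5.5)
Your proof is correct and complete, and it is the standard one: the paper gives no proof of its own (the lemma is quoted from Myerson), and your steps are the textbook argument. Those steps are swapping the two IC constraints to get monotonicity and the sandwich $z\,(x(v)-x(z))\le p(v)-p(z)\le v\,(x(v)-x(z))$, squeezing that sandwich onto the Riemann--Stieltjes integral, and the area-under-the-curve check for sufficiency.

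Two trivial points you could state explicitly. First, the order on $S$ must be the one inherited from $\mathbb{R}$; the real justification for your identification is that the utility $v_ix_i-p_i$ already presupposes numeric allocations, not linearity. Second, the payments you construct are nonnegative, as the codomain $\mathbb{R}_{\ge 0}^n$ requires.
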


Myerson's lemma, therefore, gives an equivalence between incentive compatibility and coordinate-wise monotonicity.
It additionally gives a special form that the normalized payment rule has to obey. Note that the payment rule has this exact form in the case where bids are continuous, and when they are in a discrete space, it corresponds to the maximum payment; we will explore in full detail the structure of these rules in the core of the paper, and will not bother with this distinction for the purposes of this introductory exposition.

The first step towards looking at how to test if a mechanism is incentive-compatible is to understand coordinate-wise monotonicity. Already, the reader might wonder why testing coordinate-wise monotonicity of a function $\vec{x}:\{0,1\}^n \to \{0,1\}^n$ \emph{does not} simply reduce to testing that for each bit $i \in [n]$, the output of $\vec{x}$ is monotone in bit $i$. Here is a counterexample that shows why such an approach would fail. Consider $n=2$ and $\vec{x}$ defined as per $\vec{x}(\vec{00})=\vec{11}, \vec{x}(\vec{10})=\vec{10}, \vec{x}(\vec{01})=\vec{01}$ and $\vec{x}(\vec{11})=\vec{00}$. This function is coordinate-wise monotone, but, if one were to look at only one of the two output bits, the result would not be monotone. Indeed, for $b \in \{1,2\}$, we would always have $\vec{x}(\vec{00})_b =1 > \vec{x}(\vec{11})_b =0$.
That is, if we fix all but coordinate $i$ of the input, the function obtained by looking only at the $i$-th output bit is monotone.

Since the center of our focus in this work will revolve around vector-valued functions on a hypergrid, we will restrict our attention to the discrete case, as an important first step into the venture of property testing and game theory.\footnote{For example, imagine that the bid space has been discretized. Note that such a restriction is not extremely strong, as we have the slack of distinguishing $\eps$-far monotone mechanisms, hence providing natural discretization intervals. However, our focus will not be on such a correspondence.}
At the same time, most common regimes consist of allocation rules that concern combinatorial cases, see, e.g., Roughgarden, 2016 \cite{tim_agt}.
In the discrete case, even after this normalization, a coordinate-wise monotone allocation $\vec{x}$ need not have a unique price function $\vec{p}$ such that the pair $(\vec{x}, \vec{p})$ is DSIC. Indeed, if we allow the prices to be real-valued, there are infinitely many functions $\vec{p}$ that would result in an IC mechanism. However, the behavior of any such $\vec{p}$ is tightly related to the "jumps" in the allocations $\vec{x}$.
Formally, our setting here will be to examine vector-valued functions of the form $\vec{x}:[0:m]^n \to [0:d]^n$ for the allocation rule, and for the joint allocation--price rule, functions of the form $\vec{a}:[0:m]^n \to [0:d]^n \times [0,dm]^n$.

Up to this point, we have almost completely translated the problem to the language of a property testing one. The only definition we are missing is the $\eps$-far part.
For this, we define the distance of a mechanism to IC (in general, we will refer to it as "distance to monotonicity") as the minimum distance of an allocation rule $\vec{x}$ to \textit{any} IC mechanism, i.e., the minimum number of corrections of "violations" in the allocation function needed to be made, in order for the resulting mechanism to be (exactly) IC.
In an analogous way, the distance to monotonicity of a joint allocation--price rule is defined to be the minimum distance to any mechanism that is \textit{jointly} correct, i.e., both allocations and prices are set according to a DSIC rule. In order not to clutter the notation, we will not give the full definition here, but later in \Cref{subsec:main:full}.

\subsection{Main Contributions}
\label{subsec:intro:main}

Our main contributions with this work can be summarized in two key pillars as follows:
\begin{itemize}
\item Conceptually, we propose a novel approach of \textit{efficiently}, approximately verifying the incentive compatibility of allocation mechanisms with few queries. With this, we hope to open up a new area at the intersection of mechanism design and property testing resulting in many fruitful avenues of research.
\item Technically, we make the first known contribution to monotonicity testing of vector-valued Boolean functions, and vector-valued functions on a hypergrid. We show that the class of such coordinate-wise monotone functions is "hard to test" (compared to standard results of the property testing literature) and we derive optimal upper and lower bounds to do so.
\end{itemize}

Before presenting our results, note that a tester can be one or two sided, and adaptive or non-adaptive. We provide definitions of these in \Cref{sec:prelims} and sketch them here. Roughly speaking a tester $\mathcal{A}$ for a property $\Pi$ \footnote{$\Pi$ should be thought as a set of function} gets as input query access to a function $f$ and must have the following properties: if $f$ is "far" from any function in $\Pi$, the algorithm should reject with high probability. If $\mathcal{A}$ is one-sided, it must \emph{always} accept when $f \in \Pi$. If $\mathcal{A}$ is two-sided, it must accept $f$ with \emph{high probability} when $f \in \Pi$. If $\mathcal{A}$ is non-adaptive, the algorithm makes all its queries at once in parallel before deciding to accept or reject. However if $\mathcal{A}$ is adaptive, it's $i$-th query can depend on all the previous queries as well the answers received for them. 

We will now present informal versions of our main theorems, whose full versions can be found in \Cref{theorem: second lower bound,theorem: grid tester,thm:ub,thm:ub_amazing}, respectively. We first look at the complexity of testing coordinate-wise monotonicity.
\begin{inftheorem}
\label{infthm:second_lb}
Any two-sided adaptive tester for coordinate-wise monotonicity must have query complexity $\wt{\Omeg}(n/\epsilon)$.
\end{inftheorem}

\begin{inftheorem}
\label{infthm:ub1}
There exists a one-sided non-adaptive tester for coordinate-wise monotonicity with query complexity $\Omic(n \log(m) /\epsilon)$.
\end{inftheorem}

Using the insights we get from testing coordinate-wise monotonicity, we tackle the more challenging task of testing incentive-compatibility. We then give two testers for DSIC mechanisms. First, we show that if computational complexity is not a consideration---but rather oracle queries are---then an algorithm of ours obtains a near-optimal (because of our lower bound in \Cref{infthm:second_lb}) sample complexity (in \Cref{infthm:ub3}). However, with polynomial computational complexity, we identify a tester that does not asymptotically achieve the near-optimal sample complexity shown by \Cref{infthm:second_lb} (in \Cref{infthm:ub2}). We leave the issue of potentially closing this gap (or providing a formal separation of the two regimes) as a very interesting future direction of study in the new area of literature that we initiate.

\begin{inftheorem}
\label{infthm:ub2}
There exists a one-sided tester for DSIC mechanisms (including prices) with query complexity
$\Tilde{\Omic}\left( \frac{n}{\epsilon} \min\left(m, d\right) \right)$. The running time of the tester is $\Tilde{\Omic}(n^2d/\epsilon)$. 
\end{inftheorem}

\begin{inftheorem}
\label{infthm:ub3}
There exists a one-sided tester for DSIC mechanisms (including prices) with query complexity
$\Tilde{\Omic}(n/\epsilon)$. The running time is $\Tilde{\Omic}(n^2/\epsilon)+\Tilde{\Omic}(n/\epsilon)d^{\Omic(m)}$.
\end{inftheorem}

\subsection{Techniques}
\label{subsec:intro:techniques}

\subsubsection{Lower bounds.} Our lower bounds concern the case of coordinate-wise monotonicity testing for vector-valued Boolean functions (i.e., a hypercube), and since the case of a hypergrid contains as a special case that of the hypercube, these lower bounds also apply then, and also even if we also test prices (which we do in the above general algorithm we gave).
The tester we will analyze in \Cref{subsubsec:intro:ub} matches the lower bound, hence it is optimal to test both allocations and prices when prices are derived from Boolean allocations, and even in the most general case of completely generic discrete allocation mechanisms (hypergrid allocations), up to log factors.

We show that testing whether a function is coordinate-wise monotone or $\epsilon$-far from it is hard, even for a two-sided adaptive tester.
Before rising to the full generality of this claim, let us first consider (as an instructive warm-up) a simpler case of our lower bound against one-sided non-adaptive testers, to give a better intuition as to why coordinate-wise monotonicity might be a hard-to-test property.\footnote{The rest of the proof that we develop in the main text goes through a classic lower bound technique against one-sided testers \citep[see, e.g.,][]{fisher}.}
\begin{lemma}\label{infthm:first_lb}
For every $\eps \le 1/2$, any one-sided non-adaptive $\eps$-tester for coordinate-wise monotonicity must have query complexity $\Omega(n)$.
\end{lemma}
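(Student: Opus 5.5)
The plan is to build a family of functions that are all $1/2$-far from coordinate-wise monotone, but whose violations only show up on edges of one hidden direction. A one-sided tester then has to query an edge in that exact direction, and a small query set can only contain edges in few directions. Throughout, a function $\vec f:\{0,1\}^n\to\{0,1\}^n$ is coordinate-wise monotone iff for every $i$ and every edge $(x,x+e_i)$ we have $f_i(x)\le f_i(x+e_i)$.

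\textbf{Step 1 (what a one-sided tester can reject on).} A one-sided tester may reject only if the answers on its query set $Q$ are inconsistent with every coordinate-wise monotone function. The constraints on output coordinate $i$ involve only pairs of points on a common line in direction $i$, so they decompose over such lines. Hence the answers extend to a monotone function iff $Q$ contains no pair $x,x+e_i$ with $f_i(x)=1>f_i(x+e_i)=0$. (For a pair on a common $i$-line that is not adjacent, monotone interpolation along that line is possible unless the endpoints are inverted; in the hypercube every line has only two points, so only edges matter.) So the tester must query both endpoints of a violated edge.

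\textbf{Step 2 (hard family).} For $j\in[n]$, define $\vec f^{(j)}$ by $f^{(j)}_i(x)=x_i$ for $i\ne j$ and $f^{(j)}_j(x)=1-x_j$. Every direction-$j$ edge is violated, and no other edge is. Any monotone $\vec g$ must differ from $\vec f^{(j)}$ on at least one endpoint of each of the $2^{n-1}$ disjoint direction-$j$ edges. So $\vec f^{(j)}$ is $1/2$-far, and hence $\eps$-far for every $\eps\le 1/2$. By Step 1, the tester can reject $\vec f^{(j)}$ only if $Q$ contains an edge in direction $j$.

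\textbf{Step 3 (key combinatorial claim, the main step).} Any $Q\subseteq\{0,1\}^n$ with $|Q|=q$ spans edges in at most $q-1$ distinct directions. I would prove this by induction on $q$.
\begin{itemize}
\item If $Q$ spans no edge, the claim is trivial.
\item Otherwise pick a direction $d$ used by some edge, and split $Q=Q_0\cup Q_1$ by coordinate $d$; both parts are nonempty.
\item Direction-$d$ edges cross between the parts, and every other edge lies inside $Q_0$ or inside $Q_1$.
\item By induction the number of directions is at most $1+(|Q_0|-1)+(|Q_1|-1)=q-1$.
\end{itemize}

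\textbf{Step 4 (conclusion).} Since the tester is non-adaptive, its query set $Q$ is drawn from a distribution fixed before any answers. Take $j$ uniform in $[n]$, so that $\vec f^{(j)}$ is the input. For each fixed $Q$, by Step 3 the probability over $j$ that $Q$ has a direction-$j$ edge is at most $(q-1)/n$. Averaging over the tester's randomness, the rejection probability on a random $\vec f^{(j)}$ is at most $(q-1)/n$. So some $j$ is rejected with probability at most $(q-1)/n$. Requiring rejection probability at least $2/3$ forces $q\ge 2n/3+1=\Omega(n)$.
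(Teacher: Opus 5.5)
Your proof is correct and follows the paper's argument: the same hard family (identity with the $j$-th output bit flipped), the same locality fact that a one-sided tester must query both endpoints of a violated edge, the same at-most-$(q-1)$-directions bound, and the same averaging over $j$ and the tester's randomness. The only differences are minor. You prove the direction bound yourself by induction, splitting along a used direction, where the paper cites Chockler and Gutfreund. You justify extendability by noting the constraints act on disjoint pairs of output bits, whereas the paper exhibits the explicit extension $\vec{y}(\vec{b})=\vec{b}$ off the query set.
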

The key point is that, to show the hardness, we rely on the following lemma, whose formal version will be presented in \Cref{lemma:key}, and roughly arises from a characterization of coordinate-wise monotonicity:
\begin{lemma}[informal]
A one-sided $\eps$-tester for coordinate-wise monotonicity cannot reject a function $\vec{x}: \{0,1\}^n \to \{0,1\}^n$, unless it queries the two end points of an {\em edge} violating coordinate-wise monotonicity.
\end{lemma}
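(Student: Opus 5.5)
The plan is to prove the contrapositive in a strong, transcript-level form. Suppose a set $Q \subseteq \{0,1\}^n$ of query points, with answers $\vec x(q)$ for $q\in Q$, contains no violating edge. Here a violating edge is a pair $u,v\in Q$ that differ only in coordinate $i$, with $u_i=0$, $v_i=1$ and $x_i(u) > x_i(v)$. Under that assumption I will show there is a coordinate-wise monotone $\vec x':\{0,1\}^n\to\{0,1\}^n$ that agrees with $\vec x$ on all of $Q$. Given this, the lemma follows by a standard argument. Fix the tester's random coins. Its first query is then determined. By induction on the number of queries, if $\vec x$ and $\vec x'$ agree on every point queried so far, the next query is the same under both functions; this covers adaptive testers as well. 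So the entire transcript, and hence the accept/reject decision, is identical on $\vec x$ and on $\vec x'$. A one-sided tester must accept $\vec x'$ with probability $1$, since $\vec x'$ is coordinate-wise monotone. Therefore on every random string on which its queries avoid violating edges, it accepts $\vec x$ too. Equivalently, it can only reject when it has queried both endpoints of a violating edge.

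The key structural observation is that, over the hypercube, coordinate-wise monotonicity decouples into independent local constraints. For each direction $i$, the constraint is $x_i(\vec b_{-i},0)\le x_i(\vec b_{-i},1)$ for every $\vec b_{-i}$. This involves only the $i$-th output bit, and only on the two endpoints of a single direction-$i$ edge. Each point $y$ lies on exactly one direction-$i$ edge, so each value $x_i(y)$ enters exactly one constraint. This is precisely what fails when one looks at output bits in isolation, as in the counterexample above. Here it works in our favour, because we can fill in values at unqueried points greedily without creating conflicts.

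The extension is as follows. For $q\in Q$, set $\vec x'(q)=\vec x(q)$. For $y\notin Q$ and each $i\in[n]$, set $x'_i(y)=y_i$: output $0$ in coordinate $i$ if $y$ is the lower endpoint of its direction-$i$ edge, and $1$ if it is the upper endpoint. To verify monotonicity, take any direction-$i$ edge $(u,v)$ with $u_i=0$, $v_i=1$. If both endpoints are in $Q$, the constraint holds by the assumption that $Q$ contains no violating edge. If $u\notin Q$, then $x'_i(u)=0\le x'_i(v)$. If $v\notin Q$, then $x'_i(v)=1\ge x'_i(u)$. This covers every edge, so $\vec x'$ is coordinate-wise monotone.

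I expect no real obstacle. The only point requiring care is the formalisation of ``cannot reject'': it is the coupling argument over random coins for adaptive testers, and it must be stated at the level of individual executions rather than probabilities. The notion of edge should also be read as an edge in the direction of the violated output coordinate, which is the only kind of violation in this setting.
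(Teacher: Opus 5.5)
Your proposal is correct and follows the paper's proof: you use the same extension (keep $\vec{x}$ on queried points and set $\vec{y}(\vec{b})=\vec{b}$ on unqueried ones), followed by the same edge-by-edge case check. Your coin-fixing transcript coupling makes rigorous the paper's one-line remark that a one-sided tester ``cannot reject since $\vec{x}$ could be coordinate-wise monotone.'' It also shows explicitly that the argument covers adaptive testers, whereas the paper's write-up is phrased for non-adaptive testers.
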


In order to proceed, we extend the above lemma (presented in \Cref{theorem: first lower bound}) to one that shows that, even with two-sided adaptive testers, querying an entire set of points does not help unless these specifically include a collection of at least $\Omega(n)$ end points of {\em edges}, and we prove our stronger lower bound, referenced above as \Cref{infthm:second_lb}, and formally shown in \Cref{theorem: second lower bound}.

The proof of our $\wt{\Omega}(n / \epsilon)$ lower bound is inspired by the technique used by \citet{CHOCKLER2004301}, who proved a lower bound of $\Omega(k)$ against two-sided adaptive testing of $k$-juntas.\footnote{A $k$-junta is a function on the hypercube that depends on at most $k$ variables.}

More specifically, we  define two randomized processes, $P_0$ and $P_1$, which reply to the adaptive queries of the tester while building a function $\Vec{x} : \{0,1\}^n \rightarrow \{0,1\}^n$. The function built by $P_0$, during this procedure, is $\epsilon$-far from coordinate-wise monotonicity with very high probability while the function built by $P_1$ is always coordinate-wise monotone. In particular, $P_0$ starts by sampling a small random set $S$ of edges along which it places violations to coordinate wise monotonicity as it replies to the algorithm's queries. 

Therefore, a two-sided tester for coordinate-wise monotonicity should reject with high probability when interacting with $P_0$ but accept with high probability when interacting with $P_1$. However, we argue that an algorithm can't distinguish $P_0$ from $P_1$ as long as it hasn't queried the two end points of an edge in $S$.
Unlike \citet{CHOCKLER2004301}, we are able to make our lower bound depend on $\epsilon$. There are two reasons for this: first, the way we define the behavior of $P_0$ depends on $\epsilon$ and we argue in \Cref{lemma: points to edges} that if an algorithm only makes $q$ queries, it has queried the two end points of at most $q\log_2 q$ edges. Thus, if the algorithm makes few queries, it's unlikely to hit both end points of an edge in $S$ when interacting with $P_0$.%

\subsubsection{Upper bounds.}
\label{subsubsec:intro:ub}
We start by demonstrating the key ideas to get our one-sided non-adaptive tester for coordinate-wise monotonicity. 

\begin{theorem}\label{infthm: grid tester}
There exists a one-sided non-adaptive tester with query complexity $\Omic(n \log(m)/\epsilon)$ that tests if a function $\Vec{x} : [0:m]^n \to [0:d]^n$ is coordinate-wise monotone or $\epsilon$-far from monotonicity.
\end{theorem}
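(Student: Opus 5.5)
Coordinate-wise monotonicity of $\vec x:[0:m]^n\to[0:d]^n$ decomposes along the $n$ input coordinates. For each $i\in[n]$, let $\Pi_i$ be the property ``$x_i(b_i,\vec b_{-i})$ is monotone in $b_i$ for every $\vec b_{-i}$.'' Then $\vec x$ is coordinate-wise monotone iff $\vec x\in\bigcap_i \Pi_i$. Note that $\Pi_i$ only involves the $i$-th output coordinate restricted to lines in direction $i$; this is exactly why the naive ``per output bit'' reduction fails but a per-\emph{direction} reduction works. The tester picks a direction $i$ uniformly at random, picks $\vec b_{-i}$ uniformly, and runs a one-dimensional monotonicity tester on the line $\ell(s)=x_i(s,\vec b_{-i})$, $s\in[0:m]$. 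It rejects only on an explicit violating pair $s<t$ with $x_i(s,\vec b_{-i})>x_i(t,\vec b_{-i})$, which gives one-sidedness. For the line test I would use the standard non-adaptive binary-search (or spanner/2-TC) tester for monotonicity on a line of length $m+1$: sample $s$ uniformly, query all points on the binary-search path toward $s$, and check every pair among the $O(\log m)$ queried points. If the line has distance $\delta$ to monotonicity (fraction of points to change), this catches a violation with probability $\Omega(\delta)$.

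The key step is the distance reduction. Let $\eps_i$ be the distance of $\vec x$ to $\Pi_i$. I claim $\eps\le\sum_i \eps_i$, where $\eps$ is the distance to coordinate-wise monotonicity. The argument: fix each line in direction $i$ independently by sorting or repairing only the $i$-th output coordinate along that line. This never affects any other $\Pi_j$, because $\Pi_j$ only reads output coordinate $j$ along direction-$j$ lines. Hence the repairs for different $i$ commute and do not interfere. Changing $\sum_i\eps_i\, (m+1)^n$ points (counting a point as changed if any output coordinate changes) yields a coordinate-wise monotone function. Moreover, $\eps_i$ equals the average over $\vec b_{-i}$ of the line distance $\delta_{i,\vec b_{-i}}$, since repairs on distinct lines are independent.

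Given that, one round of the test rejects with probability
\[
\Omega\Bigl(\tfrac1n\sum_i \E_{\vec b_{-i}}[\delta_{i,\vec b_{-i}}]\Bigr)=\Omega\Bigl(\tfrac1n\sum_i\eps_i\Bigr)\ge \Omega(\eps/n).
\]
Repeating $O(n/\eps)$ times, each round costing $O(\log m)$ queries, gives total query complexity $O(n\log m/\eps)$, non-adaptively, with constant rejection probability when $\vec x$ is $\eps$-far.

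I expect the main obstacle to be the one-dimensional guarantee: that the binary-search tester detects a violation with probability $\Omega(\delta)$ in terms of the Hamming distance to monotone of the line. I would prove it via the classical argument of Ergün et al. Call $s$ ``good'' if its search path contains no violated pair with $s$. Then the good points are pairwise consistent (any two good points have comparable values along their common search ancestor), so the good points are sorted. Hence at least a $\delta$ fraction of points are bad. A second subtlety to verify carefully is the subadditivity claim $\eps\le\sum_i\eps_i$; the direction/output-coordinate separation makes it clean.
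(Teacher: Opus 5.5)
Your proof is correct, and its skeleton matches the paper's. Your non-interference argument repairs direction $i$ by changing only output coordinate $i$ on each direction-$i$ line, so repairs for different directions cannot undo each other, and a union bound gives $\epsilon\le\sum_i \E_{\vec b_{-i}}[\delta_{i,\vec b_{-i}}]$. This is exactly how the paper proves \Cref{lemma: line dist}: it composes the line-sorting operators $L^1,\dots,L^n$ and applies the triangle inequality.

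The real difference is the one-dimensional subroutine, and where the $\log m$ is paid. The paper runs a two-query pair test. It draws $(p,l)$ from the distribution $\calD$ of \Cref{eq:d} and cites \citet{oded_improved} for $\Pr[f(p)>f(l)]\ge d_\Mo(f)/(C\log m)$. Each round therefore costs $2$ queries but succeeds only with probability $\Omega(\epsilon/(n\log m))$, so $O(n\log(m)/\epsilon)$ rounds are needed. You pay the $\log m$ inside each round instead. A full binary-search path costs $O(\log m)$ queries but exposes a violation with probability at least $\delta$, so $O(n/\epsilon)$ rounds suffice, and the totals coincide.

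Your version is self-contained. The lowest-common-ancestor argument shows the good points are sorted, and a monotone function on the good points extends to all of $[0:m]$ (e.g.\ copy the nearest good value to the left, or $0$ if there is none). Hence the bad points make up at least a $\delta$ fraction, with an explicit constant rather than an unspecified $C$. The paper's version has the leaner structure in which every test is a single comparison of two points, at the price of importing a nontrivial structural lemma about $\calD$.
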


In our tester, we turn to the analysis of \citet{oded_improved}, where the authors were interested in testing monotonicity of functions $f : [0:m]^n \rightarrow \Xi$, where $\Xi$ is an {\em ordered} set, which does not apply for our setting.
Specifically, given the function $\vec{x} : [0:m]^n \rightarrow [0:d]^n$, we decompose the hypergrid into "lines". That is, for $i\in [n], \alpha \in [0:m]^{i-1}, \beta \in [0:m]^{n-i}$, we define the line $x_{\alpha,i, \beta} :  [0:m] \rightarrow [0:d]$, as the restriction $x_{\alpha,i, \beta}(b)=\vec{x}(\alpha,b, \beta)_i$ for all $b \in [0:m]$.

We show that sampling a random "line" of the hypergrid gives a function that, in expectation, is at least $\epsilon/n$ far from monotonicity. We combine this with a result of \citet{oded_improved} who gave a distribution $\mathcal{D}$ such that if $f : [0:m] \rightarrow [0:d]$ is far from monotonicity, a pair $(p,l)$ sampled from the $\mathcal{D}$ witnesses a violation to monotonicity with high probability. Thus, if $\vec{x}$ if far from being coordinate-wise monotone, the following test should reject with "good" probability: Sample a line $x_{\alpha,i,\beta}$ of $\vec{x}$ uniformly at random, draw $(p,l) \sim \mathcal{D}$ and reject if $x_{\alpha,i,\beta}(p)>x_{\alpha,i,\beta}(l)$. Also observe that this test never rejects when $\vec{x}$ is coordinate wise monotone. Our tester then simply repeats the above test $O(n\log(m)/\epsilon)$ times.

Finally, we give two one-sided testers to test whether a pair $(\vec{x}, \vec{p})$, of an allocation function and a price function form a $\DSIC$ mechanism.

\begin{theorem}\label{ifnthm: price tester}
\hfill
    \begin{itemize}
        \item There exists a one-sided tester with query complexity $\Omic( \frac{n}{\epsilon} \log(2n/\epsilon) ( \min\{m,d\log(m)\} + \log(2n/\epsilon)) )$ to test if a pair $(\Vec{x}, \Vec{p}) : [0:m]^n \to [0:d]^n \times [0,dm]^n$ implements a $\DSIC$ mechanism or is $\epsilon$-far from any $\DSIC$ mechanism. This tester runs in time $\Tilde{\Omic}(n^2d/\epsilon)$.
        \item There exists a one-sided tester with query complexity $\Omic( \frac{n}{\epsilon}\log(m)\log^2(2n/\epsilon))$ for the same task. This tester runs in time $\Tilde{\Omic}(n^2/\epsilon)+\Tilde{\Omic}(n/\epsilon)d^{\Omic(m)}$.
    \end{itemize}
\end{theorem}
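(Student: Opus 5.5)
We have to prove the two-part tester theorem for pairs $(\vec x,\vec p)$. The plan is to reduce DSIC testing to a line-by-line problem, as in the coordinate-wise monotonicity tester of \Cref{infthm: grid tester}. Then, on each sampled line, we query enough points to detect a certified violation of a local "line-DSIC" condition.

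\textbf{Step 1: line characterization.} Fix $i$ and $\vec b_{-i}$. In the discrete setting, $(\vec x,\vec p)$ is DSIC iff on every line $\ell=(\alpha,i,\beta)$ the pair $(x_\ell,p_\ell):[0:m]\to[0:d]\times[0,dm]$ satisfies two conditions. First, $x_\ell$ is monotone. Second, $p_\ell(b)-p_\ell(b')$ lies between $b'(x_\ell(b)-x_\ell(b'))$ and $b(x_\ell(b)-x_\ell(b'))$ for consecutive $b'<b$, together with the normalization at $0$. These discrete Myerson inequalities follow directly from the DSIC definition. The key observation is that DSIC is a conjunction of constraints, each involving a single line. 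So the distance to DSIC equals the sum, over lines, of the number of points that must be changed to repair each line.

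\textbf{Step 2: averaging.} Each point lies on exactly $n$ lines. Changing the values $(x_i,p_i)$ at a point only affects the line in direction $i$, but a correction counts the point once. Summing the per-line repair costs therefore gives $\mathrm{dist}\ge \eps(m+1)^n$ total, which forces a uniformly random line to be, in expectation, at least $\eps/n$-far from line-DSIC. A standard bucketing argument (lines with distance in $[2^{-j},2^{-j+1}]$) converts this into a guarantee: with $O(\frac{n}{\eps}\log(2n/\eps))$ line samples, with high probability we sample some line whose distance $\delta$ is large relative to the number of samples taken at that scale. This Levin-style work investment accounts for the $\log(2n/\eps)$ factor.

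\textbf{Step 3: per-line testers.} On a sampled line that is $\delta$-far, we need a one-sided test. For the first bullet there are two options, and we take the minimum of their query costs. If $m$ is small, we query the whole line ($m+1$ queries) and check the inequalities in polynomial time. Otherwise, we use the Dodis et al.\ distribution $\mathcal D$ on $[0:m]$ to hit allocation violations, and exploit that $x_\ell$ takes at most $d+1$ values, so it has at most $d$ jump points. We binary-search each jump of $x_\ell$ at $O(\log m)$ queries apiece, which costs $O(d\log m)$. We then additionally sample $O(\log(2n/\eps)/\delta)$ random points to verify that the prices agree with the step structure. Any inconsistency among queried points is a certificate, so the test is one-sided, and the running time is $\tilde O(n^2 d/\eps)$.

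For the second bullet we use only $O(\log m\cdot \log(2n/\eps)/\delta)$ queries per line, sampled from $\mathcal D$. We reject iff no line-DSIC function is consistent with the observed values. Deciding this consistency is done by brute force over the $d^{O(m)}$ allocation patterns, which is exactly the stated running time.

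\textbf{Main obstacle.} The crux is the soundness of the second per-line tester. We must show that if $(x_\ell,p_\ell)$ is $\delta$-far from line-DSIC, then $O(\log m/\delta)$ samples from $\mathcal D$ are, with constant probability, jointly inconsistent with every line-DSIC pair. This is hard because the nearest DSIC repair may change allocations and prices in an interdependent way. I would first prove a structural lemma: the minimal repair set equals the complement of a maximum "consistent chain", that is, a largest set of points whose values satisfy all pairwise DSIC constraints (pairwise consistency for $b'<b$ is implied by consecutive conditions after interpolation). Given that lemma, there is a matching of $\Omega(\delta m)$ disjoint violating pairs, to which the Dodis et al.\ pair-hitting analysis for $\mathcal D$ applies. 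Proving that pairwise consistency suffices for extendability is the delicate point. I would attack it by constructing the extension explicitly: take $x$ as the step function between queried points and set $p$ via the maximum-payment formula, checking that the slack inequalities propagate.
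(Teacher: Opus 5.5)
Your outline agrees with the paper wherever the two overlap. Patching each line to a nearest line-$\DSIC$ function yields a $\DSIC$ function, since each constraint involves one line and one output coordinate, so a random line is $\eps/n$-far in expectation. Goldreich's work investment over scales $2^{-j}$ then reduces the problem to line testers, and your jump-point tester for the first bullet is essentially Algorithm~\ref{algo:mid_tester}. Three corrections are needed there.
\begin{itemize}
\item In Step~1 the distance is \emph{at most} the sum of per-line repair costs, not equal to it: a point repaired in several coordinates counts once. That is also the direction you actually use.
\item The $\log(2n/\eps)$ boost in every per-line budget is unnecessary, because a constant-error line tester run twice suffices when only one far line has to be caught. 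With the boost, the sums over scales overshoot the stated bounds by up to a $\log(2n/\eps)$ factor.
\item The Dodis et al.\ sampling in the first bullet should be dropped; at the $\log m/\delta$ rate it would need, it breaks the first bound for small $d$. Soundness there comes from a different fact. Once the jump points pass the checks, the step function $y$ together with the prices $p^y$ determined by the jump points is itself $\DSIC$. So a $\delta$-far line disagrees with $(y,p^y)$, in allocation or price, at a random point with probability at least $\delta$.
\end{itemize}

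For the second bullet your route is genuinely different. The paper samples $b$ from the line's distribution and queries $0$, $m$ and the two middle points of every ancestor of the leaf $[b:b]$ in a dyadic tree. It rejects iff no $\DSIC$ function fits those values. Soundness is proved by gluing an arbitrary member of $\mathcal{P}_u$ into $a$ on each node $u$ with $\mathcal{P}_u\neq\emptyset=\mathcal{T}_u$ (these nodes are pairwise incomparable).

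You instead sample Dodis et al.\ pairs and count violations via a matching. That works, but not as a black-box use of the cited lemma, which concerns monotonicity into an ordered set. What you actually need is the following.
\begin{itemize}
\item[(i)] Points $c_1<\dots<c_r$ with values $(x(c_k),p(c_k))$ extend to a $\DSIC$ line iff $p(c_1)\le c_1x(c_1)$ and, for each $k<r$, $\Delta_k=x(c_{k+1})-x(c_k)\ge 0$ and $c_k\Delta_k\le p(c_{k+1})-p(c_k)\le c_{k+1}\Delta_k$. This is the step you call delicate, but it is immediate: segments between consecutive points involve disjoint free variables, and the achievable price increments on a segment form exactly that interval.
\item[(ii)] Consequently, if $(c,c')$ is inconsistent, then so is $(c,z)$ or $(z,c')$ for every $z\in[c,c']$. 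Together with the 2-hop property of the pair distribution's support and its size $\Omic(m\log m)$, this replaces transitivity of the order in the Dodis et al.\ counting.
\item[(iii)] Points with $p(c)>c\,x(c)$ are inconsistent on their own and must enter the matching as singletons.
\end{itemize}

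The real cost of your route is scope. The pair distribution is tied to uniform distance, while the paper's formal versions of this theorem measure distance under an unknown $\mathcal{D}$ with conditional samples. In that setting your tester is unsound. For example, if $\mathcal{D}_{\alpha,i,\beta}$ is a point mass on a single bad even point $c$, Dodis pairs touch $c$ only with probability $\Omic(1/m)$. The tree tester avoids this because it samples directly from $\mathcal{D}_{\alpha,i,\beta}$, so it is distribution-free.

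In exchange, characterization (i) buys something the paper does not exploit. Consistency of the queried points is a polynomial-time check: sort them and verify the conditions above. So the $d^{\Omic(m)}$ enumeration you borrowed from the statement, and the one in the paper's tree tester, is not actually needed.
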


The tester of \Cref{infthm: grid tester} is optimal (up to log factors) for testing coordinate-wise monotonicity.
Hence, we use a similar idea for the joint testing of allocation rules and prices. As a reminder, given a function $\vec{a}=(\vec{x}, \vec{p}) : [0:m]^n \rightarrow [0:d]^n \times [0,dm]^n$, we decompose it into "lines". As a corollary of the $\DSIC$ definition, we derive what it means for the "line" $a=(x,p) : [0:m] \rightarrow [0:d] \times [0,dm]$ to obey $\DSIC$, and show that again, if $\vec{a}$ is $\epsilon$ far from being $\DSIC$, then a randomly sampled line $a_{\alpha, i, \beta}$ is at distance $\epsilon/n$ from $\DSIC$ in expectation. 
Assume we have an algorithm that test if function $a:(x,p)$ is $\DSIC$ or $\epsilon$-far from being $\DSIC$ (we call such an algorithm a line tester). Then we can test if $\ba$ is $\DSIC$ using the work investment strategy of \citet{goldreich:LIPIcs.APPROX-RANDOM.2014.704}. Roughly speaking, we try different distance parameters $\epsilon'$: For each $\epsilon'$ we sample $\tilde{O}(n/(\epsilon \cdot \epsilon'))$ lines of $\ba$ and run our line tester on each them with distance parameter $\epsilon'$. If a line is rejected by the tester, we reject $\Vec{a}$, otherwise we accept. We can show that by picking the different choices of $\epsilon'$ appropriately, with high probability $\Vec{a}$ is rejected if it is $\epsilon$ far from $\DSIC$. 

In particular, we give two different line testers, which can then be used to get the two different algorithm of \Cref{ifnthm: price tester}.

Our first tester, which is a "learner", works by finding all the "jumps" in the allocation function $x$, and then guessing a $\DSIC$ function $c$ based on the value of $x$ and $p$ at these points. If $a$ is $\DSIC$ this function is $a$ itself, otherwise $a$ must be far from $c$ so $a$ and $c$ must disagree on a randomly chosen point with high probability.

The second tester is more involved. We consider a tree binary tree $T_a$ of height $\Omic(\log(m))$. The root of the tree is labeled with the interval $[0:m]$. Then, given a node $[s:t]$, its left child is with the first half of the interval while the right child is labeled with the second half, the leafs are labeled with $[b:b]$ for each $b \in [0:m]$. Each node $[v:w]$ in the tree is associated with a set $\mathcal{P}_{[v:w]}$ of $\DSIC$ functions. Roughly speaking $\mathcal{P}_{[v:w]}$ corresponds to the set of $\DSIC$ functions which agree with $a(x)$ for all $x \in \big\{s,t \mid [s:t] \text{ is an interval on the path from the root of $T_a$ to $[v:w]$}\big\}.$

As such, given $b \in [0:m]$, we can check with $\Omic(\log(m))$ queries if this set is empty. Finally, If $a$ is $\DSIC$ then for all $b$ we have  $\mathcal{P}_{[b:b]}\neq \emptyset$ (since $a$ itself if always in $\mathcal{P}_{[b:b]}$), otherwise with high probability $\mathcal{P}_{[b:b]}= \emptyset$ on a randomly chosen $b$.

Our first line tester has query complexity $O(\min\{m,d\log(m)\}+1/\epsilon)$ while our second tester uses only $O(\log(m)/\epsilon)$ queries. Computing if $\mathcal{P}_{[b:b]}$ is empty is non trivial, and takes time $d^{O(m)}$: we search over all $(d+1)^{m+1}$ possible choices of allocation $x$, to see if there is some way to assign prices while being consistent with $a$ on the needed points (once $x$ is fixed, this is done using a linear program). This explains the difference in query complexity and running time between the two results of \Cref{ifnthm: price tester}.

\paragraph{Distance and distribution.}
So far, we haven't mentioned how the distance from $\Vec{a} : [0:m]^n \to [0:d]^n \times [0,dm]^n$ to being $\DSIC$ is computed. One could consider the uniform distribution, that is we are interested in minimum fraction of points in $[0:m]^n$ where one needs to change $\Vec{a}$ so that the resulting function is $\DSIC$. However, when thinking about mechanisms, it seems unlikely that the users' valuations are distributed uniformly. Therefore, when testing if $\vec{a}$ is $\DSIC$, we will actually consider its distance to being $\DSIC$ under an unknown distribution $\mathcal{D}$ over $[0:m]^n$. That is, given $\Vec{a}$ what is the minimum over all $\DSIC$ functions $\Vec{c}$ of $\Pr_{\vec{b} \sim \mathcal{D}}\left[ \vec{a}(\Vec{b}) \neq \vec{c}(\Vec{b}) \right]$.

In particular, we will consider that our testers have \emph{conditional sample access} to $\mathcal{D}$. This means we can sample $\Vec{b} \sim \mathcal{D}$ but also that for each user $i$, we can fix a vector a vector $\vec{b_{-i}}$ and ask for a sample $\Vec{b} \sim \mathcal{D}$ conditioned on the bids of users $1,\ldots, i-1, i+1, \ldots, n$ being $\vec{b_{-i}}$. This is equivalent to sampling the bid $b \in [0:m]$ of user $i$, conditioned on the bids of the other users being $\vec{b_{-i}}$. 

If one assumes bidders act independently, then  $\mathcal{D}$ is a product distribution $\mu_1 \times \ldots \times \mu_n$ where each $\mu_i$ is a distribution over $[0:m]$. Note that in such a case, we do not need conditional sampling access to $\mathcal{D}$. Indeed for any $i \in [n]$ the distribution of bids of user $i$ conditioned on the bids of the other users being some fixed vector $\vec{b_{-i}}$ follows $\mu_i$. Hence, whenever we need a conditional sample for some $i \in [n]$, we can instead sample $\Vec{b}$ from $\mathcal{D}$ and then look at $\vec{b}_i$.

\paragraph{A remark about $\eps$-IC testing.}
In mechanism design, we sometimes refer to approximately truthful mechanisms, where the desideratum (in the worst-case, distribution-independent version) is for the maximum utility gain that any single player can get from deviating from the truthful strategy to be at most $\eps$.
However, from our results, it is not straightforward to test for $\eps$-IC, and we leave this direction as a very interesting future avenue of research.
The following simple example shows why we cannot hope to get $\eps$-IC consequences from $\eps$-far monotonicity (or vice-versa) in a meaningful way: it is possible to create instances that are far from being $\eps$-IC, but with a single change of a price, can be made exactly DSIC; the reverse is also true, and intuitively, a function may be very far from \emph{any single} monotone allocation rule, but still be close to being approximately IC from the point of view of any single player's deviation in their strategy space.
As $\eps$-IC is out of scope for the present work, we will not elaborate further on that.

\subsection{Related Work}
\label{subsec:litrev}

Hereby, we offer a comprehensive review of various parts of the relevant literature.

\citet{estimatingepsICBalcan19} use samples from agents' type distributions in a Bayesian setting to bound the \textit{average} utility gain from strategic deviations over the distribution of other players' types, and, in a sample complexity manner, estimate the (Bayesian) $\eps$-IC of a mechanism.
The distributional assumption assists their quest.
In contrast, our work seeks to uniformly and unconditionally test the distance to monotonicity;
see the discussion in \Cref{subsec:intro:techniques} on the relation of that to \textit{worst-case} utility gains from deviating.
\citet{ab1_sergei,ab2} focus on the perspective of systematic testing from the viewpoint of a single bidder in the specific case of a second-price auction with reserves, and derive A/B testing setups for this format, utilizing simple bid perturbations.
Our work considers a significantly larger standpoint of allocation mechanisms, and thus exact testing is computationally inefficient to achieve, as highlighted in \Cref{sec:intro}.
Moreover, their work does not concern or relate to property testing, which is our predominant focus in the current paper.

The above work relates to a more general literature on learning from auctions, mostly with the goal of figuring out (optimal or almost optimal) reserve prices \citep{revmaxsinglesample10,chawlahartline2010,cole_roughgarden,makingmostofsamples15,pseudodim,fieldguidepersprice16,minregreserves16}. Such work could be used to verify if the reserve price is approximately what would be expected, but we are in a much more generic setting with no guarantees that the allocation observes such a format, as described above.

Another line of work centers around designing verifiably correct auctions from the point of the auctioneer who might be interested in malicious deviations \citep{PARKES2008294,credible}; such work is only tangentially relevant to our focus here, since it is usually not completely general for allocation rules, but most importantly, their objective lies on trying to make following the auction be a dominant strategy of the auctioneer, hence assuming a particular utility structure for the auctioneer, who nonetheless might have other reasons to deviate from the auction's specification.
We, in contrast, do not assume anything about the reason for the mechanism's deviation from IC, i.e., about the source of (reason for) violations.
A now-mature line of research is dedicated to bypassing the impossibility of designing credible IC mechanisms \citep{credible} by using cryptographic commitments and multi-round auction mechanisms \citep{jason_nft_auctions,chitraferreirakulkarni,ferreiraweinberg}.

Our work also contributes to the literature on trust and transparency in market design.
\citet{canetti2023zeroknowledgemechanisms} study the possibility of participants verifying whether a mechanism is DSIC, and doing so without revealing that the mechanism itself (in what is called the ``zero-knowledge'' approach). Unlike our work, \citet{canetti2023zeroknowledgemechanisms} abstract away from quantifying the number of queries it takes to test DSIC, or the computational complexity of testing algorithms.
Relatedly, \citet{gonczthomas2024} study the question of how many bits of information it takes for participants to verify that correct allocation mechanism has been used.
\citet{grigoryan_auditability2023} study the question of market participants detecting whether the designer runs the mechanism as promised; unlike \citep{credible}, and similarly to our work, they are concerned with general allocation problems and arbitrary reasons for deviations.

The other side of the literature that our work builds upon and extends is that of property testing of Boolean functions and hypergrids. Starting with \citet{FirstMonoTester}, extensive work has been done in that area \citep{10.1145/1007352.1007414, Xi_mono_lb, chen2014booleanfunctionmonotonicitytesting, belovs2015polynomiallowerboundtesting, best_mono_tester, chen2017talagrand, oded_improved, fisher, BestGridLowerBound, chakrabarty2013OptUpperBound,black2023d12o1}, culminating in a non-adaptive algorithm with $\wt{\Omic}(\sqrt{n}/\epsilon^2)$ query complexity by \citet{best_mono_tester}. To this date, there are no known adaptive testers achieving better query complexity. Non-adaptive testers have a matching (up to $\log$ factors) lower bound of $\wt{\Omeg}(\sqrt n)$ queries when $\epsilon$ is a constant \citep{chen2017talagrand}. Nevertheless, for adaptive testers, the gap is still open, since the best lower bound is $\wt{\Omeg}(n^{1/3})$ \citep{chen2017talagrand}.
On the front of property testing for vector-valued Boolean functions, some works \citep{quantum1, quantum2} on quantum computing have been interested in various flavors of testing, e.g., for linearity; nonetheless, none of these works concern monotonicity testing (our sole focus), and other quantum testing papers \citep{belovs2015quantum} only consider the case of univariate functions on the Boolean hypercube.
As mentioned before, when testing if a function $\Vec{a}$ is $\DSIC$ we will consider the distance with respect to unknown distribution $\mathcal{D}$ and we assume our tester can has conditional sample access to $\mathcal{D}$. This model is very similar to the distribution free model of property testing first investigated by \citep{10.1145/285055.285060}. In this model one has query access to a function $f$ and sample access to a an unknown distribution $\mathcal{D}$ (but one cannot ask for conditional samples). The goal is to test whether $f$ has some property or if for any function $g$ with the property we have $\Pr_{x \sim \mathcal{D}}[f(x) \neq g(x)] \geq \epsilon$.

Lastly, in general, results in high-dimensional hypergrid and hypercube settings has been notoriously hard to obtain. Inequalities involving vector-valued functions on the Boolean hypercube have recently been given by \citet{beltran2023sharp,eskenazis_polynomial_2020}, but these have no algorithmic or testing consequences known to this date, like the ones we focus on, nor do they easily translate to the hypergrid.

\section{Preliminaries and Definitions}
\label{sec:prelims}

Unless stated otherwise, $m,n,$ and $d$ are positive integers. We denote by $[m]$ the set $\{1,\ldots,m\}$ and by $[0:m]$ the set of integers $\{0,1,\ldots,m\}$.
In this work we will be interested in vector-valued (or multi-output) functions, meaning their output is a vector. Some examples that we will use in this work are $f:\{0,1\}^n \to \{0,1\}^n$, $f:[0:m]^n \to [0:d]^n$, or $f:[0:m]^n \to [0:d]^n \times [0,dm]^n$.

Given $\alpha \in [0:m]^{i-1}, b \in [0:m]$ and $\beta \in [0:m]^{n-i}$, we will often denote by $\alpha,b,\beta$ their concatenation. That is $\alpha,b,\beta$ is the string that agrees with $\alpha$ on coordinates $j \leq i-1$, has $i$-th coordinate being $b$, and agrees with $\beta$ on coordinates $j\geq i+1$.

We proceed to give some basic definitions that we will later use.
\begin{definition}[Monotonicity]
    Let $\Sigma$ and $\Xi$ be totally ordered sets. A function $x : \Sigma \to \Xi$ is monotone if $x(b)\leq_\Xi x(b')$ whenever $b\leq_\Sigma b'$.
\end{definition}

\begin{definition}[Distance between two functions]
Given two functions $x,y: S \to T$, we define the distance between $x$ and $y$ as
\[
d(x,y) \triangleq \frac{1}{|S|}\left|\{s \in S\;:\; x(s) \neq y(s) \}\right|
\,.
\]
\end{definition}

\begin{definition}[Distance to monotonicity]
\label{def:dist_mono}
    Let $\Sigma$ and $\Xi$ be finite totally ordered sets and let $x:\Sigma\to\Xi$. The distance of $x$ to monotonicity is
    \[d_\Mo(x) \triangleq \min_{y} d(x,y) \,,\]
    where the minimum ranges over all monotone functions $y:\Sigma\to\Xi$.
\end{definition}

\begin{definition}[Coordinate-wise monotonicity and distance]
\label{def:dist_cmo}
    A function $\vec{x}:[0:m]^n\to[0:d]^n$ is coordinate-wise monotone if, for every $i\in[n]$, $\alpha\in[0:m]^{i-1}$, and $\beta\in[0:m]^{n-i}$, the scalar function
    \[
        b\longmapsto \vec{x}(\alpha,b,\beta)_i
    \]
    is monotone. Its distance to coordinate-wise monotonicity is
    \[
        d_\CMo(\vec{x})\triangleq \min_{\vec{y}}d(\vec{x},\vec{y}),
    \]
    where the minimum ranges over all coordinate-wise monotone functions $\vec{y}:[0:m]^n\to[0:d]^n$.
\end{definition}

Likewise, the distance to monotonicity of a joint allocation and price rule is defined to be the minimum distance to any mechanism that is \textit{jointly} correct, i.e., both allocations and prices are set correctly. So that we not clutter the notation, we will not give the full definition here, but later in \Cref{subsec:main:full}.
    
We will consider two different types of testers, two-sided and one-sided testers.
\begin{definition}[Two sided property tester]
    Consider a property $\Pi \subseteq \{f \;|\; f : S \to T\}$. A tester for $\Pi$ is a randomized algorithm $\mathcal{A}$ that, given a parameter $0<\epsilon\leq1$ and query access to a function $x : S \to T$, satisfies the following conditions:
    \begin{enumerate}
        \item $\mathcal{A}$ accepts a function from $\Pi$ with high probability. 
        That is, if $x \in \Pi$, then \[\Pr[\mathcal{A}(\epsilon, x) = 1] \geq 2/3.\]
        \item $\mathcal{A}$ rejects a function that is far from $\Pi$ with high probability. That is, if $d(x,y) \geq \epsilon$ for all $y \in \Pi$, then \[\Pr[\mathcal{A}(\epsilon, x) = 0] \geq 2/3.\]
    \end{enumerate}
    In the above, the probabilities are taken over the randomness used by $\mathcal{A}$.
\end{definition}

\begin{definition}[One-sided property tester]
    The definition is similar to that of a two-sided tester, except the first condition becomes: if $x \in \Pi$ the tester must \emph{always} accept $x$. That is, if $x \in \Pi$, then $\Pr[\mathcal{A}(\epsilon, x) = 1] =1$.
\end{definition}
In particular, a one-sided tester cannot reject a function $x$, unless, based on the queries it made it can infer that $x \not \in \Pi$. Qualitatively, the difference between one-sided and two-sided testers might appear to be minor, but in fact, there are strong separations between achievable outcomes for the two kinds of testers, see, e.g., the reference by Roughgarden, 2021 \cite{tim_BWCA}.

Another distinction is to consider adaptive and non-adaptive testers. Letting $b_k$ denote the $k$-th queried made by the tester, the way an adaptive tester $\mathcal{A}$ picks its $k$-th query can be viewed as a randomized mapping from $\left[(b_1, x(b_1)), \ldots, (b_{k-1}, x(b_{k-1})) \right]$ to $S$. On the other hand, if $\mathcal{A}$ is non-adaptive, then the choice of $b_k$ isn't allowed to depend on the answer to the queries $b_1, \ldots, b_{k-1}$. So, we can view a non-adaptive tester as performing all its queries at once, then receiving $x(b_k)$ for each of them, and finally choosing to accept $x$ or not.

\section{Main Results}
\label{sec:main_results}

\subsection{Lower bounds for coordinate-wise monotonicity}
We begin our study by considering the case of binary bids and binary allocations, where we show a lower bound of $\wt{\Omeg}(n / \epsilon)$ against two-sided adaptive testers. To give better intuition for the reader as to why testing coordinate-wise monotonocity might be harder (as in, requiring more queries) than might be expected from standard property testing results, we first consider the simpler case of an $\Omeg(n)$ lower bound for one-sided non-adaptive testers.

\begin{definition}[$i$-twin]
    Let $\vec{b} \in \{0,1\}^n$ and let $i \in [n]$. The $i$-twin of $\vec{b}$, denoted $\vec{b^{(i)}}$, is the unique element of $\{0,1\}^n$ that differs from $\vec{b}$ only in the $i$-th coordinate, i.e., it has that coordinate flipped.

    Given $i \in [n]$, we define $\Twin(i) \triangleq \{ (\vec{b},\Vec{b^{(i)}}) \;: \; \vec{b}_i = 0\}$.
    So $\Twin(i)$ is the set edge of the
hypercube in the $i$-th direction, where the first element of the edge is always the one with a $i$-th bit
of 0. Finally, given $S \subseteq [n]$, we define $\Twin(S)\triangleq\bigcup\limits_{i \in S}{ \Twin(i) }$.
\end{definition}

\begin{definition}[$\Viol(i)$]
    Let $\vec{x}:\{0,1\}^n \to \{0,1\}^n$ and $i \in [n]$. We define
    \[
    \Viol(i)_{\vec{x}} \triangleq \left\{ (\vec{b},\vec{b^{(i)}}) \;: \;  (\vec{b},\vec{b^{(i)}}) \in  \Twin(i) \wedge \vec{x}(\vec{b})_i > \vec{x}(\vec{b^{(i)}})_i \right\}
    \,.
    \]

    Given $S \subseteq [n]$, we define $\Viol(S)_{\vec{x}} \triangleq \bigcup\limits_{i \in S}{ \Viol(i) }$.
\end{definition}
When it is clear from context, we will drop the subscript $\vec{x}$ from $\Viol(S)_{\vec{x}}$. We will often refer to an edge $(\vec{b},\vec{b^{(i)}})$ of the hypercube as \emph{violating} if $\vec{b}_i=0$ but $\vec{x}(\vec{b})_i > \vec{x}(\vec{b^{(i)}})_i$, where $i$ is the unique coordinate that $\vec{b}, \vec{b^{(i)}}$ differ on.

Using the definition of coordinate-wise monotonicity, the following lemma follows easily.
\begin{lemma}\label{def : cwm cube}
    Let $\vec{x}: \{0,1\}^n \to \{0,1\}^n$, $S \subseteq [n]$. We say that $\vec{x}$ is $S$-monotone if and only if $\Viol(S) = \emptyset$. $\vec{x}$ is coordinate-wise monotone if and only if it is $[n]$-monotone.
\end{lemma}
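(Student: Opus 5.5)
The statement is essentially a restatement of \Cref{def:dist_cmo} specialized to the Boolean hypercube, $m=d=1$. The only content to check is the equivalence
\[
\vec{x} \text{ coordinate-wise monotone} \iff \Viol([n])_{\vec{x}}=\emptyset .
\]
The first sentence of the lemma, defining $S$-monotone as $\Viol(S)=\emptyset$, is a definition and needs no proof. I will prove the equivalence by unfolding both sides and matching them line by line.

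Fix $i\in[n]$, $\alpha\in\{0,1\}^{i-1}$ and $\beta\in\{0,1\}^{n-i}$. The scalar function $g(b)=\vec{x}(\alpha,b,\beta)_i$ has domain $\{0,1\}$, which has exactly two elements. So $g$ is monotone if and only if $g(0)\le g(1)$: for any comparable pair $b\le b'$, either $b=b'$ (trivial) or $(b,b')=(0,1)$.

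Now set $\vec{b}=(\alpha,0,\beta)$. Then $\vec{b}_i=0$, the $i$-twin is $\vec{b^{(i)}}=(\alpha,1,\beta)$, and $(\vec{b},\vec{b^{(i)}})\in\Twin(i)$. Conversely, every element of $\Twin(i)$ arises this way from a unique pair $(\alpha,\beta)$. Under this bijection, $g$ fails to be monotone exactly when $\vec{x}(\vec{b})_i>\vec{x}(\vec{b^{(i)}})_i$, which is exactly when $(\vec{b},\vec{b^{(i)}})\in\Viol(i)$.

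Quantifying over all $\alpha,\beta$ gives: every line in direction $i$ is monotone if and only if $\Viol(i)=\emptyset$. Taking the union over $i\in[n]$, $\vec{x}$ is coordinate-wise monotone if and only if $\Viol([n])=\bigcup_i\Viol(i)=\emptyset$, i.e., if and only if $\vec{x}$ is $[n]$-monotone. The same argument shows more generally that $S$-monotonicity is equivalent to monotonicity of all lines in the directions $i\in S$.

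There is no real obstacle. The only point needing care is the bijection between lines $(\alpha,i,\beta)$ and edges in $\Twin(i)$. It relies on the convention that the first endpoint of each edge has $i$-th bit $0$, so each line is counted once and with the correct orientation.
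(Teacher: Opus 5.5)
Your proof is correct and takes the same approach as the paper, which gives no written proof and only says the lemma follows directly from the definition of coordinate-wise monotonicity. You make that explicit by identifying each $i$-line $(\alpha,i,\beta)$ with the edge $((\alpha,0,\beta),(\alpha,1,\beta))\in\Twin(i)$, and by noting that a two-point line is monotone exactly when this edge is not in $\Viol(i)$.
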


We use the distance $d_{\CMo}$ from \Cref{def:dist_cmo}; a change at one input counts once even if several output coordinates change there.

\subsubsection{One-sided non-adaptive lower bound.\\}

As a warm-up, we first present our $\Omeg(n)$ lower bound against one-sided, non-adaptive testers.
\begin{theorem}\label{theorem: first lower bound}
    Let $0<\epsilon \leq 1/2$ and $\vec{x}:\{0,1\}^n \to \{0,1\}^n$. Let $\mathcal{A}$ be a one-sided non-adaptive tester that rejects $\vec{x}$ with probability at least $2/3$ when $d_\CMo(\vec{x}) \geq \epsilon$. Then, $\mathcal{A}$ must have query complexity $\Omega(n)$.
\end{theorem}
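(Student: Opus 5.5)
We follow the classical one-sided lower-bound template: construct a single function $\vec{x}:\{0,1\}^n\to\{0,1\}^n$ with $d_\CMo(\vec{x})\ge 1/2$, and show that any query set of size $o(n)$ is consistent with some coordinate-wise monotone function. A one-sided tester may reject only when its answers are inconsistent with every member of the property. Since $\mathcal{A}$ is non-adaptive, we may treat its query set $Q$ as fixed, drawn from a distribution independent of $\vec{x}$. If every $Q$ with $|Q|=o(n)$ is consistent with some coordinate-wise monotone $\vec{y}$, then $\mathcal{A}$ never rejects on such $Q$. That contradicts the requirement of rejecting with probability $2/3$.

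\textbf{Construction.} Take $\vec{x}(\vec{b})_i = 1-\vec{b}_i$ for all $i$, i.e.\ $\vec{x}(\vec{b})=\overline{\vec{b}}$. Every edge of the cube is then violating: for $(\vec{b},\vec{b^{(i)}})\in\Twin(i)$ we have $\vec{x}(\vec{b})_i=1>0=\vec{x}(\vec{b^{(i)}})_i$.

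\textbf{Distance.} We show $d_\CMo(\vec{x})\ge 1/2$. Let $\vec{y}$ be coordinate-wise monotone and let $C=\{\vec{b}:\vec{y}(\vec{b})\ne\vec{x}(\vec{b})\}$. Fix a perfect matching of the cube along direction $1$. For each edge $(\vec{b},\vec{b^{(1)}})$ with $\vec{b}_1=0$, monotonicity of $\vec{y}$ forces $\vec{y}(\vec{b})_1\le\vec{y}(\vec{b^{(1)}})_1$. This is incompatible with $\vec{y}(\vec{b})_1=1$ and $\vec{y}(\vec{b^{(1)}})_1=0$ holding together. Hence at least one endpoint of each of the $2^{n-1}$ edges lies in $C$, so $|C|\ge 2^{n-1}$ and $d(\vec{x},\vec{y})\ge 1/2\ge\epsilon$.

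\textbf{Consistency of small query sets.} This is the main step. Let $Q$ be a set of queries with $|Q|<n$. We must exhibit a coordinate-wise monotone $\vec{y}$ agreeing with $\vec{x}$ on $Q$.

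By the informal key lemma, it suffices to find $\vec{y}$ that agrees with $\vec{x}$ on $Q$ and has no violating edge; by \Cref{def : cwm cube} such a $\vec{y}$ is coordinate-wise monotone. We define $\vec{y}$ coordinate by coordinate, setting $\vec{y}(\vec{b})_i$ to depend on the $(n-1)$-bit string $\vec{b}_{-i}$ and on $\vec{b}_i$, so that each coordinate can be handled independently.

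Call coordinate $i$ \emph{free} if no two points of $Q$ differ only in coordinate $i$. The points of $Q$ span at most $|Q|-1<n$ edges of the cube among themselves. This needs care: it holds as a forest bound only if those edges form no cycle, so instead we should use the plain count. There are at most $\binom{|Q|}{2}$ pairs, so this crude route is too weak.

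A cleaner route avoids counting edges altogether. For each coordinate $i$ and each line $\{\vec{b},\vec{b^{(i)}}\}$, we need $\vec{y}$ to be monotone in $\vec{b}_i$ along that line. A line containing at most one queried point can always be completed: if the queried value is $1$ at $\vec{b}_i=0$, set the value at $\vec{b}_i=1$ to $1$; symmetrically otherwise. Lines containing both endpoints are exactly violating edges that we cannot repair.

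So a one-sided tester rejects only if $Q$ contains both endpoints of some edge. This means the construction above fails, since two queries already suffice to catch a violation. We must therefore sparsify $\vec{x}$. Set $\vec{x}(\vec{b})_i=1$ iff $\vec{b}_i=0$ \emph{and} $\vec{b}$ has, say, weight parity condition chosen so that violating edges are spread across directions. The cleaner standard choice is to let $\vec{x}(\vec{b})_i=1-\vec{b}_i$ only for $i=i(\vec{b})$, a direction selected pseudo-randomly per point. The intended consequence is that hitting a violating edge requires querying both endpoints of an edge in a specific direction $i$ among $n$. Then, averaging over a random relabeling of coordinates (Yao's principle), $o(n)$ queries catch such an edge with probability $o(1)$, while the distance remains $\Omega(1)$ by the matching argument applied to the violating edges.

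The obstacle I expect is balancing the two requirements: keeping $\Omega(2^n)$ disjoint violating edges to get distance $\ge 1/2$, while ensuring that a random permutation of coordinates makes any $q$ fixed query pairs land on a violating edge with probability $O(q/n)$.
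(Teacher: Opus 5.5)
\textbf{There is a genuine gap: the proposal never fixes a working hard family, and it is missing the counting lemma that closes the argument.}

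Your reduction is sound as far as it goes. A one-sided tester can reject only if its query set $Q$ contains both endpoints of a violating edge; this is the paper's \Cref{lemma:key}, and your per-line completion is a valid substitute for the paper's choice $\vec{y}(\vec{b})=\vec{b}$ on unqueried points. You also correctly move from a single far function to a distribution over instances. However, the instance you sketch does not work. If the flipped coordinate $i(\vec{b})$ genuinely varies from point to point, the distance does not stay constant. An $i$-edge $(\vec{b},\vec{b^{(i)}})$ is violating only if \emph{both} endpoints selected $i$. The violating edges therefore form a matching of roughly $2^{n-1}/n$ edges, and the distance is only $\Theta(1/n)$. The choice that works is the degenerate one $i(\cdot)\equiv i$ with $i$ uniform in $[n]$. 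This is the paper's family: $\vec{x_i}(\vec{b})_j=\vec{b}_j$ for $j\neq i$ and $\vec{x_i}(\vec{b})_i=1-\vec{b}_i$. Then $\Viol([n])_{\vec{x_i}}=\Twin(i)$, and your own matching argument in direction $i$ gives $d_\CMo(\vec{x_i})\geq 1/2\geq\epsilon$.

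The second missing piece is the counting step, where you went astray. You are right that the number of cube edges with both endpoints in $Q$ can exceed $|Q|-1$; it can be $\Theta(|Q|\log|Q|)$. But the relevant quantity is the number of distinct \emph{directions} $i$ for which $Q$ contains an edge of $\Twin(i)$, and that is at most $|Q|-1$. This is the Chockler--Gutfreund lemma the paper quotes. One way to see it: every such unit vector $e_i$ lies in the $\mathbb{F}_2$-span of $\{\vec{q}\oplus\vec{q_0}:\vec{q}\in Q\}$, which has dimension at most $|Q|-1$. With this lemma the argument closes. Since $\mathcal{A}$ is non-adaptive, $Q=\mathcal{A}(r)$ depends only on its randomness $r$, so $2/3\leq\Pr_{r,i}[\,\mathcal{A}(r)\text{ contains an edge of }\Twin(i)\,]$. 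Fixing a good $r^*$, $\mathcal{A}(r^*)$ spans at least $2n/3$ directions, hence $|\mathcal{A}(r^*)|\geq 2n/3+1$. As written, your claim that $o(n)$ queries catch a violating edge with probability $o(1)$ is unsupported, and you leave the distance-versus-detectability trade-off as an open obstacle. So the proposal does not yet prove the theorem.
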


The following lemma which shows that coordinate-wise monotonicity is a very local property, will be key in our lower bound. This is in contrast with monotonicity testing, where a one-sided algorithm can reject  $f : \{0,1\}^n \rightarrow \{0,1\}$ as long as it finds $ a \prec b$ with $f(a) > f(b)$, even if $a,b$ are not neighbors. 

\begin{lemma}\label{lemma:key}
A one-sided tester for coordinate-wise monotonicity, can't reject $\vec{x}$ unless it queried the two endpoints of an edge in $\Viol([n])$.
\end{lemma}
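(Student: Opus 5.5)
The plan is to show that whenever the set $Q$ of queried points contains no violating edge (no pair $(\vec b,\vec{b^{(i)}})\in\Viol([n])$ with both endpoints in $Q$), the observed answers can be extended to a coordinate-wise monotone function $\vec y:\{0,1\}^n\to\{0,1\}^n$ that agrees with $\vec x$ on $Q$. Since a one-sided tester must accept every coordinate-wise monotone function with probability $1$, it must accept on any run producing the transcript $\{(\vec b,\vec x(\vec b)):\vec b\in Q\}$. Otherwise, running it on $\vec y$ with the same random coins would produce the same queries and answers (this holds even for adaptive testers, by induction on the queries) and hence a rejection of $\vec y$, a contradiction.

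The extension is built one output coordinate at a time. The key observation, via \Cref{def : cwm cube}, is that coordinate-wise monotonicity on the hypercube only constrains edges: $\vec y$ is coordinate-wise monotone iff for each $i$ and each edge $(\vec b,\vec{b^{(i)}})\in\Twin(i)$ we have $\vec y(\vec b)_i\le \vec y(\vec{b^{(i)}})_i$. The constraint on output coordinate $i$ involves only edges in direction $i$, and these edges partition the cube into disjoint pairs. So the constraints decouple completely across output coordinates and across edges. For each $i$ and each edge $e=(\vec b,\vec{b^{(i)}})\in\Twin(i)$, define the $i$-th output bits on $e$ as follows.
\begin{itemize}
\item If both endpoints are in $Q$, keep $\vec x$'s values. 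By hypothesis $e\notin\Viol(i)$, so these values are already monotone.
\item If exactly one endpoint is in $Q$, keep its value $v$ and give the other endpoint the value $v$ as well. This is trivially monotone.
\item If neither endpoint is in $Q$, set both values to $0$.
\end{itemize}
Because each point $\vec c$ lies on exactly one direction-$i$ edge, this defines $\vec y(\vec c)_i$ for every $\vec c$ and every $i$ unambiguously, and $\vec y(\vec c)=\vec x(\vec c)$ for every $\vec c\in Q$.

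There is no real obstacle here. The step that needs care is justifying that the decoupling is valid: one must check that the definition of coordinate-wise monotonicity for $m=1$ reduces exactly to these per-edge conditions, which is immediate since each line in direction $i$ has only the two points $0,1$. The other point requiring care is phrasing the one-sided argument correctly for adaptive testers. Fixing the random string makes the tester deterministic, and its interaction with $\vec x$ and with $\vec y$ is then identical, because the two functions agree on every point queried.
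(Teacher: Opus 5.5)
Your proof is correct and takes the same approach as the paper: extend the observed answers to a coordinate-wise monotone $\vec y$ that agrees with $\vec x$ on the queried set, then conclude by one-sidedness. The only difference is the choice of extension. The paper sets $\vec y(\vec b)=\vec b$ at every unqueried point, so an unqueried endpoint always carries the extreme value $0$ or $1$ in the relevant coordinate, and then checks four cases per edge. You instead copy the queried endpoint's value across each direction-$i$ edge, one coordinate at a time. Your fixed-coins argument for adaptive testers is a useful addition, since the paper's proof is phrased only for non-adaptive testers.
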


\begin{proof}{Proof}
Consider a one-sided non-adaptive tester $\mathcal{A}$ for coordinate-wise monotonicity. Say that $\mathcal{A}$ made $q$ queries to an unknown function $\vec{x}:\{0,1\}^n \to \{0,1\}^n$, and has learnt a set of pairs $(\vec{b_1}, \vec{x}(\vec{b_1})), \ldots (\vec{b_q}, \vec{x}(\vec{b _q}))$. Furthermore, assume that the algorithm hasn't queried both end points of any edges in $\Viol([n])_{\vec{x}}$. Then, we can define function $\vec{y}:  \{0,1\}^n \to \{0,1\}^n$ with the following properties: 
\begin{itemize}
    \item $d_\CMo(\vec{y})=0$.
    \item For all $j \in [q]$, we have $\vec{y}(\vec{b_j})=\vec{x}(\vec{b_j})$. That is, $\vec{y}$ is consistent with $\vec{x}$ on points queried by $\mathcal{A}$.
\end{itemize}

Assuming such a function exists, since $\mathcal{A}$ is one-sided, it cannot reject $\vec{x}$ since it could be coordinate-wise monotone, which proves the lemma. We define $\vec{y}$ on all unqueried points $\vec{b}$ as $\vec{y}(\vec{b})=\vec {b}$ and for all queried points $\vec{b_j}$, we define $\vec{y}(\vec{b_j})=\vec{x}(\vec{b_j})$.

We now show that $\vec{y}$ is coordinate-wise monotone. Let $i \in [n]$. And let $(\vec{b}, \vec{b'}) \in \Twin(i)$ be an edge in the $i$-th direction, we claim we always have $\vec{y}(\vec{b}) \leq \vec{y}(\vec{b'})$. We consider several cases:
\begin{itemize}
    \item Case 1: Both $\vec{b}, \vec{b'}$ were queried by $\mathcal{A}$. Then, $\vec y(\vec{b})=\vec{x}(\vec{b})$ and $\vec y (\vec{b}')=\vec{x}(\vec{b}')$, but  $\mathcal{A}$ didn't query any violating edge, so we must have $\vec y(\vec{b})_i=\vec{x}(\vec{b})_i \leq \vec{x} (\vec{b}')_i=\vec y(\vec{b}')_i$. So there is no violation along that edge.
    \item Case 2: Neither $\vec{b}, \vec{b'}$ were queried by $\mathcal{A}$. Then, we defined $\vec y(\vec{b})_i=\vec{b}_i=0$ and $\vec y(\vec{b'})_i=\vec{b'}_i=1$, so there is no violation along the edge.
    \item Case 3: $\vec{b}$ was queried by $\mathcal{A}$ but not $\vec{b'}$. Then, we  have $\vec{y}(\vec{b'})_i=\vec{b'}_i=1$. So, no matter what $\vec{x}(\vec{b})$ is, we must have $\vec{y}(\vec{b})_i= \vec{x}(\vec{b})_i \leq 1 = \vec{y}(\vec{b'})_i$. Hence, there is no violation along that edge.
    \item Case 4: $\vec{b'}$ was queried by $\mathcal{A}$ but not $\vec{b}$. Then, we have $\vec{y}(\vec{b})_i=\vec{b}_i=0$. So, no matter what $\vec{x}(\vec{b'})$ is, we must have $\vec{y}(\vec{b'})_i= \vec{x} (\vec{b'})_i \geq 0 = \vec{y}(\vec{b})_i$. Hence, there is no violation along that edge.
\end{itemize}
Thus, $\vec{y}$ is coordinate-wise monotone, as claimed.\Halmos
\end{proof}

We will also use the following lemma by \citet{CHOCKLER2004301}.
\begin{lemma}[\cite{CHOCKLER2004301}]\label{lemma: set size to edges }
    Given a set $Q \subseteq \{0,1\}^n$, we have 
    $$ |\{i \;: \; (\vec{b},\vec{b'}) \in Q \times Q \text{ and } (\vec{b},\vec{b'}) \in \Twin(i) \}| \leq |Q|-1.$$

    In particular, given a set $Q$ of $q$ points from the hypercube, there are at most $q-1$ different $i$ such that $\vec b$ and $\vec{b^{(i)}}$ were queried.
\end{lemma}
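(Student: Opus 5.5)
The plan is to recast the statement as a statement about a graph on the vertex set $Q$ and then use the fact that a forest on $|Q|$ vertices has at most $|Q|-1$ edges. Let $I$ be the set of directions $i$ for which some pair $(\vec{b},\vec{b'})\in Q\times Q$ lies in $\Twin(i)$. For each $i\in I$ I will fix one such witnessing edge $e_i=\{\vec{b},\vec{b^{(i)}}\}$ with both endpoints in $Q$. Let $G$ be the graph with vertex set $Q$ and edge set $\{e_i : i\in I\}$. Edges in different directions are distinct, so $G$ has exactly $|I|$ edges, and each edge of $G$ carries a label, namely its direction, with distinct edges carrying distinct labels.

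The key step is to show that $G$ is acyclic. Suppose $G$ contained a cycle $\vec{c_0},\vec{c_1},\ldots,\vec{c_k}=\vec{c_0}$ with $k\ge 1$ edges, using the edges $e_{i_1},\ldots,e_{i_k}$ with pairwise distinct labels $i_1,\ldots,i_k$. Moving along $e_{i_j}$ flips exactly coordinate $i_j$, so $\vec{c_j}=\vec{c_{j-1}}\oplus \mathbf{e}_{i_j}$, where $\mathbf{e}_{i}$ denotes the $i$-th standard basis vector over $\mathbb{F}_2$. Going around the cycle gives
\[
\vec{0}=\vec{c_k}\oplus\vec{c_0}=\mathbf{e}_{i_1}\oplus\cdots\oplus\mathbf{e}_{i_k}.
\]
Since the $i_j$ are distinct and $k\ge 1$, the right-hand side has weight $k\neq 0$, which is a contradiction. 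Hence $G$ is a forest on $|Q|$ vertices, so $|I|\le |Q|-1$. The ``in particular'' clause is the same statement with $|Q|=q$.

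I do not expect any step to be a real obstacle. The only point that needs care is choosing a single representative edge per direction. Without that choice, the graph of all hypercube edges inside $Q$ can contain cycles, for example a $4$-cycle in which each of two directions is used twice. Once one edge per direction is selected, the parity argument above rules out every cycle.
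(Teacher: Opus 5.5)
Your proof is correct. The paper gives no proof of its own; it imports the lemma from \citet{CHOCKLER2004301}, so there is no in-paper argument to match, and yours works as a self-contained justification. Choosing one witnessing twin pair per direction makes the edge labels of $G$ pairwise distinct. The $\mathbb{F}_2$ parity computation then rules out every cycle, so $|I|=|E(G)|\le |Q|-1$.

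One small caveat: both the statement and the forest bound need $Q\neq\emptyset$, since for $Q=\emptyset$ the left-hand side is $0>-1$. This is a defect of the statement as written and is harmless, because in the application the query set is nonempty.

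The other standard route is an induction on $|Q|$, in the style of the paper's own proof of \Cref{lemma: points to edges}. For $|Q|\ge 2$, pick a coordinate $j$ on which $Q$ is not constant and split $Q=Q_0\cup Q_1$ by that coordinate. A twin pair with one endpoint in each part can only lie in $\Twin(j)$, so at most $(|Q_0|-1)+(|Q_1|-1)+1=|Q|-1$ directions occur. This induction is marginally more elementary and reuses the coordinate-splitting decomposition the paper already employs. Your spanning-forest argument instead isolates the one property actually used, namely that the direction vectors are linearly independent over $\mathbb{F}_2$. It therefore carries over unchanged to any such set of ``directions.''
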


We can now prove the lower bound of \Cref{theorem: first lower bound}.

\begin{proof}{Proof}
Let $\mathcal{A}$ be a one-sided non-adaptive tester for coordinate-wise monotonicity with query complexity $q$. Given unknown function $\vec{x}:\{0,1\}^n \to \{0,1\}^n$, $\mathcal{A}$ must reject $\vec{x}$ with probability at least $2/3$ if $\vec{x}$ is $\epsilon$ far from coordinate-wise monotone and if $\vec{x}$ is coordinate-wise monotone it must always accept. We will show $q \geq \frac{2}{3}n$. 

For every $i \in [n]$, we define function $\vec{x_i}$ as follow: For all $\vec{b} \in \{0,1\}^n$:
\begin{itemize}
    \item For all $j$, $\vec{x_i}(\vec{b})_j = \vec{b}_j$. 
    \item $\vec{x_i}(\vec{b})_i = 1-\vec{b}_i$. 
\end{itemize}
We have the following easy to see properties about $\vec{x_i}$: All violations are along edges in the $i$-th direction ($\Viol([n])_{\vec{x_i}} = \Viol(i)_{\vec{x_i}}$), and $\vec{x_{i}}$ is at distance $1/2$ from coordinate-wise monotonicity.

By \Cref{lemma:key}, we know $\mathcal{A}$ can reject $\vec{x_{i}}$ if and only if it queries the two end points of some edge in $\Viol([n])_{\vec{x_i}}$. Since $\mathcal{A}$ is non-adaptive, we can assume it samples its randomness, makes all the queries and finally decides whether or not to reject the function.  So, let $r$ denote the randomness used by $\mathcal{A}$, and $ \mathcal{A}(r)$ be the set of queries made by the algorithm when the randomness is $r$. We have:
\begin{align*}
    \frac{2}{3} \leq \Pr_{r,i}[ \mathcal{A} \text{ rejects } \vec{x_{i}} \text{ when the randomness is } r]
    &\leq  \Pr_{r,i}[ \left(\mathcal{A}(r) \times \mathcal{A}(r)\right) \cap \Viol([n])_{\vec{x_{i}}} \neq \emptyset ] \\
     = \Pr_{r,i}[ \left(\mathcal{A}(r) \times \mathcal{A}(r)\right)  \cap \Viol(i)_{\vec{x_{i}}} \neq \emptyset ]
    &= \Pr_{r,i}[ \left(\mathcal{A}(r) \times \mathcal{A}(r)\right)  \cap \Twin(i) \neq \emptyset ]\,.
\end{align*}

In particular, there is a specific fixing of the randomness to $r^*$, such that:
\begin{align*}
    \frac{2}{3} &\leq \Pr_{i}[ \left(\mathcal{A}(r^*) \times \mathcal{A}(r^*)\right)  \cap \Twin(i) \neq \emptyset ] \,.
\end{align*}
    
That is, $\frac{2n}{3} \leq |\{i \;: \; \left(\mathcal{A}(r^*) \times \mathcal{A}(r^*)\right) \cap \Twin(i) \neq \emptyset\}|$. So by \Cref{lemma: set size to edges }, we must have $\frac{2}{3}n \leq |\mathcal{A}(r^*)| - 1$. That is, $\mathcal{A}$ must make at least $\frac{2n}{3}$ queries, as claimed.\Halmos
\end{proof}

\subsubsection{Two-sided adaptive lower bound\\}

We defer to \Cref{app:twosidedadaptive} the proof that, perhaps surprisingly, one cannot do better even with a two-sided adaptive algorithm. Specifically, the theorem whose proof we defer is the following:
\begin{theorem}\label{theorem: second lower bound}
Let $2^{-n}\leq \epsilon \leq 0.01$ and $\vec{x}:\{0,1\}^n \to \{0,1\}^n$. Let $\mathcal{A}$ be a two-sided adaptive tester that rejects $\vec{x}$ with probability at least $2/3$ when $d_\CMo(\vec{x}) \geq \epsilon$ and accepts $\vec{x}$ with probability at least $2/3$ when $d_\CMo(\vec{x})=0$. Then, $\mathcal{A}$ must have query complexity $\Tilde{\Omeg}(n/\epsilon)$.
\end{theorem}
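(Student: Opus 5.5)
We use Yao's principle together with the Chockler--Gutfreund style process argument sketched in the introduction. We define two randomized processes $P_0$ and $P_1$. Each answers the adaptive queries of a deterministic $q$-query algorithm while lazily building a function $\vec{x}:\{0,1\}^n\to\{0,1\}^n$. The goal is to show that the transcript distributions of $P_0$ and $P_1$ have total variation distance $o(1)$ whenever $q=o(n/(\epsilon\log(n/\epsilon)))$. Since $P_1$ always outputs a coordinate-wise monotone function and $P_0$ outputs, with probability $1-o(1)$, a function $\epsilon$-far from it, no two-sided adaptive tester with that many queries can succeed.

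\textbf{The two processes.} $P_1$ answers every query $\vec b$ with $\vec x(\vec b)=\vec b$, the identity map, which is coordinate-wise monotone.

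$P_0$ first samples a random set $E$ of edges of the hypercube. It chooses a random direction set $S\subseteq[n]$ of size about $n/2$ and includes each edge of $\Twin(S)$ independently with probability $p\approx c\epsilon/n$. Every point of the hypercube lies on $|S|\approx n/2$ edges of $\Twin(S)$, so the expected number of points incident to $E$ is about $\min(1,c\epsilon)2^n$. On each edge $(\vec b,\vec b^{(i)})\in E$, $P_0$ swaps the $i$-th output bit, setting $\vec x(\vec b)_i=1$ and $\vec x(\vec b^{(i)})_i=0$. Every other output coordinate agrees with the identity.

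A simpler alternative is to let $E$ be a random matching: sample each edge and then discard edges sharing an endpoint. Either version guarantees that each endpoint has a modified output bit, and that a single queried point's answer looks like a uniformly random local perturbation.

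\textbf{Far-ness of $P_0$.} Take a random matching $M\subseteq E$ of size at least $c'\epsilon 2^n$; with $p$ chosen as above, a Chernoff bound gives this with high probability. Each edge of $M$ is a violation in $\Viol([n])$. Because $M$ is a matching, repairing these violations requires changing at least one endpoint of each edge, so $d_\CMo(\vec x)\ge |M|/2^n\ge\epsilon$.

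The assumption $\epsilon\ge 2^{-n}$ ensures at least one violated edge is present. When $\epsilon$ is tiny, we instead take $E$ to be exactly $\lceil\epsilon 2^n\rceil$ random disjoint edges.

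\textbf{Indistinguishability and the main obstacle.} The delicate point is that $P_0$ must be indistinguishable from $P_1$ even on single points. Under the bit-swap construction above, a single query of $P_0$ already deviates from the identity with probability about $\epsilon$, which would yield only an $\Omega(1/\epsilon)$ bound.

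To fix this, we randomize $P_1$ as well. $P_1$ uses the same random structure as $P_0$, but on each edge of $E$ it assigns values that keep the edge monotone, setting $\vec x(\vec b)_i=0$ and $\vec x(\vec b^{(i)})_i=1$. At the same time it sets the marginals at each individual endpoint to be identical to those in $P_0$. The concrete way to achieve this is to have each edge of $E$ flip the $i$-th bit at \emph{both} endpoints in $P_0$, giving $(1,0)$, and flip the $i$-th bit at exactly one independently chosen endpoint in $P_1$, giving $(0,0)$ or $(1,1)$. The marginals at a single point then still differ, so the construction must be further adjusted: choose independent uniformly random bits for non-edge coordinates in both processes, so that single-point answers are uniform and identically distributed. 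This adjustment is where the main care is needed.

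Once single-point answers are identical, the two transcripts differ only if the algorithm queries both endpoints of some edge in $E$. By coupling, the transcripts coincide until that event. Extending \Cref{lemma: set size to edges } to the adaptive setting as in \Cref{lemma: points to edges}, $q$ queries contain at most $q\log_2 q$ edges with both endpoints queried. Each such edge lies in $E$ with probability at most about $\epsilon/n$, conditionally on the transcript so far, because membership is decided independently of the answers until both endpoints are seen. A union bound gives a distinguishing probability of at most $O(q\log q\cdot\epsilon/n)$. Forcing this to be at least $1/3$ yields $q=\wt{\Omeg}(n/\epsilon)$.
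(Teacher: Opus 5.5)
Your overall framework is the right one and matches the paper's: two lazily built processes, a coupling that breaks only when both endpoints of a planted edge are queried, and the $q\log_2 q$ bound on fully queried edges. But there is a genuine gap at exactly the step you flag. You never exhibit processes with both required properties: $P_1$ always outputs a coordinate-wise monotone function, and answers are independent of the planted set $E$ until both endpoints of some $E$-edge are queried.

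Your union bound ("each fully queried edge lies in $E$ with probability about $\epsilon/n$ conditionally on the transcript") needs the second property, and none of your candidate constructions has it.
\begin{itemize}
\item In the $(1,0)$ versus $(0,0)/(1,1)$ construction, the lower endpoint of an $E$-edge has $i$-th bit $1$ with probability $1$ under $P_0$ but only $1/2$ under $P_1$. Off $E$ the bit is deterministically $b_i$. So single answers leak information about $E$.
\item Your proposed repair, drawing \emph{independent} uniform bits off $E$, cannot work. If the two $i$-th bits of an $i$-edge $(\vec{b},\vec{b}^{(i)})\notin E$ are independent, that edge is violating with probability $1/4$. Then $P_1$ outputs a function that is $\Omega(1)$-far from coordinate-wise monotone, and the YES side collapses.
\item If you randomize only the other output coordinates, the discrepancy in coordinate $i$ on $E$-edges is untouched.
\end{itemize}
In fact, as long as the $i$-th bit off $E$ is the deterministic value $b_i$, hiding $E$ from single queries forces $E$-edges to carry $(0,1)$ too, so nothing can be planted. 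Randomness off $E$ is unavoidable, and monotonicity of $P_1$ then forces that randomness to be correlated across each edge.

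The missing idea is to \emph{correlate} the bits along each edge rather than fix them.
\begin{itemize}
\item YES distribution: for every $i$ and every $(\vec{b},\vec{b}^{(i)})\in\Twin(i)$, set $\vec{x}(\vec{b})_i=\vec{x}(\vec{b}^{(i)})_i$ equal to one shared uniform bit. Every $i$-line is then constant, so the function is trivially coordinate-wise monotone.
\item NO distribution: do the same, except that on planted edges the two endpoint bits are drawn independently. Each planted edge is then violating with probability $1/4$.
\end{itemize}
With this choice every single answer is uniform in both processes, and bits from different edge--coordinate pairs are independent. The processes differ only in the joint law of the two endpoints of a planted edge, so your coupling and union bound become valid.

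The paper also plants all edges in one uniformly random direction $j$, at rate $80\epsilon$. Since $\Twin(j)$ is a perfect matching, $d_{\CMo}(\vec{x})\ge|\Viol(j)|/2^n$ is immediate, and the at most $q\log_2 q$ fully queried edges are handled by Markov's inequality over $j$. Your spread-out variant (rate $\Theta(\epsilon/n)$ in many directions) would also go through with the shared-bit construction. In that variant, violated edges in different directions can share endpoints, so you still need the extra matching argument you sketched.
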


\subsection{Testing coordinate-wise monotonicity (upper bounds)}
In this section, we consider the task of testing coordinate-wise monotonicity when bids can be integers between $0$ and $m$, and allocations can be between $0$ and $d$. Our goal is to prove the following theorem:
\begin{theorem}\label{theorem: grid tester}
 Given a function $\vec{x}:[0:m]^n \to [0:d]^n$ and $0<\epsilon\leq1$, there exists a one-sided non-adaptive tester with query complexity $\Omic(n \log(m)/\epsilon)$ to test if $\vec{x}$ is coordinate-wise monotone or $d_\CMo(\vec{x}) \geq \epsilon$.
\end{theorem}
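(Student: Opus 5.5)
Our approach is to reduce coordinate-wise monotonicity on the hypergrid to scalar monotonicity on one-dimensional lines, and then plug in the one-dimensional pair distribution of \citet{oded_improved}. For $i\in[n]$, $\alpha\in[0:m]^{i-1}$ and $\beta\in[0:m]^{n-i}$, let $x_{\alpha,i,\beta}:[0:m]\to[0:d]$ be the line $b\mapsto\vec{x}(\alpha,b,\beta)_i$. By \Cref{def:dist_cmo}, $\vec{x}$ is coordinate-wise monotone if and only if every such line is monotone. The plan has three steps:
\begin{enumerate}
\item a distance decomposition, showing $\E_{i,\alpha,\beta}[d_\Mo(x_{\alpha,i,\beta})]\ge d_\CMo(\vec{x})/n$ for a uniformly random line;
\item a one-dimensional rejection bound from \citet{oded_improved};
\item amplification by repetition.
\end{enumerate}

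\textbf{Step 1 (main obstacle).} The key inequality to establish is
\[
\sum_{i,\alpha,\beta} d_\Mo(x_{\alpha,i,\beta})\,(m+1)\;\ge\; d_\CMo(\vec{x})\,(m+1)^n .
\]
The difficulty is that the different output coordinates live on different families of lines. The saving observation is that the monotonicity constraints for output coordinate $i$ involve only lines in direction $i$. So I would fix the monotone repairs independently. For each line $(\alpha,i,\beta)$, choose a closest monotone $y_{\alpha,i,\beta}:[0:m]\to[0:d]$. Then define $\vec{y}$ coordinatewise by $\vec{y}(\alpha,b,\beta)_i \triangleq y_{\alpha,i,\beta}(b)$.

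This $\vec{y}$ is coordinate-wise monotone by construction, since each of its lines is monotone. A point $\vec{b}$ can have $\vec{y}(\vec{b})\ne\vec{x}(\vec{b})$ only if some coordinate $i$ was altered there, i.e.\ only if $\vec{b}$ is a changed point on the line through $\vec{b}$ in direction $i$. A union bound over $i$ gives
\[
(m+1)^n d(\vec{x},\vec{y}) \;\le\; \sum_i\sum_{\alpha,\beta}(m+1)\,d_\Mo(x_{\alpha,i,\beta}).
\]
There are $n(m+1)^{n-1}$ lines in total. Dividing shows that the average line distance is at least $d_\CMo(\vec{x})/n\ge\epsilon/n$. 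The same argument handles a non-uniform, distance-weighted version if needed.

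\textbf{Step 2.} I will invoke the one-dimensional result of \citet{oded_improved}: there is a distribution $\mathcal{D}$ on pairs $p<l$ in $[0:m]$ such that every $f:[0:m]\to[0:d]$ satisfies
\[
\Pr_{(p,l)\sim\mathcal{D}}[f(p)>f(l)]\;\ge\; c\,d_\Mo(f)/\log m .
\]
This is the dyadic/spanner-style tester for line monotonicity. Its guarantee is linear in the distance, and that linearity is what lets us average over lines. The basic test is:
\begin{itemize}
\item sample $i$, $\alpha$, $\beta$ uniformly;
\item draw $(p,l)\sim\mathcal{D}$;
\item query $\vec{x}(\alpha,p,\beta)$ and $\vec{x}(\alpha,l,\beta)$;
\item reject if the $i$-th coordinates violate, i.e.\ if $\vec{x}(\alpha,p,\beta)_i>\vec{x}(\alpha,l,\beta)_i$.
\end{itemize}
Each run uses two queries. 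By Step 1 and linearity of expectation, a single run rejects with probability at least $c\epsilon/(n\log m)$.

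\textbf{Step 3.} Repeat the basic test $\Theta(n\log(m)/\epsilon)$ times independently, with all queries chosen in advance. Then the rejection probability is at least $2/3$ whenever $d_\CMo(\vec{x})\ge\epsilon$, and the tester is non-adaptive. It is one-sided: it rejects only on an explicit pair $p<l$ on some line with decreasing $i$-th output, and such a pair cannot occur when $\vec{x}$ is coordinate-wise monotone.
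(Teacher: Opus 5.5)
Your proposal is correct and matches the paper's proof: the same line-sampling tester with the pair distribution of \citet{oded_improved}, repeated $\Theta(n\log(m)/\epsilon)$ times, resting on the bound $\E[d_\Mo(x_{\alpha,i,\beta})]\ge\epsilon/n$. The only difference is presentational. You repair all lines at once and apply a union bound, while the paper applies the line-sorting operators $L^1,\dots,L^n$ in sequence and uses the triangle inequality; both produce the same repaired function and the same bound.
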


The above result shows that one cannot hope to get a tester with sub-linear dependency on $n$. For coordinate-wise monotonicity the dimensions are independent, which isn't the case for monotonicity. Understanding the interactions between the dimensions of the hypercube played a key role in \citet{best_mono_tester}, who gave the state of the art monotonicity tester. Their algorithm is a one-sided non-adaptive tester for the monotonicity of functions $f:\{0,1\}^n \to \{0,1\}$ using $\Tilde{O}(\sqrt{n}/\epsilon^2)$ oracle queries. 

To this end, we first define "lines" for the hypergrid of our vector-valued functions.

\begin{definition}[{Lines of $\vec{x}:[0:m]^n \to [0:d]^n$}] 
    Let $d \geq 1$, and consider a function \\ $\vec{x}:[0:m]^n \to [0:d]^n$. Let $i \in [n]$, $\alpha \in [0:m]^{i-1},  \beta \in [0:m]^{n-i}$, the line $x_{\alpha,i, \beta}: [0:m] \to [0:d]$ is defined as \[x_{\alpha,i, \beta}(b) = \vec{x}(\alpha, b, \beta)_i \,.\]
    The set of $i$-lines of $\vec{x}$ is the set $\{ x_{\alpha,i, \beta} \; : \; \alpha \in [0:m]^{i-1},  \beta \in [0:m]^{n-i} \} $. 
\end{definition}

The below corollary follows from the definition of coordinate-wise monotonicity.
\begin{corollary}
Consider a function $\vec{x}: [0:m]^n \to [0:d]^n$. For a set $S \subseteq [n]$, we say that $x$ is $S$-monotone if for all $i \in S$, all the $i$-lines of $\vec{x}$ are monotone.  $\vec{x}$ is coordinate-wise monotone if and only if it is $[n]$-monotone.
\end{corollary}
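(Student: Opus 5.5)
This is an ``if and only if'' between two finite conditions, so I will unfold both definitions and match them term by term. The corollary's first sentence defines $S$-monotonicity, so the content to prove is that $\vec{x}$ is coordinate-wise monotone (in the sense of \Cref{def:dist_cmo}) exactly when it is $[n]$-monotone.

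First I rewrite the scalar maps in \Cref{def:dist_cmo} as lines. For every $i\in[n]$, $\alpha\in[0:m]^{i-1}$ and $\beta\in[0:m]^{n-i}$, the map $b\mapsto \vec{x}(\alpha,b,\beta)_i$ is by definition the line $x_{\alpha,i,\beta}$. Its domain $[0:m]$ and codomain $[0:d]$ are totally ordered, so ``monotone'' here means the same thing in both notions.

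\textbf{Forward direction.} Suppose $\vec{x}$ is coordinate-wise monotone. Then for each $i\in[n]$, every choice of $(\alpha,\beta)$ gives a monotone map $b\mapsto\vec{x}(\alpha,b,\beta)_i$. Hence every $i$-line $x_{\alpha,i,\beta}$ is monotone. This holds for all $i\in[n]$, so $\vec{x}$ is $[n]$-monotone.

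\textbf{Reverse direction.} Suppose $\vec{x}$ is $[n]$-monotone, and fix any $i$, $\alpha$, $\beta$. The line $x_{\alpha,i,\beta}$ belongs to the set of $i$-lines of $\vec{x}$, so it is monotone by assumption. This is exactly the requirement of \Cref{def:dist_cmo} for that triple. Since the triple was arbitrary, $\vec{x}$ is coordinate-wise monotone.

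There is no genuine obstacle here. The only point needing care is checking that the quantifier ranges coincide: the set of $i$-lines is indexed by exactly the same $\alpha\in[0:m]^{i-1}$ and $\beta\in[0:m]^{n-i}$ as in \Cref{def:dist_cmo}, and $i$ ranges over all of $[n]$ in both. With that checked, the two conditions are literally the same quantified statement.
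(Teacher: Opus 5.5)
Your proof is correct and takes the same approach as the paper. The paper gives no separate argument and just says the corollary follows from the definition of coordinate-wise monotonicity, since the map $b\mapsto \vec{x}(\alpha,b,\beta)_i$ is by definition the line $x_{\alpha,i,\beta}$, with identical index ranges; your two directions spell out that observation.
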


We will use the following result from \citet{oded_improved} to define a distribution on pairs such that, if a function is far from monotone, a sampled pair detects a violation with sufficiently high probability.

\begin{lemma}[\cite{oded_improved}]\label{lemma: pair distribution }
    Let $m\geq1$, let $\Xi$ be a finite totally ordered set, and consider a function $f:[0:m]\to\Xi$. If $(p,l)$ is sampled according to $\calD$ of \Cref{eq:d}, then
    \[
        \Pr[f(p)>_\Xi f(l)]\geq \frac{d_\Mo(f)}{C\log(m)}
    \]
    for an absolute constant $C$. To define $\calD$, let $\nu_2(r)$ be the largest integer $t$ such that $2^t$ divides the positive integer $r$. Then $\calD$ is the uniform distribution over
    \begin{equation}
    \label{eq:d}
    \left\{(p,l)\in[0:m]^2: 0<l-p\leq 2^{\max\{\nu_2(p+1),\nu_2(l+1)\}}\right\}.
    \end{equation}
\end{lemma}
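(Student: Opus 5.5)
The plan is the standard \emph{2-hop spanner} argument for monotonicity on a line. It has three ingredients: a large matching of violated pairs, a 2-hop path inside the support of $\calD$ for every comparable pair, and a counting bound $|\mathrm{supp}(\calD)|=\Omic(m\log m)$.

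First, I would relate $d_\Mo(f)$ to a matching of violations. Write $N=m+1$ and $\eps=d_\Mo(f)$. Take a maximal set $M$ of pairwise-disjoint violated pairs $(a,b)$, meaning $a<b$ and $f(a)>_\Xi f(b)$. The points not covered by $M$ contain no violated pair, so $f$ restricted to them is monotone. It can then be extended to a monotone function on all of $[0:m]$ by changing only the $2|M|$ covered points: set each covered point to the value at the nearest uncovered point to its left, or to the minimum of $\Xi$ if there is none. Hence $\eps N\le 2|M|$, i.e.\ $|M|\ge \eps N/2$.

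Second, the key structural step. For every pair $a<b$ I want some $c\in[a:b]$ such that each of $(a,c)$ and $(c,b)$ either lies in $\mathrm{supp}(\calD)$ or is degenerate (i.e.\ $c=a$ or $c=b$).
\begin{itemize}
\item \emph{Choice of $c$.} Let $c\in[a:b]$ maximize $\nu_2(c+1)$, and set $t=\nu_2(c+1)$. The interval $[a+1:b+1]$ contains exactly one multiple of $2^t$; otherwise it would contain a multiple of $2^{t+1}$, contradicting maximality. Hence $c-a<2^t$ and $b-c<2^t$, with only minor boundary care at the ends.
\item \emph{Membership in the support.} Both $(a,c)$ and $(c,b)$ have $\max\{\nu_2(\cdot+1)\}\ge t$ through the endpoint $c$. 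So each of them satisfies $0<l-p\le 2^{\max\{\nu_2(p+1),\nu_2(l+1)\}}$ whenever it is non-degenerate.
\item \emph{Transferring the violation.} If $f(a)>_\Xi f(b)$, then transitivity of the total order gives $f(a)>_\Xi f(c)$ or $f(c)>_\Xi f(b)$. So at least one of the two hops is a violated pair in $\mathrm{supp}(\calD)$. If $c\in\{a,b\}$, the pair $(a,b)$ itself lies in the support and is violated.
\end{itemize}
I expect this step to be the main obstacle, because the exact off-by-one in the dyadic bound must match the definition of $\calD$ precisely.

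Third, I count. Each violated pair $(a,b)\in M$ is assigned a violated support pair sharing an endpoint with it, namely $a$ or $b$. A support pair $(p,l)$ touches at most two pairs of the matching $M$, since the pairs of $M$ are disjoint. Therefore at least $|M|/2\ge \eps N/4$ distinct support pairs are violated.

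For the support size, each pair is charged to whichever endpoint $r$ attains the maximal $\nu_2(r+1)$, and that endpoint has at most $2\cdot 2^{\nu_2(r+1)}$ partners. This gives
\[
|\mathrm{supp}(\calD)|\le \sum_{r=0}^{m} 2\cdot 2^{\nu_2(r+1)}\le 2\sum_{t=0}^{\lfloor\log_2 N\rfloor} 2^t\cdot \frac{N}{2^t}=\Omic(N\log m),
\]
using that at most $N/2^t$ values $r\in[0:m]$ have $2^t\mid r+1$.

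Since $\calD$ is uniform on its support, dividing the number of violated support pairs by the support size yields $\Pr[f(p)>_\Xi f(l)]\ge \eps/(C\log m)$ for an absolute constant $C$. The case $m=1$ is trivial and can be absorbed into the constant.
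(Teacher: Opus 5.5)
Your proof is correct. The maximal matching gives $|M|\ge d_\Mo(f)(m+1)/2$. The point $c\in[a:b]$ maximizing $\nu_2(c+1)$ satisfies $c-a\le 2^{t}-1$ and $b-c\le 2^{t}-1$, so the off-by-one you worried about is not an issue and there is even slack. The charging argument gives $|\mathrm{supp}(\calD)|\le 2(m+1)(\lfloor\log_2(m+1)\rfloor+1)$.

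The paper gives no proof of this lemma; it simply imports it from \cite{oded_improved}. Your argument is essentially the standard one behind that citation: a $2$-hop spanner of the line of size $\Omic(m\log m)$ combined with a matching of violated pairs, with each detected spanner edge shared by at most two matching pairs. Your write-up therefore makes the lemma self-contained at no extra cost.

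One correction concerns the case $m=1$, which cannot be ``absorbed into the constant''. Since $\log 1=0$, for $f(0)>_\Xi f(1)$ the right-hand side divides $d_\Mo(f)=1/2$ by $C\log 1=0$, and no choice of $C$ fixes this. What your proof actually establishes is
$\Pr[f(p)>_\Xi f(l)]\ge d_\Mo(f)/\bigl(8(\lfloor\log_2(m+1)\rfloor+1)\bigr)$.
This is the correct reading of the lemma: interpret $\log(m)$ as $\log(m+1)$, or restrict to $m\ge 2$, where your bound is $\ge d_\Mo(f)/(C\log m)$ for an absolute $C$. This defect lies in the statement as written, not in your argument.
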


Based on this lemma, the tester we consider is the following one: \\
\begin{algorithm}[H]
\caption{One-sided non-adaptive coordinate-wise monotonicity tester for ${\Vec{x}:[0:m]^n \to [0:d]^n}$. }
\hspace*{\algorithmicindent} \textbf{Input: }{Distance parameter $0<\epsilon\leq1$ and query access to $\Vec{x} : [0:m]^n \to [0:d]^n$.}
\begin{algorithmic}[1]
 \For{$j = 1$ to $\left\lceil{{2 C n \log(m) } / {\epsilon}}\right\rceil$}
 \State Sample $i \in [n], \alpha \in [0:m]^{i-1}, \beta \in [0:m]^{n-i}$ uniformly at random. 
 \State Select pair $(p,l)$ according to the distribution of \Cref{eq:d}
 \If{$\Vec{x}(\alpha,p,\beta)_i > \Vec{x}(\alpha,l,\beta)_i $}\State Reject $\Vec{x}$. \EndIf
\EndFor
\State Accept $\Vec{x}$. 
\end{algorithmic}
\end{algorithm}

To show our tester works, we will need the following crucial lemma, the proof of which we defer to \Cref{app:proof lemma line dist}, and after which we can complete our proof.

\begin{lemma} \label{lemma: line dist}
    Let $\vec{x}:[0:m]^n \to [0:d]^n$ be $\epsilon$-far from being coordinate-wise monotone. If we sample a line $x_{\alpha, i, \beta}$ uniformly at random, its expected distance to monotonicity is at least $\eps/n$. That is, $\E_{\alpha, i, \beta} [d_\Mo( x_{\alpha, i, \beta})] \geq \epsilon / n$.
\end{lemma}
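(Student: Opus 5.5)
We prove the contrapositive by construction: from optimal monotone repairs of each line we build a coordinate-wise monotone $\vec{y}$ whose distance to $\vec{x}$ is at most $n$ times the average line distance. The structural fact that makes this possible is that coordinate-wise monotonicity only constrains the $i$-th output coordinate along the $i$-th direction. For each fixed $i$, the $i$-lines $x_{\alpha,i,\beta}$ partition $[0:m]^n$, and the constraints on output coordinate $i$ involve only these lines. Moreover, the constraints for coordinate $i$ and coordinate $j\neq i$ never interact, because they concern different output coordinates. So we can repair each output coordinate independently.

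\textbf{Construction.} For each $i\in[n]$ and each $(\alpha,\beta)$, let $y_{\alpha,i,\beta}:[0:m]\to[0:d]$ be a monotone function with $d(x_{\alpha,i,\beta},y_{\alpha,i,\beta})=d_\Mo(x_{\alpha,i,\beta})$; it exists by \Cref{def:dist_mono} since the domain is finite. Define $\vec{y}$ by $\vec{y}(\alpha,b,\beta)_i \triangleq y_{\alpha,i,\beta}(b)$ for every $i$. This is well defined because every point $\vec{b}\in[0:m]^n$ and every index $i$ determine a unique $i$-line through $\vec{b}$, namely $\alpha=\vec{b}_{<i}$ and $\beta=\vec{b}_{>i}$. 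Every $i$-line of $\vec{y}$ is, by construction, the monotone function $y_{\alpha,i,\beta}$. Hence $\vec{y}$ is $[n]$-monotone, and by the corollary above it is coordinate-wise monotone.

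\textbf{Counting.} A point $\vec{b}$ satisfies $\vec{y}(\vec{b})\neq\vec{x}(\vec{b})$ only if some coordinate $i$ differs there, i.e.\ only if $y_{\alpha,i,\beta}(\vec{b}_i)\neq x_{\alpha,i,\beta}(\vec{b}_i)$ on the $i$-line through $\vec{b}$. A union bound over $i$ gives
\[
\epsilon \le d(\vec{x},\vec{y}) \le \sum_{i=1}^n \frac{1}{(m+1)^n}\sum_{\alpha,\beta}(m+1)\,d_\Mo(x_{\alpha,i,\beta}).
\]
Since there are $(m+1)^{n-1}$ lines for each $i$, the inner term equals $\E_{\alpha,\beta}[d_\Mo(x_{\alpha,i,\beta})]$ for a uniformly random $i$-line. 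Therefore $\epsilon\le \sum_i \E_{\alpha,\beta}[d_\Mo(x_{\alpha,i,\beta})] = n\,\E_{i,\alpha,\beta}[d_\Mo(x_{\alpha,i,\beta})]$, where $i$ is uniform on $[n]$. This is exactly the claimed bound.

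\textbf{Remaining checks.} Two points need care. First, the sampling in the lemma (and in the tester) must mean $i$ uniform and then $(\alpha,\beta)$ uniform; this coincides with a uniform line, since every $i$ has the same number of lines. Second, the union bound is where the factor $n$ is lost, because the distance counts a point once even if several coordinates change there. That loss is acceptable, since the lemma only claims $\epsilon/n$.
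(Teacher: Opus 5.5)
Your proof is correct and is essentially the paper's argument. The paper obtains the same coordinate-wise monotone function by applying the line-sorting operators $L^1,\ldots,L^n$ in sequence, and in place of your union bound over output coordinates it uses the triangle inequality $d(\vec{x},\vec{x_n})\le\sum_{i} d(\vec{x_i},\vec{x_{i+1}})$, plus a lemma that $L^{i+1}$ acts on coordinate $i+1$ of $\vec{x_i}$ exactly as it would on $\vec{x}$; your one-shot construction just skips that bookkeeping about intermediate functions.
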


\begin{proof}{Proof of \Cref{theorem: grid tester}}
    The query complexity is $\Omic(n \log(m)/\epsilon)$, and a coordinate-wise monotone function is never rejected.

    Now let $\vec{x}$ be $\epsilon$-far from coordinate-wise monotonicity. By \Cref{lemma: pair distribution }, after sampling $i,\alpha,\beta$, the pair $(p,l)$ witnesses
    \[
        \Vec{x}(\alpha,p,\beta)_i=x_{\alpha,i,\beta}(p)>x_{\alpha,i,\beta}(l)=\Vec{x}(\alpha,l,\beta)_i
    \]
    with probability at least $d_\Mo(x_{\alpha,i,\beta})/(C\log(m))$. Since the line is uniform, \Cref{lemma: line dist} implies that the rejection probability in each iteration is at least
    \[
        \E_{\alpha,i,\beta}\left[\frac{d_\Mo(x_{\alpha,i,\beta})}{C\log(m)}\right]
        \geq \frac{\epsilon}{Cn\log(m)}.
    \]
    Thus the probability of failing to reject is at most
    \[
        \left(1-\frac{\epsilon}{Cn\log(m)}\right)^{2Cn\log(m)/\epsilon}
        \leq e^{-2}<\frac13.\Halmos
    \]
\end{proof}

\subsection{Testing mechanisms}
\label{subsec:main:full}

We now consider a richer model of functions, $\vec a: [0:m]^n \to [0:d]^n \times [0,dm]^n$. That is, we view the function $\vec{a}$ as a pair of functions $(\vec{x}, \vec{p})$, where $\vec{x}: [0:m]^n \to [0:d]^n$ describes the allocation to each bidder and $\vec p: [0:m]^n \to  [0,dm]^n$ describes the price they each pay. In particular, for $\Vec{b} \in [0:m]^n$, we denote by $\vec{a}(\Vec{b})_j$ the pair $(\vec{x}(\Vec{b})_j, \vec{p}(\Vec{b})_j)$.

As mentioned earlier, in this setting it might not make sense to consider the distance under the uniform distribution, but rather an arbitrary distribution $\mathcal{D}$ over $[0:m]^n$. In particular, it seems unlikely that each bidder has the same distribution of bids. Thus, we will consider that our algorithm has sample access to the distribution $\mathcal{D}$. We will consider a model where we can ask for conditional samples from $\mathcal{D}$, that is we can ask for a sample from $\mathcal{D}$ conditioned on the bids of $n-1$ bidders. %

We begin this section by the definitions needed to understand allocation functions and the correct price associated with them. 

\begin{definition}[Valid price function]
Let $x: [0:m] \to [0:d]$ be a monotone function. Let \[J_x \triangleq \{0 \} \cup \{ b : b \in [m] \text{ and } x(b) \neq x(b-1) \},\] $J_x$ records the bids for which we observe a "jump" in $x(b)$ and $0$.

Let $b \in [0:m]$, writing $J_x = \{t_0, t_1 \ldots, t_k\}$, where $t_i < t_{i+1}$, we denote by $i^b \in [0:k]$ the index of the unique $t_{i^b} \in J_x$ with $x(t_{i^b})=x(b)$. Note that in the above, we always have $t_0=0$.

Then, a valid price function for $x$ is any function $p : [0:m] \to [0,dm]$ such that there exists values $h_1, \ldots, h_k$ where:

\begin{itemize}
    \item For any $i\in [k]$, $t_i-1 \leq h_i \leq t_i$, and 
    \item for all $b \in [0:m]$, we have
\[
p(b) = 0 + \sum_{j=1}^{i^b} h_j \times (x(t_j) - x(t_{j-1}))\,.
\]
\end{itemize}

In particular, given $x$ a valid price function $p$ is uniquely defined by the values $h_i$, since given $x$ we can compute each $h_i$ as \[h_i= \left( p(t_i)- p(t_{i-1}) \right)  / \left(x(t_i) - x(t_{i-1})\right)\,.\] Furthermore, observe that we always have $p(0)=0$ and for any $b \in [0:m]$ we have $p(b)=p(t_{i^b})$, that is $p(b)$ is the same as the price at the smallest $b'$ with $x(b')=x(b)$. 

Given a function $\vec{x}: [0:m]^n \to [0:d]^n$, a valid price function for $\vec{x}$ is any function $ \Vec{p}: [0:m]^n \to [0,dm]^n$, such that for all $i \in [n], \alpha \in [0:m]^{i-1}, \beta \in [0:m]^{n-i}$, we have that
\[ \vec{p}(\alpha, 0, \beta)_i, \vec{p}(\alpha, 1, \beta)_i, \ldots, \vec{p}(\alpha, m, \beta)_i  \]
corresponds to a valid price function for $x_{\alpha, i, \beta}$.
\end{definition}

The above definition is the discrete DSIC characterization under the normalization $p_i(0,\vec b_{-i})=0$. We thus have the following definition of $\DSIC$.

\begin{definition}[$\DSIC$ function]\label{def: DSIC function}
    A function $\vec{a}=(\vec{x}, \vec{p}) : [0:m]^n \to [0:d]^n \times [0,dm]^n$ is said to be $\DSIC$ if and only if :
    \begin{itemize}
        \item $\vec{x}$ is coordinate wise monotone.
        \item $\vec{p}$ is a valid price function for $\vec{x}$.
    \end{itemize}
\end{definition}

Finally, the distance to being $\DSIC$ under a distribution $\mathcal{D}$ is given as follows:

\begin{definition}[Distance to $\DSIC$]
    Let $n \geq 1$. Let $\vec{a} =  (\vec{x},\vec{p}): [0:m]^n \to [0:d]^n \times [0,dm]^n$. Let $\mathcal{D}$ be a distribution over $[0:m]^n$, the distance of $\vec{a}$ to being $\DSIC$ under $\mathcal{D}$, denoted $d_{\DSIC}(\Vec{a})$ can be expressed as
 \[
 d^\mathcal{D}_{\DSIC}(\vec{a}) \triangleq \min_{\vec{y},\vec{q}} \Pr_{\Vec{b} \sim \mathcal{D}}\left[ \vec{x}(\vec{b} ) \neq \vec{y}( \vec{b}) \lor  \vec{p}(\vec{b} ) \neq \vec{q}( \vec{b}) \right]
 \,,
 \]
 where the minimum ranges over all coordinate-wise monotone functions $\vec{y}: [0:m]^n \to [0:d]^n$ and all valid price functions $\vec{q}: [0:m]^n \to [0,dm]^n $ for $\Vec{y}$.
\end{definition}

Note that under the above definition we can have $d^\mathcal{D}_{\DSIC}(\vec{a})=0$, but $\Vec{a}$ not being $\DSIC$. In such a case, the testers we present  might reject $\vec{a}$ even if its distance to $\DSIC$ is $0$ under $\mathcal{D}$. 

The main idea of our testers is to decompose $\vec{a}$ into its "lines", and consider each line as a separate function $a$ to test. 

\begin{definition}[{Lines of $\Vec{a}: [0:m]^n \to [0:d]^n \times [0,dm]^n$}]\label{def: lines with price}
    Consider a function $\vec{a}=(\vec{x}, \vec p): [0:m]^n \to [0:d]^n \times [0,dm]^n$. Let $i \in [n]$, $\alpha \in [0:m]^{i-1},  \beta \in [0:m]^{n-i}$. The line $a_{\alpha,i, \beta}: [0:m] \to [0:d] \times [0,dm]$ is described in two parts as $x_{\alpha,i, \beta} \text{ and }p_{\alpha,i, \beta}$.
    
    In particular for $b \in [0:m]$, we have 
    \[a_{\alpha,i, \beta}(b) = \left(x_{\alpha,i, \beta}(b), p_{\alpha,i, \beta}(b)\right) \,, \]
    where $x_{\alpha,i, \beta}(b)= \vec{x}(\alpha, b, \beta)_i $ and $p_{\alpha,i, \beta}(b)=  \vec{p}(\alpha,b, \beta)_i $. \\
    The set of $i$-lines of $\vec{a}$ is the set $\{ a_{\alpha,i, \beta} \; : \; \alpha \in [0:m]^{i-1},  \beta \in [0:m]^{n-i} \} $. 
\end{definition}

As mentioned before, we will assume that we have conditional sampling access to $\mathcal{D}$. This corresponds to fixing the bids of $n-1$ bidders, and sampling the bid of the remaining bidder. 
\begin{definition}[Line distribution of $\mathcal{D}$]
For $i\in[n]$, $\alpha\in[0:m]^{i-1}$, and $\beta\in[0:m]^{n-i}$, define
\[
\Pr_{L\sim\mathcal{D}_L}[L=(\alpha,i,\beta)]
=\frac1n\sum_{b=0}^m\Pr_{\vec B\sim\mathcal{D}}[\vec B=(\alpha,b,\beta)].
\]
\end{definition}

\begin{definition}[Projection of $\mathcal{D}$ to a line]
    Consider a distribution  $\mathcal{D}$ over $[0:m]^n$ given $i\in [n],\alpha \in [0:m]^{i-1},  \beta \in [0:m]^{n-i}$, we define its projection by $(\alpha, i, \beta)$ to be the distribution $\mathcal{D}_{\alpha,i, \beta}$ over $[0:m]$ defined as: 
    
    \[\Pr_{X \sim \mathcal{D}_{\alpha,i, \beta}}[ X = b]= \frac{ \Pr_{Y \sim \mathcal{D}}[ Y =  \alpha , b , \beta ] }{\sum_{b'=0}^m \Pr_{Y' \sim \mathcal{D}}[ Y' =  \alpha , b' , \beta]} \,. \]

    In the degenerate case where $\sum_{b'=0}^m \Pr_{Y' \sim \mathcal{D}}[ Y' =  \alpha , b' , \beta ]=0$, we simply define $\mathcal{D}_{\alpha,i, \beta}$ to be the uniform distribution over $[0:m]$. However, note that our algorithm will never ask for samples from $\mathcal{D}_{\alpha, i, \beta}$ in this case. 
\end{definition}

Hence, we will assume that given $(\alpha, i, \beta)$, our algorithm can get samples from $\mathcal{D}_{\alpha, i, \beta}$.

Using the above definitions, we defer the analysis of the tester to \Cref{app:earliertester} for brevity of exposition. The interested reader can refer to that section for the full details.
The broad idea is to use the following tester (\Cref{algo:mid_tester}) that we devise for a "line" of the hypergrid that operates using binary search to identify potential points of disagreement.  \\
\begin{algorithm}[H] 
\caption{One-sided adaptive $\DSIC$ tester for $a:[0:m] \to [0:d] \times [0,dm]$ under distribution $\mathcal{D}'$. }\label{algo:mid_tester}
\hspace*{\algorithmicindent} \textbf{Input: }{Distance parameter $0<\epsilon\leq1$, sample access to a distribution $\mathcal{D}'$ over $[0:m]$ and query access to $a=(x,p): [0:m] \to [0:d] \times [0,dm]$.}
 \begin{algorithmic}[1]
\State $S, H \gets \emptyset$. 
 \For{ $\ell=0$ to $d$}
 \State Perform binary search to try to find the smallest $b \in [0:m]$ with $x(b)=\ell$. 
 \If{such a value $b$ is found} 
 \State Record it as $v_\ell$ and add $\ell$ to $S$. 
 \EndIf
\EndFor
\State Write $S=\{s_0, \ldots, s_k\}$, such that $s_{i} < s_{i+1}$. And let $J_a \triangleq \{ t_i : t_i \triangleq v_{s_i}\}$. 
\If{$t_0 \neq 0$}
\State reject $a$.
\EndIf
\algstore{myalg}
 \end{algorithmic}
\end{algorithm}\clearpage

\begin{algorithm}[H]                     
\begin{algorithmic}[1] 
\algrestore{myalg}
 \For{ $i=1$ to $k$}
 \If{$t_{i-1} \geq t_{i}$} \State Reject $ a$. \EndIf 
 \State Let $h_i=\left(p(t_i)-p(t_{i-1})\right) / \left(x(t_{i})-x(t_{i-1})\right)$ and add $h_i$ to $H$. 
 \If{$t_{i} < h_i \text{ or } h_i < t_i-1 $}
 \State Reject $ a$.
 \EndIf
\EndFor
\State Consider the (monotone) function $y$ defined as:
\[
y(b) = x(t^b) \text{ where $t^b$ is the largest $t \in J_a$ with $t^b \leq b$.}
\]
\State For any $b \in [0:m]$, denote by $i^b$ the index of the unique $t \in J_a$ such that $x(t_{i^b})=y(b)$. Consider the valid price function $p^y$ for $y$ defined as:
\[
p^y(b) = 0 + \sum_{j=1}^{i^b} h_j \times (x(t_j) - x(t_{j-1})).
\]
\For{ $j=1$ to $\lceil2/\epsilon\rceil$ }
Sample a point $b$ from $\mathcal{D}'$ and query $a(b)$. 
    \If{$(x(b),p(b)) \neq (y(b), p^y(b))$}
    \State Reject $a$.
    \EndIf
\EndFor
\State Accept $a$. 
 \end{algorithmic}
\end{algorithm}

Then, the formal theorem that we will prove in \Cref{app:earliertester} follows:
\begin{theorem}
\label{thm:ub}
Given an allocation function $\vec{a}:[0:m]^n \to [0:d]^n \times [0,dm]^n$, conditional sampling access to a distribution $\mathcal{D}$ over $[0:m]^n$, and $0<\epsilon\leq1$, there exists a one-sided tester with query complexity
\[
\Omic\left(
\frac{n}{\epsilon}\log\left(\frac{2n}{\epsilon}\right)
\min\{m,d\log(m)\}
+\frac{n}{\epsilon}\log^2\left(\frac{2n}{\epsilon}\right)
\right)
\]
to test if $\vec{a}$ is $\DSIC$ or $d^\mathcal{D}_\DSIC(\vec a) \geq \epsilon$. The tester uses $\Omic((n/\epsilon)\log^2(2n/\epsilon))$ conditional samples and runs in time $\Tilde{\Omic}(n^2d/\epsilon)$ under exact unit-cost arithmetic, charging $\Tilde{\Omic}(n)$ time per query or sample.
\end{theorem}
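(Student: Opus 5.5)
The plan has three steps: first show that Algorithm~\ref{algo:mid_tester} is a one-sided line tester, then prove a line-distance lemma for the weighted measure $\mathcal{D}_L$, and finally combine the two with Goldreich's work-investment strategy. The query and time accounting is routine once these are in place.

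\textbf{Line tester.} Fix a line $a=(x,p)$ and a distribution $\mathcal{D}'$ on $[0:m]$.

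\emph{Completeness.} If $a$ is $\DSIC$, then $x$ is monotone, so binary search for the smallest $b$ with $x(b)=\ell$ is exact for every attained value $\ell$. Hence the recovered $J_a$ equals $J_x$, with $t_0=0$ and the $t_i$ increasing. The computed $h_i$ are exactly the valid heights, so they satisfy $t_i-1\le h_i\le t_i$. Then $y=x$ and $p^y=p$, and no check ever fails.

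\emph{Soundness.} Suppose no deterministic check rejects. Then $y$ is monotone and $p^y$ is a valid price function for $y$, so $(y,p^y)$ is $\DSIC$. Therefore
\[
\Pr_{b\sim\mathcal{D}'}\bigl[a(b)\neq(y(b),p^y(b))\bigr]\ \ge\ d^{\mathcal{D}'}_{\DSIC}(a).
\]
With $\lceil 2/\epsilon'\rceil$ samples, the tester rejects with probability at least $1-e^{-2}$ whenever $d^{\mathcal{D}'}_{\DSIC}(a)\ge\epsilon'$.

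\emph{Queries.} The jump-finding phase costs $\Omic(\min\{m,d\log m\})$ queries: either query all $m+1$ points, or run $d+1$ binary searches. I will stipulate that the search phase is done this way, taking the cheaper option.

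\textbf{Line-distance lemma.} The key step is the distributional analogue of \Cref{lemma: line dist}:
\[
\E_{L\sim\mathcal{D}_L}\bigl[d^{\mathcal{D}_L}_{\DSIC}(a_L)\bigr]\ \ge\ \frac{d^{\mathcal{D}}_{\DSIC}(\vec a)}{n},
\qquad\text{where } d^{\mathcal{D}_L}_{\DSIC}(a_L)\text{ is measured under }\mathcal{D}_{L}.
\]
To prove it, for each $i$-line choose the nearest $\DSIC$ line under $\mathcal{D}_{\alpha,i,\beta}$ and write it into coordinate $i$ of a candidate $(\vec y,\vec q)$. Different $i$ write different output coordinates, and the $\DSIC$ conditions are exactly per-line conditions, so the patched function is $\DSIC$. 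It disagrees with $\vec a$ at $\vec b$ only if some line through $\vec b$ was modified at $\vec b$. A union bound over the $n$ directions gives
\[
d^{\mathcal{D}}_{\DSIC}(\vec a)\ \le\ \sum_{i=1}^n \E_{\vec b\sim\mathcal D}\bigl[\text{line } i \text{ through } \vec b \text{ modified at } \vec b\bigr]
\ =\ n\,\E_{L\sim\mathcal{D}_L}\bigl[d^{\mathcal{D}_L}_{\DSIC}(a_L)\bigr].
\]
The last equality holds because the weight of line $(\alpha,i,\beta)$ under $\mathcal{D}$ is exactly $n\Pr[L=(\alpha,i,\beta)]$.

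\textbf{Work investment.} This step assembles the tester. A line $L\sim\mathcal D_L$ is sampled by drawing $\vec b\sim\mathcal D$ and a uniform $i$. Set $\mu=\E[d(a_L)]\ge\epsilon/n$. By Goldreich's lemma, there is some $j\in[\lceil\log_2(2n/\epsilon)\rceil]$ with
\[
\Pr_L\bigl[d(a_L)\ge 2^{-j}\bigr]\ \ge\ \frac{2^{j}\epsilon}{4n\,j^2}.
\]
The tester works as follows. For every $j$, draw
\[
\Theta\!\left(\frac{n\log(2n/\epsilon)}{2^{j}\epsilon}\right)
\]
lines, and run the line tester on each with $\epsilon'=2^{-j}$. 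Any rejection rejects $\vec a$.

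\emph{Completeness} is inherited from the line tester, since on a $\DSIC$ input every line is $\DSIC$ and is never rejected.

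\emph{Soundness.} At the good $j$, with constant probability some sampled line has $d(a_L)\ge 2^{-j}$, and that line is then rejected with constant probability.

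\emph{Sample count.} Each line tester run costs $\Omic(2^j/\epsilon)$ conditional samples per line and $\Omic(1)$ samples to pick the line. Summing over $j$ gives $\Omic((n/\epsilon)\log^2(2n/\epsilon))$ conditional samples.

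\emph{Query count.} The number of lines summed over $j$ is $\Omic((n/\epsilon)\log(2n/\epsilon))$. Multiplying by the jump-search cost gives the first term of the query bound; the sampling queries give the second term.

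\emph{Running time.} Computing $y$ and $p^y$ at a point takes $\Omic(\log d)$ time after $\Omic(d)$ preprocessing per line. Each query or sample is charged $\Tilde\Omic(n)$. Together these give the $\Tilde\Omic(n^2d/\epsilon)$ time bound.

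\textbf{Main obstacle.} I expect the line-distance lemma under $\mathcal{D}$ to be the hardest step. Two things need care there. First, the line weights must match, including lines of zero mass, which contribute nothing. Second, one must confirm that patching one coordinate's lines never breaks validity of prices in another coordinate. This holds because validity is a per-$(i,\text{line})$ condition on the pair $(x_i,p_i)$ only.
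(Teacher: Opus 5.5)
Your architecture matches the paper's. You use Algorithm~\ref{algo:mid_tester} as a one-sided line tester, prove the line-distance lemma under $\mathcal{D}_L$ by per-line patching plus the identity $\Pr[L=(\alpha,i,\beta)]\cdot\Pr_{\mathcal{D}_{\alpha,i,\beta}}[b]=\frac{1}{n}\Pr_{\mathcal{D}}[(\alpha,b,\beta)]$, and then apply work investment.

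The soundness step does not follow as written, because the work-investment bound you quote does not match the number of lines you draw. You use the version in which some $j\le\ell=\lceil\log_2(2n/\epsilon)\rceil$ has $\Pr_L[d(a_L)\ge 2^{-j}]\ge 2^{j}\epsilon/(4nj^2)$. At level $j$ you draw only $\Theta(n\ell/(2^{j}\epsilon))$ lines. The expected number of $2^{-j}$-far lines among them is therefore only $\Theta(\ell/j^2)$. When the guaranteed $j$ is close to $\ell$, this is $\Theta(1/\ell)$ rather than a constant, and the lemma gives you no control over which $j$ it is. So ``with constant probability some sampled line has $d(a_L)\ge 2^{-j}$'' is not justified.

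The fix is to use the version the paper uses (Lemma~\ref{lemma:work-invest}). With $\epsilon'=\epsilon/(2n)$ and $\ell=\lceil\log_2(2/\epsilon')\rceil$, some $j\in[\ell]$ has $\Pr[X>2^{-j}]\ge 2^{j}\epsilon'/(4\ell)$. Then $\lceil 12\ell/(2^{j}\epsilon')\rceil$ lines at level $j$ miss every far line with probability at most $e^{-3}$. This matches your line counts and yields exactly the stated bounds. The $j^2$ version would instead need $\Theta(nj^2/(2^{j}\epsilon))$ lines at level $j$. The sampling-phase queries and samples would then sum to $\Theta((n/\epsilon)\log^3(2n/\epsilon))$, exceeding the claimed $\log^2$ term.

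Two smaller corrections:
\begin{enumerate}
\item A call of the line tester with parameter $2^{-j}$ uses $2^{j+1}=O(2^{j})$ conditional samples, not $O(2^{j}/\epsilon)$. Your total of $O((n/\epsilon)\log^2(2n/\epsilon))$ samples is only right with the former.
\item ``Constant probability times constant probability'' must be made at least $2/3$. The paper runs $T$ twice on each sampled line, so a far line is rejected with probability at least $8/9$. It also uses the constant $12$ so that a far line is drawn with probability at least $9/10$. You need to fix the constants hidden in your $\Theta(\cdot)$ counts in the same way.
\end{enumerate}

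Everything else agrees with the paper: completeness and soundness of the line tester, the $\min\{m,d\log m\}$ jump-search cost, the patching and weight-matching argument, and the running-time accounting.
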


The tester we consider to prove the above theorem is the following one: \\
\begin{algorithm}[H] 
\caption{One-sided $\DSIC$ tester for ${\Vec{a}:[0:m]^n \to [0:d]^n \times [0,dm]^n}$ under distribution $\mathcal{D}$.}
\label{algo:newpricetester}
 \hspace*{\algorithmicindent} \textbf{Input: }{Distance parameter $0<\epsilon\leq1$, query access to $\Vec{a} : [0:m]^n \to [0:d]^n \times [0,dm]^n$ and conditional sampling access to $\mathcal{D}$. }
\begin{algorithmic}[1]
\State Let $T$ be  \Cref{algo:mid_tester}. Let $\epsilon'=\epsilon/2n$. 
\State Let $\ell= \left\lceil \log_2 \frac{2}{\epsilon'} \right\rceil$.
 \For{$j=1$ to $\ell$}
\State Repeat $\left\lceil 12\ell/(2^j\epsilon')\right\rceil$ times:
\Indent
 \State Sample $\Vec{b} \sim \mathcal{D}$. Sample $i \in [n]$ uniformly at random. 
\State  Fix $\alpha=\Vec{b}_1, \ldots, \vec{b}_{i-1}$ and $\beta=\Vec{b}_{i+1}, \ldots, \vec{b}_{n}$. 
\State Run $T$ twice independently on $a_{\alpha, i, \beta}$, with sample access to the distribution $\mathcal{D}_{\alpha,i,\beta}$ and distance parameter \hspace*{\algorithmicindent}\hspace*{\algorithmicindent}$2^{-j}$.
\If{$T$ rejected $a_{\alpha, i, \beta}$ at least once}\State Reject $\Vec{a}$.
\EndIf
\EndIndent
\EndFor
\State Accept $\vec{a}$.
\end{algorithmic}
\end{algorithm}

\subsection{A tester with better query complexity}
In this section, we present the key ideas of our $\DSIC$ tester for functions $a : [0:m] \to [0:d] \times [0:dm]$ with better query complexity than \Cref{algo:mid_tester} whose query complexity and running time are given in \Cref{lemma: line DSIC tester richer alloc}. The query complexity of the tester is $\Omic(\log(m)/\epsilon)$. However, this tester has a worse running time than the tester presented before, in particular it runs in time $d^{\Omic(m)}$ up to polynomial factors. Using this tester, we can achieve a $\DSIC$ tester with query complexity $\Tilde{\Omic}(n/\epsilon)$ for functions $\vec{a} : [0:m]^n \to [0:d]^n \times [0:dm]^n$.

More specifically, given $a : [0:m] \to [0:d] \times [0:dm]$, we can assume $m+1$ is a power of $2$. Indeed, if $m+1$ isn't a power of two, we can proceed as follows, let $
    \overline{m}+1
    :=
    2^{\left\lceil \log_2(m+1)\right\rceil}$.
Then
    $m \leq \overline{m} < 2m+1.$
So, we extend $a \colon [0:m] \to [0:d]\times[0:dm]$
to a function
\[
    \overline{a}\colon [0:\overline{m}]
    \to [0:d]\times[0:d\overline{m}]
\]
by setting
\[
    \overline{a}(b)
    :=
    \begin{cases}
        a(b), & 0\leq b\leq m,\\
        a(m), & m<b\leq \overline{m}.
    \end{cases}
\]
Likewise, we extend the distribution $\mathcal D$ over $[0:m]$ to a
distribution $\overline{\mathcal D}$ over $[0:\overline{m}]$ by setting
\[
    \overline{\mathcal D}(b)
    :=
    \begin{cases}
        \mathcal D(b), & 0\leq b\leq m,\\
        0,             & m<b\leq \overline{m}.
    \end{cases}
\]

It's easy to see that $d_{\mathrm{DSIC}}^{\overline{\mathcal D}}(\overline{a})
    =
    d_{\mathrm{DSIC}}^{\mathcal D}(a)$ and we can easily simulate query access to $\overline{a}$ by querying $a$.  Hence, from now on, we assume $m+1=2^k$ for some $k \geq 1$ and consider a binary tree $T_a$ of depth $k$.

Each node in the tree corresponds to an interval $[s:t]$, where the leafs of the tree correspond to the intervals $[0:0], \ldots, [m:m]$. For simplicity we might sometimes refer to a leaf labeled with $[b:b]$ as simply being labeled with $b$. 

For each node $u$ we keep track of a set $P_u$ of $\DSIC$ function. Given a node $u$ with parent $v$, the set $P_u$ depends on the sets $P_v$ as well as a constant number of extra queries made to $a$. Hence for any leaf $[b:b]$, one can compute $P_{[b:b]}$ using $\Omic(k)=\Omic(\log(m))$ queries.

In particular, the tree $T_a$ is formally defined as follow: 

\begin{definition}[Tree $T_a$]\label{def: tree T_a}
Assume $a:[0:m] \to [0:d] \times [0,dm]$, where $m+1=2^k$. $T_a$ is a binary decision tree of height $k+1$.

The root of $T_a$ is labeled with $[0:m]$. Then if a node is labeled with an interval $[s:t]$, its left child is labeled with $[s:\frac{s+t-1}{2}]$ (the first half of the interval) and the right child with with $[\frac{s+t+1}{2}:t]$ (the second half). 

 For each internal node $u$ we will consider two sets of functions $P_u, T_u$ these are sets of $\DSIC$ functions. The leaves of the tree are only labeled with an interval $P_u$. The intervals are defined in a top-down fashion as follow:
\begin{itemize}
    \item The root is labeled with \[\mathcal{P}_{[0:m]} = \{ f \;|\; f \text{ is $\DSIC$, } f(0)=a(0) \text{ and } f(m)=a(m) \}\,.\] This is simply the set of $\DSIC$ function consistent with $a$ on $0$ and $m$.
    \item Then for an internal node $u=[s:t]$, we define \[\mathcal{T}_{u} = \{ f \;|\; f \in \mathcal{P}_{u},\, f\left(\frac{s+t-1}{2}\right)=a\left(\frac{s+t-1}{2}\right) \text{ and } f\left(\frac{s+t+1}{2}\right)=a\left(\frac{s+t+1}{2}\right) \} \,.\] That is, we restrict $\mathcal{P}_{u}$ to the functions which agree with $a$ on the two middle points of $u$.

 \item For an non-root node $u=[s:t]$ with parent $p$, we set $\mathcal{P}_u=\mathcal{T}_p$.
\end{itemize}
\end{definition}

The proof of correctness of this second tester is more challenging than the one of \Cref{algo:mid_tester}. The key properties of $T_a$ are captured in the following two theorems. They relate the distance of $a$ to being $\DSIC$ under a distribution $\mathcal{D}$ to the probability that a leaf $[b:b]$ of the tree, where $b \sim \mathcal{D}$, is such that $\mathcal{P}_{[b:b]}=\emptyset$.

\begin{theorem}\label{theorem:tree_DSIC}
Consider a function $a : [0:m] \to [0:d] \times [0,dm]$ such that $a$ is $\DSIC$. Then, for any node $u$ of $T_a$, we have $\mathcal{P}_u \neq \emptyset$. Furthermore, if $u$ is an internal node, we also have $\mathcal{T}_u \neq \emptyset$.
\end{theorem}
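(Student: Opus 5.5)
Since $a$ is itself $\DSIC$, the plan is to show that $a$ is a witness lying in every set $\mathcal{P}_u$ and $\mathcal{T}_u$; each of these sets is then nonempty. Throughout, $\DSIC$ for a one-dimensional function $f=(x,p):[0:m]\to[0:d]\times[0,dm]$ means that $x$ is monotone and $p$ is a valid price function for $x$, which is the $n=1$ specialization of \Cref{def: DSIC function}. With this reading, $a$ is one of the functions over which every set in \Cref{def: tree T_a} ranges.

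The argument is an induction on the depth of $u$ in $T_a$, following the top-down construction. We prove the stronger claim that $a\in\mathcal{P}_u$ for every node $u$, and that $a\in\mathcal{T}_u$ for every internal node $u$.

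\emph{Base case.} For the root, $\mathcal{P}_{[0:m]}$ is the set of $\DSIC$ functions $f$ with $f(0)=a(0)$ and $f(m)=a(m)$. Since $a$ is $\DSIC$ and trivially agrees with itself at $0$ and $m$, we get $a\in\mathcal{P}_{[0:m]}$.

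\emph{Internal nodes.} Let $u=[s:t]$ be an internal node with $a\in\mathcal{P}_u$. The set $\mathcal{T}_u$ consists of those $f\in\mathcal{P}_u$ that agree with $a$ at the two middle points $\frac{s+t-1}{2}$ and $\frac{s+t+1}{2}$. These points are integers in $[s:t]$ because $t-s+1$ is a power of two and is at least $2$ for an internal node. Since $a$ agrees with itself at these points, $a\in\mathcal{T}_u$.

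\emph{Non-root nodes.} If $u$ has parent $v$, then $\mathcal{P}_u=\mathcal{T}_v$ by definition. The parent $v$ is internal and, by the induction hypothesis, $a\in\mathcal{P}_v$, so the previous step gives $a\in\mathcal{T}_v=\mathcal{P}_u$. This covers every node, including the leaves $[b:b]$.

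Finally, $a\in\mathcal{P}_u$ gives $\mathcal{P}_u\neq\emptyset$ for every node $u$, and $a\in\mathcal{T}_u$ gives $\mathcal{T}_u\neq\emptyset$ for every internal node $u$.

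There is no real obstacle; the only point to check is that the definitions are applied consistently. First, the functions in each $\mathcal{P}_u$ are taken over the same domain and codomain as $a$, namely $[0:m]\to[0:d]\times[0,dm]$. If $m+1$ is not a power of two, we work with the padded $\overline{a}$, which is again $\DSIC$: extending $x$ constantly after $m$ preserves monotonicity and introduces no new jumps, so the same values $h_i$ still give a valid price function. Second, the middle-point labels must be well-defined integers, which holds because every interval in $T_a$ has length a power of two.
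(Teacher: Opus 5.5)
Your proof is correct and is essentially the paper's argument: an induction on depth showing that $a$ itself lies in every $\mathcal{P}_u$ and in $\mathcal{T}_u$ for every internal $u$, using $\mathcal{P}_u=\mathcal{T}_p$ for the inductive step. Your side checks, that the midpoints are integers and that the padded $\overline{a}$ is still $\DSIC$, are valid details the paper leaves implicit.
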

\begin{theorem}\label{theorem:tree_not_DSIC}
Let $\mathcal{D}$ be a distribution and consider a function $a : [0:m] \to [0:d] \times [0,dm]$. 
If $d_{\DSIC}^\mathcal{D}(a) \geq \epsilon$, then in the tree $T_a$ we have \[\Pr_{b \sim \mathcal{D}}\left[ \mathcal{P}_{[b:b]}=\emptyset \right] \geq \epsilon \,. \]
\end{theorem}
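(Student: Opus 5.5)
We prove the contrapositive: if $\Pr_{b\sim\mathcal{D}}[\mathcal{P}_{[b:b]}=\emptyset]<\epsilon$, then $d^{\mathcal{D}}_{\DSIC}(a)<\epsilon$. Let $B=\{b\in[0:m]:\mathcal{P}_{[b:b]}\neq\emptyset\}$. It suffices to build a single $\DSIC$ function $c$ with $c(b)=a(b)$ for every $b\in B$. Then $d^{\mathcal{D}}_{\DSIC}(a)\le \Pr_{b\sim\mathcal{D}}[b\notin B]<\epsilon$.

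For a leaf $[b:b]$, let $Q_b\subseteq[0:m]$ be the set of points at which \Cref{def: tree T_a} forces agreement with $a$ along the root-to-leaf path. This is $\{0,m\}$ together with the two middle points of every strict ancestor. Two facts about $Q_b$ are needed. First, $b\in Q_b$, because the parent of $[b:b]$ is $[b:b+1]$ or $[b-1:b]$, whose two middle points are its two elements. Second, $0\in Q_b$. Thus $\mathcal{P}_{[b:b]}\neq\emptyset$ means exactly that $a|_{Q_b}$ extends to a $\DSIC$ function.

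\textbf{Key step (extension lemma).} Call a pair of points $q<q'$ with values $(x,p),(x',p')$ \emph{compatible} if some $\DSIC$ function takes these values at $q$ and $q'$. I will show that a partial assignment on $Q\ni 0$ extends to a $\DSIC$ function if and only if it has value $p=0$ at $0$ and every pair of consecutive points of $Q$ is compatible. I will also characterize compatibility explicitly: it holds iff $x\le x'$, and either $x=x'$ and $p=p'$, or $x<x'$ and $p'-p\in[q(x'-x),\,q'(x'-x)]$.

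The proof uses the valid-price description. On $(q,q']$ the allocation may jump at any thresholds $t$, with prices incremented by $h\,\Delta x$ for some $h\in[t-1,t]$. The union of these ranges over $t\in(q,q']$ is $[q,q']$, so a single jump already realizes every value in the interval. Conversely, every increment lies in that interval. Since a $\DSIC$ function on $[0:m]$ is determined by independent choices of jumps and $h$'s on the disjoint segments $(q_j,q_{j+1}]$, these segment-wise extensions glue together.

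This lemma is where I expect the main work. The subtle points are the boundary case $x=x'$, which allows no jump in the segment, and checking that gluing respects the requirement $h_i\in[t_i-1,t_i]$ at the segment boundaries. The lemma also shows that compatibility is "transitive along chains": for $q<r<q'$, if $(q,r)$ and $(r,q')$ are compatible then so is $(q,q')$, since the increment intervals add. In addition, any two points of an extendable set are compatible.

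\textbf{Using the tree.} Apply the lemma to $Q=B\cup\{0\}$ with the values of $a$. The value at $0$ has $p=0$, because $0\in Q_b$ for any $b$; if $B=\emptyset$ the claim is trivial. Now take consecutive $b<b'$ in $B$, and let $w=[s:t]$ be the lowest common ancestor of the two leaves, with middle points $m_1<m_2=m_1+1$. Then $b\le m_1<m_2\le b'$.

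Both $m_1$ and $m_2$ lie in $Q_b$ and in $Q_{b'}$. Since $a|_{Q_b}$ is extendable, the pairs $(b,m_1)$ and $(m_1,m_2)$ are compatible. Since $a|_{Q_{b'}}$ is extendable, $(m_2,b')$ is compatible. By chain transitivity, $(b,b')$ is compatible. The same argument handles the pair $(0,\min B)$, using $0\in Q_{\min B}$ directly.

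Hence the lemma yields a $\DSIC$ function $c$ agreeing with $a$ on $B$, which completes the contrapositive.
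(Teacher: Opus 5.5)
Your proof is correct, but it takes a different route from the paper's. The paper argues by contradiction through the set $\mathcal{B}$ of internal nodes $u$ with $\mathcal{P}_u\neq\emptyset$ and $\mathcal{T}_u=\emptyset$. In \Cref{lemma:bad_set} it shows these nodes form an antichain whose subtrees contain exactly the leaves with $\mathcal{P}_{[b:b]}=\emptyset$. It then patches $a$ on each such interval $u$ with an arbitrary $f_u\in\mathcal{P}_u$. Finally it checks the conditions of \Cref{def:p-DSIC} pair by pair, distinguishing two cases: $b-1,b$ lie in a common interval of $\mathcal{B}$, or they are the two middle points of the last node containing both, which must then have $\mathcal{T}_u\neq\emptyset$.

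You instead work leaf by leaf. You identify $\mathcal{P}_{[b:b]}\neq\emptyset$ with extendability of $a|_{Q_b}$, and reduce extendability of $a|_{B\cup\{0\}}$ to compatibility of consecutive points. The adjacent middle pair of the lowest common ancestor lies in both $Q_b$ and $Q_{b'}$, and serves as the interface where the witnesses for the two leaves are glued. Your route needs neither the antichain structure nor the case analysis, and it makes clear why the tree queries two \emph{adjacent} middle points. The paper's route instead yields an explicit \DSIC{} function, equal to $a$ off the bad subtrees, and describes the rejecting leaves exactly as a union of subtrees. Both rest on the same fact: \DSIC{} only constrains $p(0)$ and adjacent pairs.

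One small inaccuracy: your explicit characterization of compatibility omits that $(x,p)$ must itself be attainable at $q$, i.e.\ $0\le p\le qx$. For example, with $m=3$ and $d=2$, the values $(1,2)$ at $q=1$ and $(2,3)$ at $q'=2$ satisfy your condition, yet no \DSIC{} function has $x(1)=1$ and $p(1)=2$. This is harmless. You only apply the condition along a chain starting at $0$ with $p(0)=0$, or derive it from an existing \DSIC{} witness. You can also drop the formula entirely. If two \DSIC{} functions agree at a point $r$, following the first on $[0:r]$ and the second on $[r:m]$ gives a \DSIC{} function. This gluing fact alone yields both your extension lemma and chain transitivity.
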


The proof of the above two theorems can be found in \Cref{app:tree_lemmas}. As such, our tester for functions $a : [0:m] \to [0:d] \times [0:dm]$ simply samples $O(1/\epsilon)$ many $b \sim \mathcal{D}$ and rejects $a$ if for any of the sampled points $\mathcal{P}_{[b:b]}=\emptyset$. Note that checking if $\mathcal{P}_{[b:b]}=\emptyset$ is a non trivial task, in particular to solve we problem, we solve $(d+1)^{m+1}$ many linear programs. This explains the running time of our algorithm.

Using this tester as a subroutine, we can obtain the following theorem :

\begin{theorem}
\label{thm:ub_amazing}
Given an allocation function $\vec{a}:[0:m]^n \to [0:d]^n \times [0,dm]^n$, conditional sampling access to a distribution $\mathcal{D}$ over $[0:m]^n$, and $0<\epsilon\leq1$, there exists a one-sided tester with query complexity
\[
\Omic\left(\frac{n}{\epsilon}\log(m)\log^2\left(\frac{2n}{\epsilon}\right)\right)
\]
to test if $\vec{a}$ is $\DSIC$ or $d^\mathcal{D}_\DSIC(\vec a) \geq \epsilon$. The tester uses $\Omic((n/\epsilon)\log^2(2n/\epsilon))$ conditional samples and runs in time
\[
\Tilde{\Omic}\left(\frac{n^2}{\epsilon}\right)
+\Tilde{\Omic}\left(\frac{n}{\epsilon}\right)d^{\Omic(m)}
\]
under the unit-cost convention of \Cref{thm:ub}.

\end{theorem}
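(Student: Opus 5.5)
We prove \Cref{thm:ub_amazing} by plugging a new line tester into the work-investment framework of \Cref{algo:newpricetester}. The line tester $T'$ is built from the tree $T_a$. Given a line $a=(x,p)$, a distance parameter $\delta$, and sample access to $\mathcal{D}'$, it first pads $a$ to length $\overline{m}+1=2^k$ as described above. It then draws $\lceil 2/\delta\rceil$ samples $b\sim\mathcal{D}'$. For each sample it queries $a$ at the endpoints and two middle points of every node on the root-to-leaf path to $[b:b]$; this costs $\Omic(\log m)$ queries. It rejects if $\mathcal{P}_{[b:b]}=\emptyset$.

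\textbf{Properties of $T'$.} One-sidedness follows from \Cref{theorem:tree_DSIC}: if $a$ is $\DSIC$, the padded line $\overline{a}$ is still $\DSIC$, because the allocation stays constant after $m$ and so does the price. Hence every $\mathcal{P}_u\neq\emptyset$ and $T'$ never rejects. For soundness, suppose $d^{\mathcal{D}'}_\DSIC(a)\ge\delta$. By the padding remark the same holds for $\overline{a}$ under $\overline{\mathcal{D}'}$. \Cref{theorem:tree_not_DSIC} then says each sample hits an empty leaf with probability at least $\delta$. So $T'$ rejects with probability at least $1-e^{-2}$, and running it twice rejects with probability at least $1-e^{-4}$. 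The query cost is $\Omic(\log(m)/\delta)$.

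\textbf{Deciding emptiness.} We enumerate all $(d+1)^{\overline m+1}$ monotone-or-not allocation vectors $y$ and discard the non-monotone ones. For each remaining $y$, consistency with the queried allocation values is checked directly. The existence of $h_j\in[t_j-1,t_j]$ that reproduce the queried prices is a feasibility LP in $\Omic(m)$ variables. Since $\overline m<2m+1$, this takes time $d^{\Omic(m)}$ per sample. In fact the queried points are fixed once the path is fixed, so all of them can be queried first and the search done afterwards.

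\textbf{Global argument.} We reuse the proof of \Cref{thm:ub} in \Cref{app:earliertester} verbatim, replacing $T$ by $T'$. That proof uses only two facts about the line tester: it never rejects a $\DSIC$ line, and two independent runs with parameter $\delta$ reject any line at distance at least $\delta$ under its projected distribution with probability at least a constant. We need to re-establish, or cite from there, the analogue of \Cref{lemma: line dist} for $\DSIC$ under $\mathcal{D}$: if $d^\mathcal{D}_\DSIC(\vec a)\ge\epsilon$, then $\E_{L\sim\mathcal{D}_L}[d^{\mathcal{D}_L}_\DSIC(a_L)]\ge\epsilon/n$. The bucketing into levels $j=1,\dots,\ell$ with $\lceil 12\ell/(2^j\epsilon')\rceil$ lines at level $j$ (Goldreich's work investment) then guarantees that, with probability at least $2/3$, some sampled line at some level $j$ has distance at least $2^{-j}$. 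That line is rejected with constant probability.

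\textbf{Counting.} The total query count is $\sum_j (\ell/(2^j\epsilon'))\cdot\Omic(2^j\log m)=\Omic((n/\epsilon)\log(m)\log^2(2n/\epsilon))$. The number of conditional samples is the same sum without the $\log m$ factor, which is $\Omic((n/\epsilon)\log^2(2n/\epsilon))$. The running time adds $\Tilde\Omic(n)$ per query or sample, giving $\Tilde\Omic(n^2/\epsilon)$, plus $d^{\Omic(m)}$ per emptiness check over $\Tilde\Omic(n/\epsilon)$ checks.

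\textbf{Main obstacle.} The key step is making the emptiness check correct and exhaustive. The $\DSIC$ sets are infinite because prices are real-valued. The fix is to reduce to a finite search over allocations plus an LP over the $h_j$, and to verify that the constraints defining $\mathcal{P}_{[b:b]}$ are exactly the accumulated path constraints. The rest is inherited from \Cref{thm:ub}.
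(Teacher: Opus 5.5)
Your proposal is correct and is essentially the paper's proof. The paper likewise substitutes the tree-based line tester (\Cref{algo:best_algo}, which samples $\lceil 2/\delta\rceil$ leaves and decides whether $\mathcal{P}_{[b:b]}$ is empty via \Cref{algo:empty_tester}, enumerating allocations and solving a feasibility LP) into \Cref{algo:newpricetester}, inherits correctness verbatim from the proof of \Cref{thm:ub} through \Cref{lemma: line dist DSIC} and the work-investment lemma, and redoes the same query, sample and running-time sums.

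There are two small slips. First, the averaging statement you need is exactly \Cref{lemma: line dist DSIC}, stated with the projected distributions $\mathcal{D}_{\alpha,i,\beta}$ rather than $\mathcal{D}_L$. Second, the work-investment step gives success probability $9/10$, not merely $2/3$; this is what keeps the final rejection probability above $2/3$ after multiplying by the line tester's constant.
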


The proof of the above theorem, as well the formal description and analysis of our second $\DSIC$ tester for functions $a :[0:m] \to [0:d]\times[0:dm]$ can be found in \Cref{app:proof of amazing theorem}.

\section{Conclusion and Future Directions}

This paper opens up a new area, the intersection between two fields, those of algorithmic game theory and property testing.
In this work, we investigated the hardness of testing whether a mechanism $(\Vec{x}, \Vec{p})$ is IC or $\epsilon$-far from any IC mechanism and gave a one-sided adaptive tester for this task. We also showed that testing whether a function $\Vec{x}$ is coordinate-wise monotone or $\epsilon$-far from it requires $\wt{\Omeg}(n/\epsilon)$ queries, even for a two-sided adaptive tester, and gave an algorithm with a matching $\wt{\Omic}(n/\epsilon)$ upper bound.

We now present a few open questions arising from this new area that we believe would be well-motivated for future research:
\begin{itemize}
    \item Recall that one of our one-sided $\DSIC$ testers for functions $(\Vec{x}, \Vec{p}) : [0:m]^n \to [0:d]^n \times [0,dm]^n$ with query complexity $\Tilde{\Omic}(n/\epsilon)$ is optimal (up to log factors) when it comes to query complexity, but it has exponential running time. We also gave a tester with polynomial running time but query complexity
    \[
    O\left(\frac{n \log(2n/\epsilon)}{\epsilon}
    \left(\log(2n/\epsilon)+\min\{m,d\log(m)\}\right)\right).
    \]
    Can this computational gap be closed, or is there a provable separation? If the latter is true, this would enter into the (few) separations known in computational learning theory.
    \item Our second one-sided $\DSIC$ tester had query complexity $\Tilde{\Omic}\left( \frac{n \log(m)}{\epsilon} \right)$. Can we also get a tester with complexity $\Tilde{\Omic}\left( \frac{n \log(d)}{\epsilon} \right)$? Such an improvement would be well-motivated in a setting with a lot of players (dependent on dimension $m$) and few items (dependent on dimension $d$). We conjecture this to be true: \citet{10.1145/3155296} gave a one-sided tester for the monotonicity of functions $f:[0:m]^n \to [0:d]$ with $\Tilde{\Omic}\left(\frac{n\log(d)}{\epsilon}\right)$ queries. Hence, we could hope that such an approach leads to a tester with no dependency on $m$. However, \citet{10.1145/3155296} crucially rely on the definition of distance under the uniform distribution to prove the correctness of their algorithm, whereas we would like it with respect to an unknown distribution $\calD$.

\end{itemize}

\section*{Acknowledgments}
This research was supported in part by NSF awards CNS-2212745, CCF-2212233, DMS-2134059, and CCF-1763970. Jason Milionis' research was also supported by an Onassis Foundation Scholarship and an A.G. Leventis educational grant.

\printbibliography

\appendix
\section{Deferred Proofs}\label{sec:appendix}

\subsection{Proof of \Cref{theorem: second lower bound}}
\label{app:twosidedadaptive}

We first prove the following lemma for our analysis.
\begin{lemma}\label{lemma: points to edges}
     Given a set $Q \subseteq \{0,1\}^n$, we have 
    \begin{align*} | \{ (\vec{b},\vec{b'}) \in Q \times Q \;: \; (\vec{b},\vec{b'}) \in \Twin([n]) \}| \leq |Q|\log_2|Q|.\end{align*}

    In particular, given a set of $Q$ of $q$ points from the hypercube, there is at most $q \log_2(q)$ different edges such that both end points of the edge are in $Q$. 
\end{lemma}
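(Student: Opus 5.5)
We prove the bound by strong induction on $q=|Q|$, via the classical edge-isoperimetric argument for the hypercube: we show $e(Q)\le \tfrac12 q\log_2 q$, where $e(Q)$ is the number of hypercube edges with both endpoints in $Q$. Since each such edge lies in exactly one $\Twin(i)$ and is counted once as an ordered pair $(\vec b,\vec{b^{(i)}})$ with $\vec b_i=0$, the quantity on the left of the lemma equals $e(Q)$. The weaker bound $q\log_2 q$ stated in the lemma is then immediate.

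For the base case $q=1$ we have $e(Q)=0=\tfrac12\cdot 1\cdot\log_2 1$. For $q\ge 2$, the set $Q$ contains two distinct points, so some coordinate $i$ is non-constant on $Q$. Split $Q=Q_0\cup Q_1$ according to the value of the $i$-th bit, with sizes $q_0,q_1\ge 1$, and assume without loss of generality that $q_0\le q_1$. Every edge inside $Q$ is of one of three kinds: inside $Q_0$, inside $Q_1$, or an $i$-direction edge joining the two parts. Each point of $Q_0$ has exactly one $i$-twin, so there are at most $\min(q_0,q_1)=q_0$ crossing edges. By induction,
\[
e(Q)\le \tfrac12 q_0\log_2 q_0+\tfrac12 q_1\log_2 q_1+q_0 .
\]

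It remains to verify the inequality
\[
\tfrac12 q_0\log_2 q_0+\tfrac12 q_1\log_2 q_1+q_0\le \tfrac12 (q_0+q_1)\log_2(q_0+q_1)
\quad\text{for } 1\le q_0\le q_1 .
\]
This is the one genuinely computational step, and the main obstacle. Set $q=q_0+q_1$ and $t=q_0/q\in(0,1/2]$. Dividing by $q/2$ and simplifying reduces the inequality to
\[
2t\le -t\log_2 t-(1-t)\log_2(1-t)=H(t),
\]
where $H$ is the binary entropy function. This holds on $(0,1/2]$ because $H$ is concave, $H(0)=0$ and $H(1/2)=1$; by concavity $H$ lies above the chord from $(0,0)$ to $(1/2,1)$, which is exactly the line $2t$.

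This completes the induction, giving $e(Q)\le\tfrac12 q\log_2 q\le q\log_2 q$, which proves the lemma.
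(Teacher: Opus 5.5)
Your proof is correct and follows essentially the same route as the paper. Both arguments split $Q$ along a coordinate, apply induction to both halves, bound the crossing edges by $\min(q_0,q_1)$, and then verify $x\log_2 x+(q-x)\log_2(q-x)+2x\le q\log_2 q$ for $1\le x\le q/2$. The paper phrases this as the quadratic-form bound $\vec{q}^t Q_n\vec{q}\le\|\vec{q}\|_1\log_2\|\vec{q}\|_1$ and inducts on $n$, which is the same statement as your $e(Q)\le\frac12 q\log_2 q$.

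The differences are minor. You induct on $|Q|$ and pick a non-constant coordinate, which avoids the paper's separate empty-half case. Your reduction to $2t\le H(t)$ via concavity of binary entropy is also a cleaner justification of the final numeric inequality than the paper's brief convexity-and-endpoints remark.
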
 

\begin{proof}{Proof of \Cref{lemma: points to edges}}
Let $n \geq 1$ be the dimension of the hypercube, and $Q$ be a set of points of the $n$ dimensional hypercube with $|Q| \geq 1$. Let $\vec{q} \in \{0,1\}^{2^n}$ be the indicator vector of what point of the hypercube are in $Q$. To prove the lemma, we need to bound $\frac{1}{2}(\vec{q}^t Q_n \vec{q})$, where $Q_n$ is the adjacency matrix of the $n$ dimensional hypercube. $Q_n$ has a recursive structure, in particular, letting $I_n$ be the $2^n \times 2^n$ identity matrix:
\begin{align*} Q_n = \begin{bmatrix} Q_{n-1} \; I_{n-1} \\ I_{n-1} \; Q_{n-1} \end{bmatrix} \text{ and  } Q_1 = \begin{bmatrix} 0 \quad 1 \\ 1 \quad 0 \end{bmatrix} \end{align*}

The proof is by induction on $n$ and our goal is to show that 
\begin{align*}(\vec{q}^t Q_n \vec{q}) \leq \| \vec{q} \|_1 \log_2( \|\vec{q} \|_1)\,.\end{align*}

The case where $n=1$ is easy. So let $n \geq 2 $, and decompose $\vec{q}=[\vec{q_1}; \vec{q_2}]$, where $\vec{q_i} \in \{0,1\}^{2^{n-1}}$ and note that one of $\vec{q_1}, \vec{q_2}$ isn't the all $0$ vector. 

\begin{align*}
    \vec{q}^t Q_n \vec{q} &= \begin{bmatrix}
        \vec{q_1}^t \; \vec{q_2}^t
    \end{bmatrix} \begin{bmatrix} Q_{n-1} \; I_{n-1} \\ I_{n-1} \; Q_{n-1} \end{bmatrix} \begin{bmatrix}
        \vec{q_1} \\ \vec{q_2}
    \end{bmatrix} \\
    &= \vec{q_1}^T Q_{n-1} \vec{q_1} + \vec{q_2}^T Q_{n-1} \vec{q_2} + \vec{q_1}^T \vec{q_2} + \vec{q_1} \vec{q_2}^T \\
    & \leq  \vec{q_1}^T Q_{n-1} \vec{q_1} + \vec{q_2}^T Q_{n-1} \vec{q_2} + 2\min( \|\vec{q_1}\|_1, \|\vec{q_2}\|_1)
\end{align*}
If $\vec{q}_2$ is the all $0$ vector by the inductive hypothesis, we get:
\begin{align*}\vec{q}^t Q_n \vec{q}  = \vec{q_1}^T Q_{n-1} \vec{q_1} \leq  \|\vec{q}\|_1 \log(\|\vec{q}\|_1).\end{align*}
We have the same result if $\vec{q}_1$ is the all $0$ vector.

Otherwise,  we can apply the inductive hypothesis to both $\vec{q_1}^T Q_{n-1} \vec{q_1}$ and $\vec{q_2}^T Q_{n-1} \vec{q_2}$, to obtain
\begin{align*}
    \vec{q}^t Q_n \vec{q} &\leq  \|\vec{q_1}\|_1 \log(\|\vec{q_1}\|_1) + \|\vec{q_2}\|_1 \log(\|\vec{q_2}\|_1) + 2\min( \|\vec{q_1}\|_1, \|\vec{q_2}\|_1)
\end{align*}

Let $k = \|\vec q\|_1$, $x = \|\vec{q_1}\|_1$, and assume wlog $\|\vec{q_1}\|_1 \leq \|\vec{q_2}\|_1$, so that we have $1 \leq x \leq k/2$. The expression becomes 
\begin{align*}
\vec{q}^t Q_n \vec{q} \leq x \log(x)+(k-x)\log(k-x)+2x
\,.
\end{align*}
Maximizing the right hand side yields a function value of $k\log k$ by setting $x=k/2$ (the right hand side is convex in the domain of $1 \leq x \leq k/2$, and we evaluate local extrema and endpoints).
Hence, we get
\begin{align*}
\vec{q}^t Q_n \vec{q} \leq k \log k = \|\vec{q}\|_1 \log\|\vec{q}\|_1
\,,
\end{align*}
as claimed.\Halmos
\end{proof}

Before proving \Cref{theorem: second lower bound}, we first define different distributions of functions:
\begin{itemize}
    \item Distribution $D_1$, for each $i \in [n]$ let $(\vec{b}, \vec{b'}) \in \Twin(i)$. We set $\vec{x}(\vec{b})_i, \vec{x}(\vec{b'})_i$ such that $\vec{x}(\vec{b})_i=\vec{x}(\vec{b'})_i=1$ with probability $1/2$ and $\vec{x}(\vec{b})_i=\vec{x}(\vec{b'})_i=0$ otherwise.
    \item Let $j \in [n]$, we define the distribution $D_{0,j}$ as follows:
    \begin{itemize} 
        \item For each edge $(\vec{b}, \vec{b'}) \in \Twin(j)$, we sample $(\vec{b}, \vec{b'})$ with probability $80\epsilon$. Let $S$ be the set of all sampled edges.
        For each $(\vec{b}, \vec{b'}) \in \Twin(j)$ if the edge was sampled, we set both $\vec{x}(\vec{b})_j$ and $\vec{x}(\vec{b'})_j$ independently at random from $\{0,1\}$. Else, we set $\vec{x}(\vec{b})_j, \vec{x}(\vec{b'})_j$ as $\vec{x}(\vec{b})_j=\vec{x}(\vec{b'})_j=1$ with probability $1/2$ and otherwise $\vec{x}(\vec{b})_j=\vec{x}(\vec{b'})_j=0$ otherwise.

        \item For each $i \in [n] \setminus \{j\}$, let $(\vec{b}, \vec{b'}) \in \Twin(i)$. We proceed as in $D_1$ and set $\vec{x}(\vec{b})_i, \vec{x}(\vec{b'})_i$ as $\vec{x}(\vec{b})_i=\vec{x}(\vec{b'})_i=1$ with probability $1/2$ and $\vec{x}(\vec{b})_i=\vec{x}(\vec{b'})_i=0$ otherwise.
    \end{itemize}
\end{itemize}

Finally we define the distribution $D_0$ by first sampling $j \in [n]$ uniformly at random and then sampling a function from $D_{0,j}$.
\begin{lemma}
    Let $2^{-n}\leq \epsilon \leq 0.01$. Any function sampled from $D_1$ is coordinate-wise monotone, and with probability at least $19/20$ a function sampled from $D_0$ is $\epsilon$-far from being coordinate-wise monotone.
\end{lemma}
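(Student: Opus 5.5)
The plan is to treat the two claims separately. Monotonicity of $D_1$ follows directly from \Cref{def : cwm cube}. The farness of $D_0$ comes from counting violating edges in the special direction $j$, lower-bounding the distance by a matching argument, and finishing with a Chernoff bound.

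\emph{$D_1$ is always coordinate-wise monotone.} Fix $i\in[n]$ and an edge $(\vec b,\vec{b'})\in\Twin(i)$. By construction $\vec x(\vec b)_i=\vec x(\vec{b'})_i$, so the edge is not in $\Viol(i)$. Hence $\Viol([n])=\emptyset$, and \Cref{def : cwm cube} gives coordinate-wise monotonicity. The same reasoning shows that under $D_{0,j}$ every violating edge lies in $\Twin(j)$ and is one of the sampled edges in $S$.

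\emph{Lower bound on the distance via a matching.} Condition on $j$ and let $V$ be the set of violating edges in $\Twin(j)$. Since $\Twin(j)$ is a perfect matching of $\{0,1\}^n$, the edges of $V$ are pairwise vertex-disjoint. Let $\vec y$ be any coordinate-wise monotone function and $(\vec b,\vec{b'})\in V$. Then $\vec x(\vec b)_j=1>0=\vec x(\vec{b'})_j$, while $\vec y(\vec b)_j\le \vec y(\vec{b'})_j$. So $\vec y$ disagrees with $\vec x$ on at least one endpoint of this edge. Disjointness then gives $d(\vec x,\vec y)\ge |V|/2^n$, and hence $d_\CMo(\vec x)\ge |V|/2^n$. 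It therefore suffices to show $\Pr[|V|\ge \epsilon 2^n]\ge 19/20$ for each fixed $j$; averaging over $j$ then gives the claim for $D_0$.

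\emph{Concentration (the only quantitative step).} Each of the $2^{n-1}$ edges of $\Twin(j)$ becomes violating independently, with probability $80\epsilon\cdot\frac14=20\epsilon$: it must be sampled (probability $80\epsilon\le 0.8$, which is valid since $\epsilon\le0.01$), and then its two independent uniform bits must come out as $(1,0)$. Thus $|V|\sim\mathrm{Bin}(2^{n-1},20\epsilon)$ with mean $\mu=10\epsilon 2^n$, and $\epsilon\ge 2^{-n}$ gives $\mu\ge10$. The event to avoid is $|V|<\epsilon2^n=\mu/10$, that is, $|V|<(1-\delta)\mu$ with $\delta=0.9$. Chebyshev is too weak here: it only yields a failure probability of about $1/8$. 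Instead I would use the multiplicative Chernoff lower tail,
\[
\Pr[|V|\le(1-\delta)\mu]\le e^{-\delta^2\mu/2}\le e^{-0.405\cdot 10}<0.02<\tfrac1{20}.
\]
This completes the proof. The main point requiring care is exactly this use of $\epsilon\ge2^{-n}$ to guarantee $\mu\ge 10$, together with choosing Chernoff over Chebyshev to get the $19/20$ constant.
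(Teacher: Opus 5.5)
Your proof is correct and follows essentially the same route as the paper. Both arguments note that there are no violations outside direction $j$. Both then use that each violating edge of $\Twin(j)$ forces a change at one endpoint, so $d_\CMo(\vec x)\ge|\Viol(j)|/2^n$, and finish with a multiplicative Chernoff lower tail using $\mu=10\epsilon 2^n\ge 10$ from $\epsilon\ge 2^{-n}$.

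The differences are cosmetic. The paper takes $\delta=4/5$ and concludes there are at least $2\epsilon 2^n$ violating edges, with failure probability $e^{-3.2}<1/20$. You take $\delta=0.9$ to land exactly on $\epsilon 2^n$, and you state explicitly the vertex-disjointness of $\Twin(j)$ that the paper uses only implicitly.
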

\begin{proof}{Proof}
The first part of the lemma is trivial. So, consider a function $\vec{x}$ sampled from $D_{0,j}$. Notice that for $i\neq j$, we have $\Viol(i)_{\Vec x}=\emptyset$. Consider an edge $(\vec{b}, \Vec{b'}) \in \Twin(j)$. If the edge wasn't sampled, then there is no need to change $\vec{x}(\vec{b})_j$ and $\vec{x}(\Vec{b'})_j$ as there is no violation along that edge. Otherwise, $\vec{x}(\vec{b})_j$ and $\vec{x}(\vec{b'})_j$ are set independently at random, so with probability $\frac{1}{4}$, we have $\vec{x}(\vec{b})_j=1$ and $\vec{x}(\vec{b'})_j=0$, making $(\vec{b}, \Vec{b'})$ a violating edge. It's clear that for each edge in $|\Viol(j)|$, we must change the allocation at one of the endpoints of the function to make $\vec{x}$ coordinate-wise monotone. Thus, $d_\CMo(\vec{x}) \geq \frac{|\Viol(j)|}{2^n}$.

For each edge $e \in \Twin(j)$ let $t_e$ be the indicator variable that records whether $e$ notices a violation or not. We have $\E[t_e]=   \frac{80\epsilon}{4}$. Since all edges are independent, by Chernoff bound, 

\begin{align*} \Pr[\sum_{e} t_e \leq \frac{ 80\epsilon }{4} \times 2^{n-1} \times (1 - \frac{4}{5} ) ] \leq e^{  -10 \epsilon 2^{n-1} (4/5)^2 } \leq \frac{1}{20}.\end{align*}
Since $\epsilon 2^{n-1} \geq 1/2$ \footnote{We can always assume $\epsilon \geq 2^{-n}$, since it's the minimum distance between two non equal functions with domain $\{0,1\}^n$}.
Thus, with probability at most $1/20$, less than $80\epsilon \times \frac{2^{n-3}}{5}$ edges in the $j$-th direction are violating in $\vec{x}$. So, with probability at least $19/20$, a function sampled from $D_{0,j}$ is $\epsilon$ far from being coordinate-wise monotone.\Halmos
\end{proof}

\begin{proof}{Proof of \Cref{theorem: second lower bound}}
    Let $2^{-n}\leq \epsilon \leq 0.01$. And let $\mathcal{A}$ be a randomized adaptive, two-sided, algorithm for testing coordinate-wise monotonocity with query complexity $q$. Wlog, we can assume that $\mathcal{A}$ always makes exactly $q$ queries and never queries the same point twice. In our setting, when $\mathcal{A}$ queries a point $\vec{b}$, we will answer with $\vec{x} (\vec{b})$ and sometimes when $d_\CMo(\vec{x}) \geq \epsilon$ we can answer with $\sucx$ (so the tester knows the function isn't coordinate-wise monotone and can immediately reject). This is slight generalization of the usual setting, but clearly a lower bound in this model is a lower bound for the usual setting where a query is always replied to with $\vec{x}(\vec{b})$.

    We can view $\mathcal{A}$ as a randomized mapping from query/answer history $[(\vec{b_1},a_l), \ldots,  (\vec{b_l},a_l)]$ to $\vec{b_{l+1}}$ for $l<q$ and to $\{\mathsf{Accept},  \mathsf{Reject}\}$ when $q=l$. Here, $a_l$ can be either one of $\vec{x}(\vec{b_l})$ or $\sucx$. 

    We will describe two processes $P_0$, $P_1$ that correspond to sampling a function from $D_0$ and $D_1$ respectively, as they answer queries from $\mathcal{A}$. Given $k \in \{0,1\}$, $H_k$ is the distribution of the final query-answer history of $\mathcal A$ when interacting with $P_k$. That is $H_k$ is the query/answer $[(\vec{b_1},a_1), \ldots,  (\vec{b_q},a_q)] \in 
    \left(\;\{0,1\}^n 
    \times \left( \{0,1\}^n \cup \{\sucx \} \right) \; \right)^q$ produced once $\mathcal{A}$ has finished its queries. Thus, the probability $\mathcal{A}$ accepts a function from $D_k$ is 
    \begin{align*}
        \sum_{(\vec{b_1},a_1), \ldots,  (\vec{b_q},a_q)}\Pr[H_k=[(\vec{b_1},a_1), \ldots,  (\vec{b_q},a_q)] ] \times \Pr[ \mathcal{A}\left((\vec{b_1},a_1), \ldots,  (\vec{b_q},a_q)\right)=\mathsf{Accept}]
    \end{align*}
    
Note that $\mathcal{A}$ must accept a function from $D_1$ with probability at least $2/3$ since all function from $D_1$ are coordinate-wise monotone. However, with probability at least $19/20$ a function from $D_0$ is $\epsilon$ far from coordinate-wise monotone and such functions must be rejected by the algorithm with probability at least $2/3$. As such, $\mathcal{A}$ can accept function sampled from $D_0$ with probability at most $11/30$.  Thus, the statistical difference between $H_0$ and $H_1$ must be at least $2/3-11/30= 0.3$. We will show that if $q < \frac{n}{5600\epsilon \log(n/\epsilon)}$, then the statistical difference between $H_0$ and $H_1$ is stricly less than $0.3$. \\
    
    The process $P_1$ proceeds as follows when $\mathcal{A}$ queries $\vec{b}$, $P_1$ sets $\vec{x}(\vec{b})$ bit by bit. For each $i \in [n]$: \begin{itemize}
        \item If $\mathcal{A}$ has already queried the $i$-twin of $\vec{b}$, set $\vec{x}(\vec{b})_i=\vec{x}(\vec{b^{(i)}})_i$.
        \item Else set $\vec{x}(\vec{b})_i$ as $0$ or $1$ uniformly at random.
    \end{itemize}
    Finally, $P_1$ answers with $\vec{x}(\vec{b})$. After $\mathcal{A}$ made all its queries, $P_1$ samples a function from $D_1$ uniformly at random that is consistent with the function $\vec{x}$ it built so far. It's easy to see that $P_1$ produces a function according to the distribution $D_1$.

    The process $P_0$ proceeds as follow. First, the algorithm samples $j \in [n]$ uniformly at random. Then, for each edge $(\vec{b}, \vec{b'}) \in \Twin(j)$, we sample $(\vec{b}, \vec{b'})$ with probability $80\epsilon$. Let $S$ be the set of sampled edges.  When $\mathcal{A}$ queries $\vec{b}$, $P_0$ sets $\vec{x}(\vec{b})$ bit by bit. 
    \begin{itemize}
        \item For each $i \in [n] \setminus \{j\}$ 
        \begin{itemize}
        \item If $\mathcal{A}$ has already queried the $i$-twin of $\vec{b}$, set $\vec{x}(\vec{b})_i=\vec{x}(\vec{b^{(i)}})_i$.
        \item Else set $\vec{x}(\vec{b})_i$ as $0$ or $1$ uniformly at random.
    \end{itemize}
           \item If $(\vec{b}, \vec{b^{(j)}})$ (or $(\vec{b^{(j)}}, \vec{b})$) was sampled in $S$, then  $P_0$ sets $\vec{x}(\vec{b})_j$ as $0$ or $1$ uniformly at random. 
           \item Else:
           \begin{itemize}
               \item If $\vec{b^{(j)}}$ wasn't already queried by $\mathcal{A}$, then $P_0$ sets $\vec{x}(\vec{b})_j$ as $0$ or $1$ uniformly at random.
           \item Else, if $\mathcal{A}$ has already queried the $j$-twin of $\vec{b}$, set $\vec{x}(\vec{b})_j=\vec{x}(\vec{b^{(j)}})_j$.
           \end{itemize}
           
    \end{itemize}
    Finally, if $\mathcal{A}$ had already queried $\vec{b^{(j)}}$ and $(\vec{b}, \vec{b^{(j)}}) \in S$ (or $(\vec{b^{(j)}}, \vec{b})  \in S$), $P_0$ answers the query with $\sucx$. Else, it replies with $\vec{x}(\vec{b})$.
    
    After $\mathcal{A}$ made all its queries, $P_0$ samples a function from $D_{j,0}$ uniformly at random that is consistent with the function $\vec{x}$ it built so far. It's easy to see that $P_0$ produces a function according to the distribution $D_0$.

    When $\mathcal{A}$ interacts with $P_1$ and makes its $l$-th query, then for each $i \in [n]$, the bit $\vec{x}(\vec{b_l})_i$ is sampled uniformly at random, or is set to $\vec{x}(\vec{b_l^{(i)}})_i$ if the $i$-twin of $\vec{b_l}$ was already queried. When $\mathcal{A}$ interacts with $P_0$, and makes its $l$-th query, if $\vec{b_l^{(j)} }$ was already queried, where $j$ was the coordinate sampled at first by the process and the edge $(\vec{b_l}, \vec{b_l^{(j)} })$ was sampled, then the query is replied to with $\sucx$.  If this doesn't happen, then for each $i \in [n]$, the bit of $\vec{x}(\vec{b_l^{(i)} })_i$ is set according to the same process as in $P_1$. In particular, if $P_0$ doesn't reply with $\sucx$, the answer of $P_0$ and $P_1$ are indistinguishable.

    Thus, conditioned on the event that $P_0$ never replied $\sucx$ to any of the queries, the distribution of query-answer histories $H_0$ and $H_1$ are identical. So, an upper bound on the statistical distance between $H_1$ and $H_0$ is the probability that that $P_0$ replies $\sucx$ to some query of $\mathcal{A}$. 

    By \Cref{lemma: points to edges}, if $\mathcal{A}$ perform $q$ queries, there are at most $q \log_2(q)$ different edges such that $\mathcal{A}$ queried both end points of the edge. $P_0$ initially samples $j \in [n]$, and a set $S$ of edges in $\Twin(j)$ and replies $\sucx$ only if $\mathcal{A}$ queries $\vec{b}$ and has already queried its $j$-twin where $(\vec{b}, \vec{b^{(j)}})$ (or $(\vec{b^{(j)}}, \vec{b})$) was sampled. 

    Assume for contradiction $q \leq \frac{n}{5600\epsilon \log(n/\epsilon)}$. Then, $\mathcal{A}$ queries the two end points of at most $\frac{n}{5600 \epsilon}$ edges of the hypercube. By a Markov bound, we get that with probability at least $6/7$, $\mathcal{A}$ has queried the two end points $\vec{b},\vec{b^{(j)}}$ of less than $\frac{1}{800 \epsilon}$ many edges in the $j$-th direction. 

    Now each edge in $\Twin(j)$ is independently sampled with probability $80\epsilon$. So conditioned on $\mathcal{A}$ querying the two end points of at most  $\frac{1}{800 \epsilon}$ many edges in the $j$-th direction, the probability that the algorithm queries the two end points of some edge in $S$ is at most
    $ 1- (1- 80\epsilon)^{1/800\epsilon} \leq 0.183 $
    whenever $\epsilon \leq 0.01$.
    Hence, with probability at least $0.817\times \frac{6}{7} > 0.7$, $P_0$ doesn't answer $\sucx$. Thus, if $q \leq \frac{n}{5600\epsilon \log(n/\epsilon)}$, we have that the statistical difference between $H_1$ and $H_0$ is $<0.3$, contradicting the correctness of $\mathcal{A}$. Therefore, we must have $q = \wt{\Omeg}( \frac{n}{\epsilon})$.\Halmos
\end{proof}

\subsection{Proof of \Cref{lemma: line dist}}
\label{app:proof lemma line dist}

We first define the line sorting operator.
\begin{definition}[Line sorting operator]\label{def : lso}
    Consider a function $\vec{x}:[0:m]^n \to [0:d]^n$. The function $L^i[\vec{x}]: [0:m]^n \to [0:d]^n$ is defined as follows: 
    For every $\alpha \in [0:m]^{i-1},  \beta \in [0:m]^{n-i}$, we let $L^i[\vec{x}](\alpha, 0, \beta)_i , \ldots , L^i[\vec{x}](\alpha, m, \beta)_i$  be $y(0) \ldots, y(m)$ where $y: [0:m] \to [0:d]$ is the monotone function closest to $x_{\alpha, i, \beta}$. 
    Finally, for $j \neq i$, $\Vec{b} \in [0:m]^n$, we always have $L^i[\vec{x}](\Vec{b})_j=\vec{x}(\Vec{b})_j$.
\end{definition}
In particular, the line sorting operator $L^i$ makes all the $i$-lines of $\vec{x}$ monotone by mapping them to the closest monotone function, and it does so by only changing the $i$-th bit of $\vec{x}$.

We now show some crucial properties of our operator:
\begin{lemma}
    Consider a function $\vec{x}:[0:m]^n \to [0:d]^n$, $\vec{b} \in [0:m]^n$. Let $\vec{x_0}=\vec{x}$ and $\vec{x_i}= L^i[\ldots L^2[L^1[\vec{x}]]\ldots]$. We have the following:
    \begin{itemize}
        \item For  $n \geq k > i \geq 0$ we have $\vec{x_i}(\vec{b})_k = \vec{x}(\vec{b})_k$.
        \item $\vec{x_n}$ is coordinate-wise monotone.
        \item $L^{i+1}[\vec{x_{i}}](\vec{b})_{i+1} = L^{i+1}[\vec{x}](\vec{b})_{i+1}$.
    \end{itemize}
\end{lemma}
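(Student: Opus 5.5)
The plan is to prove the three items in order, by a single induction on $i$ for the first item, and then derive the other two from it and from the fact that each $L^k$ touches only the $k$-th output coordinate. Throughout I will use two facts read off from \Cref{def : lso}. First, $L^k[\vec{z}]$ differs from $\vec{z}$ only in output coordinate $k$. Second, for every $\vec{b}$, the value $L^k[\vec{z}](\vec{b})_k$ depends only on the scalar $k$-lines $z_{\alpha,k,\beta}$, i.e.\ on the map $\vec{b}\mapsto\vec{z}(\vec{b})_k$. To make the second fact well defined, I fix once and for all a deterministic tie-breaking rule for ``the'' closest monotone function to a given line $[0:m]\to[0:d]$, for instance the lexicographically smallest minimizer. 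With this rule, $L^k$ is a function of the $k$-th output coordinate alone.

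\textbf{Item 1.} I would induct on $i$. The case $i=0$ is trivial since $\vec{x_0}=\vec{x}$. For the step, write $\vec{x_{i}}=L^{i}[\vec{x_{i-1}}]$ and take $k>i$. Applying $L^{i}$ does not change coordinate $k\neq i$, so $\vec{x_i}(\vec{b})_k=\vec{x_{i-1}}(\vec{b})_k$. Since $k>i-1$, the inductive hypothesis gives $\vec{x_{i-1}}(\vec{b})_k=\vec{x}(\vec{b})_k$.

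\textbf{Item 3.} By item 1 with $k=i+1$, the maps $\vec{b}\mapsto \vec{x_i}(\vec{b})_{i+1}$ and $\vec{b}\mapsto\vec{x}(\vec{b})_{i+1}$ coincide. Hence every $(i+1)$-line of $\vec{x_i}$ equals the corresponding $(i+1)$-line of $\vec{x}$. Because the sorted line (under the fixed tie-breaking rule) depends only on the line, the outputs $L^{i+1}[\vec{x_i}](\vec{b})_{i+1}$ and $L^{i+1}[\vec{x}](\vec{b})_{i+1}$ agree.

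\textbf{Item 2.} Fix $k\in[n]$. After applying $L^k$, every $k$-line of $\vec{x_k}$ is monotone by construction. The later operators $L^{k+1},\dots,L^{n}$ only modify coordinates $k+1,\dots,n$, so output coordinate $k$ is untouched from $\vec{x_k}$ to $\vec{x_n}$. Thus the $k$-lines of $\vec{x_n}$ equal those of $\vec{x_k}$, which are monotone. Since this holds for every $k$, the function $\vec{x_n}$ is $[n]$-monotone, and the Corollary characterizing coordinate-wise monotonicity concludes.

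The main (and only mild) obstacle is the non-uniqueness of the closest monotone function in \Cref{def : lso}. Item 3 is only meaningful once the choice is made canonically as a function of the line. I would state this tie-breaking convention explicitly at the start; everything else is bookkeeping about which output coordinates each operator changes.
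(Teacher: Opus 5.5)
Your proof is correct and follows the paper's argument. Item 1 comes from $L^k$ altering only output coordinate $k$, and item 3 from item 1 together with $L^{i+1}$ depending only on the $(i+1)$-st output coordinate; for item 2, your per-coordinate argument is the paper's invariant (that $L^i$ turns an $S$-monotone function into an $S\cup\{i\}$-monotone one) stated differently, and your explicit tie-breaking rule for the closest monotone line is a worthwhile addition the paper leaves implicit, since item 3 needs it to be well defined.
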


\begin{proof}{Proof}
    $L^{i}[\vec{x}]$ makes all the $i$-lines of $\vec{x}$ monotone, and in particular $L^i$ only changes the $i$-th bit of $\vec{x}(\vec{b})$ for $\vec{b} \in [0:m]^n$. As such it's obvious that for any $\vec{b} \in [0:m]^n$ and $k > i$ we have $\vec{x_i}(\vec{b})_k = \vec{x}(\vec{b})_k$, as in $\vec{x_i}$ only the first $i$ bits of $\vec{x}(\vec{b})$ have possibly been changed.
    
    Since $L_i$ acts on independent output bits of $\vec{x}$, if $\vec{x}$ was $S$ monotone, it's easy to see $L^i[\vec{x}]$ is $S \cup \{i\}$ monotone. As such $\vec{x_n}$ is $[n]$-monotone, which is equivalent to being coordinate-wise monotone. 

    Finally, note that in $\vec{x_{i}}$ we always have $\vec{x_i}(\vec{b})_{i+1} = \vec{x}(\vec{b})_{i+1}$ by the first point, thus $L^{i+1}[\vec{x_{i}}](\vec{b})_{i+1} = L^{i+1}[\vec{x}](\vec{b})_{i+1}$, since the definition of $L^{i+1}[\vec{x_{i}}]$ doesn't depend on bits $1, \ldots, i$ of $\vec{x}$ output.\Halmos
\end{proof}

We now complete the lemma's proof.
\begin{proof}{Proof of \Cref{lemma: line dist}}
Consider an $i$-line $x_{\alpha, i, \beta}$ of $x$. By definition $L^i[\vec x]_{\alpha, i, \beta}$ is a closest monotone function to $x_{\alpha, i, \beta}$, and hence
\[
|\{b \in [0:m] : L^i[\vec{x}]_{\alpha, i, \beta}(b) \neq x_{\alpha, i, \beta}(b)\}|
=(m+1)d_\Mo(x_{\alpha, i, \beta}).
\]
We have:
    \begin{align*}
        \epsilon \leq d_\CMo(\vec{x}) &\leq d(\vec{x}, L^n[\ldots L^2[L^1[\vec{x}]]\ldots]) \\
                            & \leq \sum_{i=0}^{n-1} d(\vec{x_{i}}, \vec{x_{i+1}} )
    \end{align*} 
    and 
    \begin{align*}
        d(\vec{x_{i}}, \vec{x_{i+1}}) &= \frac{1}{(m+1)^n} |\{ \vec{b} \in [0:m]^n \;:\; \vec{x_i}(\vec{b}) \neq  \vec{x_{i+1}}(\vec{b}) \} | \\ 
        &= \frac{1}{(m+1)^n} |\{ \vec{b} \in [0:m]^n \;:\; \vec{x_i}(\vec{b}) \neq L^{i+1}[ \vec{x_{i}}](\vec{b}) \} | \\ 
        &= \frac{1}{(m+1)^n} |\{ \vec{b} \in [0:m]^n \;:\; \vec{x_i}(\vec{b})_{i+1} \neq L^{i+1}[\vec{x_{i}}](\vec{b})_{i+1} \} | \\ 
        &= \frac{1}{(m+1)^n} |\{ \vec{b} \in [0:m]^n \;:\; \vec{x}(\vec{b})_{i+1} \neq L^{i+1}[\vec{x}](\vec{b})_{i+1} \} | \\ 
        &= \frac{1}{(m+1)^n} \sum_{\substack{\alpha \in [0:m]^i, \\ \beta \in [0:m]^{n-i-1} }} |\{ b \in [0:m] \;:\; \vec{x}(\alpha , b  , \beta )_{i+1} \neq L^{i+1}[\vec{x}]( \alpha , b  , \beta )_{i+1} \} | \\
        &= \E_{\alpha, \beta } \frac{1}{m+1} |\{ b \in [0:m] \;:\; \vec{x}(\alpha , b  , \beta )_{i+1} \neq L^{i+1}[\vec{x}]( \alpha , b  , \beta )_{i+1}  \} | \\
        &= \E_{\alpha, \beta } \frac{1}{m+1} |\{ b \in [0:m] \;:\; x_{\alpha, i+1, \beta}(b) \neq L^{i+1}[\vec{x}]_{ \alpha,i+1, \beta }(b) \} | \\
        &=  \E_{\alpha, \beta} d_\Mo( x_{\alpha, i+1, \beta})
    \end{align*}

Thus, we have \begin{align*} \epsilon \leq \sum_{i=0}^{n-1}  \E_{\alpha, \beta} d_\Mo( x_{\alpha, i+1, \beta}) = n \E_{\alpha, i, \beta} d_\Mo( x_{\alpha, i, \beta}).\Halmos \end{align*}
\end{proof}

\subsection{Proof of \Cref{thm:ub}}
\label{app:earliertester}

First, we will need the following result from \citet{goldreich:LIPIcs.APPROX-RANDOM.2014.704}. 
\begin{lemma}[\cite{goldreich:LIPIcs.APPROX-RANDOM.2014.704}]\label{lemma:work-invest}
    Let $X$ be a random variable that takes values in $[0, 1]$. Suppose $\E\left[ X \right] > \epsilon$. Let $\ell= \left\lceil \log_2 \frac{2}{\epsilon} \right\rceil$. Then, there exists $j \in [\ell]$ such that $\Pr[X > 2^{-j}] \geq 2^j\epsilon/4\ell$.
\end{lemma}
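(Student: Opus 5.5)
We prove the lemma by the standard layering (``work investment'') argument on the levels $2^{-j}$, $j \in [\ell]$. Suppose, for contradiction, that for every $j\in[\ell]$ we have $\Pr[X > 2^{-j}] < 2^j\epsilon/(4\ell)$. The plan is to bound $\E[X]$ from above by splitting the range $[0,1]$ into the dyadic buckets $(2^{-j},2^{-j+1}]$ for $j=1,\ldots,\ell$, together with the bottom bucket $[0,2^{-\ell}]$. This upper bound will turn out to be at most $\epsilon$, which contradicts $\E[X] > \epsilon$.

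Concretely, we write
\[
\E[X] \le \sum_{j=1}^{\ell} 2^{-j+1}\Pr\left[X\in(2^{-j},2^{-j+1}]\right] + 2^{-\ell}\Pr[X\le 2^{-\ell}].
\]
On the $j$-th bucket, $X\le 2^{-j+1}$, and the event $X\in(2^{-j},2^{-j+1}]$ is contained in $\{X>2^{-j}\}$. So the $j$-th term is at most $2^{-j+1}\cdot 2^j\epsilon/(4\ell) = \epsilon/(2\ell)$. Summing over the $\ell$ buckets gives at most $\epsilon/2$. For the bottom term, the choice $\ell=\lceil\log_2(2/\epsilon)\rceil$ gives $2^{-\ell}\le\epsilon/2$, so that term contributes at most $\epsilon/2$. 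Hence $\E[X]\le\epsilon$, the desired contradiction.

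The only point needing care is keeping the strict and non-strict inequalities consistent. The buckets must cover $(2^{-\ell},1]$ exactly, which they do since $2^{-0}=1$ and $X\le 1$. The assumed strict inequality $\Pr[X>2^{-j}]<2^j\epsilon/(4\ell)$ gives $\E[X]<\epsilon$ whenever some bucket has positive mass, and $\E[X]\le 2^{-\ell}\le \epsilon/2$ otherwise. In either case this contradicts $\E[X]>\epsilon$, so there exists $j \in [\ell]$ with $\Pr[X>2^{-j}]\ge 2^j\epsilon/(4\ell)$. The argument has no real obstacle; the main things to get right are the bucket bookkeeping and the constant $4$, which leaves slack (a factor of $2$ would already suffice).
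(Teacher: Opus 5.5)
Your proof is correct, and it is the standard dyadic-bucketing argument behind Goldreich's work-investment lemma, which the paper cites rather than reproves; charging each bucket at its upper endpoint $2^{-j+1}$ and using $2^{-\ell}\le\epsilon/2$ is exactly where the constant $4$ comes from. One correction to your closing aside: in your argument the constant $4$ is used up exactly ($\epsilon/2+\epsilon/2$), so there is no slack, and the constant $2$ is reachable only through the sharper bound $\E[X]=\int_0^1\Pr[X>t]\,dt\le\sum_{j=1}^{\ell}2^{-j}\Pr[X>2^{-j}]+2^{-\ell}$ (your case split on strictness is also unnecessary, since $\E[X]\le\epsilon$ already contradicts $\E[X]>\epsilon$).
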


We will also need the following two results, proofs for which can be found in \Cref{app:proof lemma line dist DSIC,app:proof lemma line DSIC tester richer alloc} respectively.
\begin{lemma} \label{lemma: line dist DSIC}
    Consider a function $\vec a: [0:m]^n \to [0:d]^n \times [0,dm]^n$, and a distribution $\mathcal{D}$ over $[0:m]^n$. If $d^\mathcal{D}_{\DSIC}(\vec a) \geq \epsilon$, then

    \begin{align*} \E_{(\alpha,i, \beta) \sim \mathcal{D}_L }\left[ d_{\DSIC}^{\mathcal{D}_{\alpha,i, \beta}}(a_{\alpha, i, \beta}) \right] \geq \epsilon/n. \end{align*}
\end{lemma}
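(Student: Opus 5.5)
We follow the proof of \Cref{lemma: line dist}, replacing the uniform measure by $\mathcal{D}$ and the monotone projection by a $\DSIC$ projection. For each $i\in[n]$ we define a line correction operator $M^i$ acting on $\vec a=(\vec x,\vec p)$. For every $(\alpha,\beta)$ whose line has positive $\mathcal{D}$-mass, it replaces the $i$-th coordinate pair $(x_{\alpha,i,\beta},p_{\alpha,i,\beta})$ by a $\DSIC$ line $c_{\alpha,i,\beta}:[0:m]\to[0:d]\times[0,dm]$. This $c_{\alpha,i,\beta}$ is chosen as a minimizer of $\Pr_{b\sim\mathcal{D}_{\alpha,i,\beta}}[a_{\alpha,i,\beta}(b)\neq c(b)]$ over $\DSIC$ lines, so the minimum equals $d^{\mathcal{D}_{\alpha,i,\beta}}_{\DSIC}(a_{\alpha,i,\beta})$. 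On zero-mass lines it uses any $\DSIC$ line, for instance $x\equiv 0$, $p\equiv 0$. All other output coordinates are left unchanged.

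The key observation is that the definitions of coordinate-wise monotonicity and of valid price function are both conditions on each $i$-line of the $i$-th coordinate pair separately. Hence $\vec a$ is $\DSIC$ if and only if every line $a_{\alpha,i,\beta}$ is a $\DSIC$ line: monotone in $x$, and with $p$ valid for $x$. So, as in the lemma on the line sorting operator, $M^i$ only touches output coordinate $i$, makes every $i$-line $\DSIC$, and preserves the $\DSIC$-ness of $j$-lines for $j\neq i$, since those involve only coordinate $j$. Setting $\vec a_0=\vec a$ and $\vec a_i=M^i[\vec a_{i-1}]$, the function $\vec a_n$ is $\DSIC$. Moreover, the $(i+1)$-th coordinate of $\vec a_i$ equals that of $\vec a$, so $M^{i+1}$ applied to $\vec a_i$ uses the lines of the original $\vec a$.

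For the distance bound, write $\vec a_n=(\vec y,\vec q)$; it is admissible in the minimum defining $d^\mathcal{D}_\DSIC$. A union bound over the steps gives
\[
\epsilon\le \Pr_{\vec b\sim\mathcal{D}}[\vec a(\vec b)\neq\vec a_n(\vec b)]\le\sum_{i=1}^n\Pr_{\vec b\sim\mathcal{D}}[\vec a_{i-1}(\vec b)\neq\vec a_i(\vec b)].
\]
The $i$-th term equals $\Pr_{\vec b\sim\mathcal{D}}[a_{\alpha,i,\beta}(b_i)\neq c_{\alpha,i,\beta}(b_i)]$, where $(\alpha,\beta)$ are the other coordinates of $\vec b$. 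We decompose $\vec b\sim\mathcal{D}$ by conditioning on $\vec b_{-i}=(\alpha,\beta)$. Then $\vec b_{-i}$ has marginal $w_i(\alpha,\beta)=\sum_{b'}\Pr_\mathcal{D}[(\alpha,b',\beta)]$, and conditionally $b_i\sim\mathcal{D}_{\alpha,i,\beta}$. Therefore the $i$-th term is $\sum_{\alpha,\beta}w_i(\alpha,\beta)\,d^{\mathcal{D}_{\alpha,i,\beta}}_\DSIC(a_{\alpha,i,\beta})$. Zero-mass lines contribute nothing, which is why their arbitrary choice is harmless. Summing over $i$ and noting that $\Pr_{\mathcal{D}_L}[(\alpha,i,\beta)]=w_i(\alpha,\beta)/n$, the total is exactly $n\,\E_{(\alpha,i,\beta)\sim\mathcal{D}_L}[d^{\mathcal{D}_{\alpha,i,\beta}}_\DSIC(a_{\alpha,i,\beta})]$, which gives the claim.

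The main obstacle is the decoupling step: we must verify carefully that the $\DSIC$ property decomposes exactly along lines, including the joint constraint between $x$ and $p$. Because the valid-price condition is stated per line and only in terms of that line's allocation, changing both $x$ and $p$ in coordinate $i$ cannot break any condition for $j\ne i$. Hence the interdependence of price and allocation is confined to a single line, where it is absorbed by the choice of $c_{\alpha,i,\beta}$. A minor technical point is that a minimizer exists, even though prices are real-valued. Distance depends only on which points agree, so the minimum is over finitely many agreement patterns and is attained.
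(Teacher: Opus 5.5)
Your proof is correct and follows the paper's argument. Your operators $M^1,\dots,M^n$ end at the same function $(\vec y,\vec q)$ the paper builds in one step by replacing each line with a closest $\DSIC$ line; your union bound over steps is the paper's union bound over output coordinates; and your conditioning identity is the paper's lemma writing $\mathcal{D}_L$ times $\mathcal{D}_{\alpha,i,\beta}$ as $\mathcal{D}/n$. Your remarks that the minimizer exists (only finitely many agreement patterns can occur) and that zero-mass lines contribute nothing fill in details the paper leaves implicit.
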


\begin{lemma}\label{lemma: line DSIC tester richer alloc}
Given a function $a:[0:m] \to [0:d] \times [0,dm]$, sample access to a distribution $\mathcal{D}'$ over $[0:m]$, and $0<\epsilon\leq1$, \Cref{algo:mid_tester} is an adaptive one-sided tester with query complexity
\[
\Omic\left(\min\{m,d\log(m)\}+1/\epsilon\right)
\]
to test if $a$ is $\DSIC$ or $d^{\mathcal{D}'}_\DSIC(a) \geq \epsilon$. The tester uses $\lceil2/\epsilon\rceil$ samples from $\mathcal D'$ and runs in time $\Tilde{\Omic}(nd/\epsilon)$ under the unit-cost convention used in \Cref{thm:ub}.
\end{lemma}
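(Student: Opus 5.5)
We need to prove \Cref{lemma: line DSIC tester richer alloc}: \Cref{algo:mid_tester} is one-sided, has the stated query complexity, and rejects with probability at least $2/3$ whenever $d^{\mathcal{D}'}_\DSIC(a)\ge\epsilon$. The plan is to split the argument into completeness, query/time accounting, and soundness. The key observation is that whenever the algorithm reaches the sampling loop without rejecting, the pair $(y,p^y)$ it constructs is a genuine $\DSIC$ line function. Moreover, if $a$ itself is $\DSIC$, then $(y,p^y)=a$.

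\textbf{Completeness.} Suppose $a=(x,p)$ is $\DSIC$, so $x$ is monotone and $p$ is valid for $x$. Binary search for the smallest $b$ with $x(b)=\ell$ is correct on a monotone function. For each value $\ell$ attained by $x$, it returns exactly the jump point of that level. Hence the recovered set $J_a$ equals $J_x$, with $t_0=0$ and strictly increasing $t_i$, so the first two rejection tests pass.

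By the definition of a valid price function, the computed $h_i=(p(t_i)-p(t_{i-1}))/(x(t_i)-x(t_{i-1}))$ are exactly the witnesses $h_i\in[t_i-1,t_i]$, so the third test passes. Since $J_a=J_x$ and $x$ is constant between consecutive jumps, we get $y=x$. By the formula for valid prices, $p^y=p$. The sampling loop therefore never rejects, so the tester is one-sided.

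\textbf{Query complexity and running time.} Consider two implementations of the first phase. A naive one queries all of $[0:m]$ once and reads off every first occurrence, costing $m+1$ queries. The other runs $d+1$ binary searches of $O(\log m)$ queries each. Taking whichever is cheaper gives $O(\min\{m,d\log m\})$ queries. The price checks cost $O(k)\le O(d)$ further queries, and the sampling loop costs $\lceil 2/\epsilon\rceil$ queries and samples.

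Binary search on a possibly non-monotone $x$ still terminates within its query budget. It may return a $b$ with $x(b)=\ell$ that is not the smallest, or fail to find one. Since soundness never relies on its output being meaningful, this is harmless. Time is linear in the number of queries times the $\Tilde{\Omic}(n)$ per-query cost, plus the $O(d)$ cost per evaluation of $p^y$ by prefix sums.

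\textbf{Soundness.} Suppose $d^{\mathcal{D}'}_\DSIC(a)\ge\epsilon$. If the algorithm rejects before sampling, we are done. Otherwise $t_0=0<t_1<\dots<t_k$, and the values $x(t_i)=s_i$ are strictly increasing because $S$ was sorted by level. Hence $y$ is monotone with jump set exactly $J_a$. Each $h_i$ lies in $[t_i-1,t_i]$, so $p^y$ is a valid price function for $y$, and $(y,p^y)$ is $\DSIC$; range issues such as $p^y\le dm$ follow from $h_i\le m$ and a telescoping sum.

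By the definition of distance, $\Pr_{b\sim\mathcal D'}[a(b)\ne(y(b),p^y(b))]\ge\epsilon$. Each sample therefore rejects independently with probability at least $\epsilon$, and the probability of acceptance is at most $(1-\epsilon)^{2/\epsilon}\le e^{-2}<1/3$.

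\textbf{Main obstacle.} The delicate step is confirming that $(y,p^y)$ is $\DSIC$ even when the binary searches returned garbage on a non-monotone $x$. This requires the levels $x(t_i)$ to be strictly increasing and consistent with the ordering of the $t_i$; the explicit $t_{i-1}<t_i$ check together with sorting by level is what ensures it. The query accounting for the $\min\{m,\cdot\}$ branch is routine.
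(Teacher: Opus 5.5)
Your proposal is correct and follows essentially the same route as the paper: completeness via $J_a=J_x$, $y=x$, $p^y=p$; soundness by noting that $(y,p^y)$ is $\DSIC$ whenever the sampling loop is reached, so each sample rejects with probability at least $\epsilon$; and the $\Omic(\min\{m,d\log m\})$ query bound, which the paper gets by caching answers across the binary searches rather than by switching implementations (in either case the prices at the $t_i$ come with those queries and $k\le m$, so your extra $\Omic(d)$ term does not actually survive). Your explicit check that $(y,p^y)$ is $\DSIC$ even when the binary searches misbehave on a non-monotone $x$ (strictly increasing levels, $h_i\ge 0$, $p^y\le dm$) is slightly more careful than the paper, which simply asserts it.
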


Our algorithm and the proof of its correctness follow the one of \citet{goldreich:LIPIcs.APPROX-RANDOM.2014.704} for what the author calls the concatenation problem. However, a key difference in our algorithm is that it picks the instances to test according to the distribution $\mathcal{D}_L$ and not uniformly at random. 

\begin{proof}{Proof of \Cref{thm:ub}}
    Recall that the tester we consider is given in \Cref{algo:newpricetester}. If $\Vec{a}$ is $\DSIC$, then by definition, each line $a_{\alpha,i, \beta}$ is $\DSIC$. So, $T$ (which is a one-sided tester) never rejects at any iteration and the tester accepts $\Vec{a}$.

    Otherwise, we fail to reject $\Vec{a}$ if all calls to $T$ accept. Combining \Cref{lemma: line DSIC tester richer alloc} and \Cref{lemma:work-invest} with our choice of $\epsilon'=\epsilon/2n$, we see that there exists some $j \in [\ell]$ such that 
    \begin{align*}\Pr_{(\alpha,i, \beta) \sim \mathcal{D}_L}[ d_{\DSIC}^{\mathcal{D}_{\alpha,i,\beta}}(a_{\alpha,i,\beta}) \geq 2^{-j} ] \geq 2^j\epsilon'/4\ell \,. \end{align*}
    For this $j$, the algorithm samples at least $12\ell/(2^j\epsilon')$ triples from $\mathcal D_L$. Thus the probability that every corresponding line is at distance $<2^{-j}$ is at most

    \begin{align*} (1- 2^j\epsilon'/4\ell) ^{ 3 \times 4\ell/2^j\epsilon'} \leq 1/10\,.\end{align*}
    
    Hence, with probability at least $9/10$, the algorithm samples a line $a_{\alpha,i,\beta}$ with $d_{\DSIC}^{\mathcal{D}_{\alpha,i,\beta}}(a_{\alpha,i,\beta}) \geq 2^{-j}$. The two independent runs of \Cref{algo:mid_tester} on that line both accept with probability at most $(1/3)^2$.
    
    So, the algorithm rejects $\Vec{a}$ with probability at least $9/10*8/9 \geq 2/3$.

    We now look at the query complexity $q$ of our tester. A call to \Cref{algo:mid_tester} with distance parameter $2^{-j}$ requires $\Omic(M+2^j)$ queries, where
    \[
    M=\min\{m,d\log(m)\}.
    \]
    Since $2^\ell<4/\epsilon'$, every unrounded repetition count $12\ell/(2^j\epsilon')$ is greater than $3\ell$. Thus the ceilings change the following bounds by at most a constant factor.

    \begin{align*}
       q&\leq \sum_{j=1}^\ell \Omic\left(\frac{\ell}{2^j\epsilon'}\right) \Omic \left( M+2^j \right) \\
        &= \Omic\left( \frac{M \ell}{ \epsilon'}\right) \times \sum_{j=1}^\ell \frac{1}{2^j} + \sum_{j=1}^\ell  \Omic\left( \frac{\ell}{2^j\epsilon'} \times \frac{1}{2^{-j}}\right) \\
        & \leq  \Omic\left( \frac{M\ell}{ \epsilon' }\right) + \sum_{j=1}^\ell  \Omic\left( \frac{\ell}{\epsilon'}\right) \\
        &= \Omic\left( \frac{M \ell}{ \epsilon'} + \frac{\ell^2}{\epsilon'} \right) \\
        &= \Omic\left( \frac{n}{\epsilon} \log\left(\frac{2n}{\epsilon}\right) M +  \frac{n}{\epsilon} \log^2\left(\frac{2n}{\epsilon}\right) \right)\,.
    \end{align*}

    Finally, each call to \Cref{algo:mid_tester} with distance parameter $2^{-j}$ uses $2^{j+1}$ samples from $\mathcal{D}_{\alpha,i,\beta}$. Hence the total number of conditional samples is

    \begin{align*}
    \sum_{j=1}^\ell \Omic\left(\frac{\ell}{2^j\epsilon'}\right)(1+\Omic(2^j))
        &= \Omic\left(\frac{\ell}{\epsilon'}+\frac{\ell^2}{\epsilon'}\right)\\
        &= \Omic\left(\frac{n}{\epsilon}\log^2\left(\frac{2n}{\epsilon}\right)\right)\,.
    \end{align*}

    We use exact, unit-cost arithmetic for reported prices and charge $\Tilde{\Omic}(n)$ time for each query to $\Vec a$ and each sample from $\mathcal D$ or a conditional distribution $\mathcal D_{\alpha,i,\beta}$. A call to $T$ with distance parameter $2^{-j}$ takes time $\Tilde{\Omic}(2^jnd)$. Thus the running time is
    \begin{align*}
    \sum_{j=1}^\ell \Omic\left(\frac{\ell}{2^j\epsilon'}\right)
    \left(n+\Tilde{\Omic}\left(2^jnd\right)\right)
    &=\Tilde{\Omic}\left(\frac{\ell nd}{\epsilon'}\sum_{j=1}^\ell 2^{-j}2^j\right)\\
        &= \Tilde{\Omic}\left(\ell/\epsilon' \times nd \times \ell  \right) \\
        &= \Tilde{\Omic}( \frac{n\log(2n/\epsilon)}{\epsilon}  \times nd \times \log(2n/\epsilon)) \\
        &= \Tilde{\Omic}\left( \frac{n^2d}{\epsilon} \right)\,.\Halmos
    \end{align*}
    
\end{proof}

\subsection{Proof of \Cref{lemma: line dist DSIC}}
\label{app:proof lemma line dist DSIC}
Before proving the lemma, we will need the following result.
\begin{lemma}\label{lem: rexpress distribution}
    
Let $\Vec{b} \in [0:m]^n$, fix $i \in [n]$, and let $\alpha$ denote the first $i-1$ coordinates of $\Vec{b}$ and $\beta$ the last $n-i$ coordinates of the vector. We have
\begin{align*}\Pr_{X \sim \mathcal{D}_L}[X=(\alpha,i,\beta)] \times \Pr_{B \sim \mathcal{D}_{\alpha,i,\beta}}[B=\Vec{b}_i] = \frac{1}{n} \Pr_{Z \sim \mathcal{D}}[Z = \vec{b}].  \end{align*}
\end{lemma}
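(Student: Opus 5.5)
The identity follows by unwinding the two definitions and splitting into a degenerate and a non-degenerate case. Throughout, write
\[
W(\alpha,i,\beta)\triangleq\sum_{b'=0}^m \Pr_{Y'\sim\mathcal{D}}[Y'=(\alpha,b',\beta)]
\]
for the total $\mathcal{D}$-mass of the $i$-line through $\vec b$.

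First, I would record the line probability. By the definition of the line distribution, $\Pr_{X\sim\mathcal{D}_L}[X=(\alpha,i,\beta)]=\frac1n W(\alpha,i,\beta)$.

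Second, in the non-degenerate case $W(\alpha,i,\beta)>0$, the projection gives $\Pr_{B\sim\mathcal{D}_{\alpha,i,\beta}}[B=\vec b_i]=\Pr_{Y\sim\mathcal{D}}[Y=(\alpha,\vec b_i,\beta)]/W(\alpha,i,\beta)$. Multiplying the two expressions, the factor $W$ cancels. Since $(\alpha,\vec b_i,\beta)=\vec b$ by the concatenation convention, the product equals $\frac1n\Pr_{Z\sim\mathcal{D}}[Z=\vec b]$.

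Third, in the degenerate case $W(\alpha,i,\beta)=0$, the projection is defined as the uniform distribution, so the second factor is $1/(m+1)$. It does not matter, because the first factor is $0$ and the left-hand side vanishes. On the other side, $\Pr[Z=\vec b]$ is one of the nonnegative summands of $W$, so it is also $0$, and both sides agree.

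There is no real obstacle here. The only points needing care are that the degenerate case must be handled separately, since the projection is defined ad hoc there, and that $\vec b$ must be identified with the concatenation $(\alpha,\vec b_i,\beta)$.
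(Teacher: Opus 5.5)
Your proposal is correct and follows essentially the same route as the paper's proof. Both unwind the definitions of $\mathcal{D}_L$ and $\mathcal{D}_{\alpha,i,\beta}$, cancel the line mass in the non-degenerate case, and observe in the degenerate case that both sides vanish because $\Pr_{Z\sim\mathcal{D}}[Z=\vec b]$ is one of the summands of the zero line mass.
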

\begin{proof}{Proof}
By the definition of $\mathcal{D}_L$, we have
\begin{align*}
     \Pr_{X \sim \mathcal{D}_L}[X=(\alpha,i,\beta)] \times \Pr_{B \sim \mathcal{D}_{\alpha,i,\beta}}[B=\Vec{b}_i] &=  \left( \frac{1}{n} \sum_{b=0}^m \Pr_{X \sim \mathcal{D}}[ X =  \alpha , b , \beta] \right) \Pr_{B \sim \mathcal{D}_{\alpha,i,\beta}}[B=\Vec{b}_i].
\end{align*}

If  $\sum_{b=0}^m \Pr_{Y \sim \mathcal{D}}[ Y =  \alpha , b , \beta] \neq 0$ we have :

\begin{align*}
     \left(\frac{1}{n} \sum_{b=0}^m \Pr_{Y \sim \mathcal{D}}[ Y =  \alpha , b , \beta] \right) \Pr_{B \sim \mathcal{D}_{\alpha,i,\beta}}[B=\vec{b}_i] &= \left(\frac{1}{n} \sum_{b=0}^m \Pr_{Y \sim \mathcal{D}}[ Y =  \alpha , b , \beta] \right) \dfrac{ \Pr\limits_{Z \sim \mathcal{D}}\left[ Z =  \alpha , \Vec{b}_i , \beta \right]}{\sum\limits_{b'=0}^{m} \Pr\limits_{Z \sim \mathcal{D}}\left[ Z =  \alpha , b' , \beta\right]} \\
    &=  \frac{1}{n} \Pr_{Z' \sim \mathcal{D}}\left[Z'= \alpha , \Vec{b}_i , \beta\right] \\
    &= \frac{1}{n} \Pr_{Z \sim \mathcal{D}}[Z= \Vec{b}]\,.
\end{align*}

If $\sum_{b=0}^m \Pr_{Y \sim \mathcal{D}}[ Y =  \alpha , b , \beta] = 0$, which implies $\Pr_{Z \sim \mathcal{D}}[Z = \vec{b}]=0$, we get 
\begin{align*}
     \left(\frac{1}{n} \sum_{b=0}^m \Pr_{Y \sim \mathcal{D}}[ Y =  \alpha , b , \beta] \right) \Pr_{B \sim \mathcal{D}_{\alpha,i,\beta}}[B=\vec{b}_i] = 0 = \frac{1}{n} \Pr_{Z \sim \mathcal{D}}[Z= \Vec{b}] \,.\Halmos
\end{align*}
\end{proof}

\begin{proof}{Proof of \Cref{lemma: line dist DSIC}}

Consider a function $\vec{a}=(\vec{x}, \vec p): [0:m]^n \to [0:d]^n \times [0,dm]^n$ and a distribution $\mathcal{D}$ over $[0:m]^n$. For $i \in [n], \alpha \in [0:m]^{i-1}, \beta \in [0:m]^{n-i}$, let $(\Tilde{x}^{\alpha,i,\beta}, \Tilde{p}^{\alpha,i,\beta})$ be the $\DSIC$ function closest to $a_{\alpha,i,\beta}$ under the distribution $\mathcal{D}_{\alpha,i,\beta}$. That is, 
\begin{align*} \Pr_{b \sim \mathcal{D}_{\alpha,i, \beta}}\left[\Tilde{x}^{\alpha,i,\beta}(b) \neq x_{\alpha,i, \beta}(b) \lor   \Tilde{p}^{\alpha,i,\beta}(b) \neq p_{\alpha,i, \beta}(b)\right] = d_{\DSIC}^{\mathcal{D}_{\alpha,i,\beta}}(a_{\alpha,i,\beta})   \,. \end{align*}

We then define the functions $\vec{y} : [0:m]^n \to [0:d]^n, \Vec{q} : [0:m]^n \to [0,dm]^n$ such that $(\Vec{y}(\alpha,b, \beta)_i,\Vec{q}(\alpha,b, \beta)_i) = (\tilde{x}^{\alpha,i,\beta}(b), \tilde{p}^{\alpha,i,\beta}(b))$ for all $b \in [0:m]$.

Note that this is a valid definition for $\Vec{y}, \Vec{q}$ since each bit of the output is defined exactly once. Moreover, it's easy to see that $\Vec{y}, \Vec{q}$ is $\DSIC$ since all the lines of $\Vec{y}$ are coordinate wise monotone, and each of $\Vec{q}$ is a valid price function for the corresponding line of $\Vec{y}$.

Since $d^\mathcal{D}_{\DSIC}(\Vec a)\geq\epsilon$ and $(\Vec y, \Vec q)$ is $\DSIC$, we have that
\begin{align*}
    \epsilon &\leq \Pr_{\Vec{b} \sim \mathcal{D}}\left[ \vec{a}(\Vec{b}) \neq \left(\Vec{y}(\Vec{b}),  \Vec{q}(\vec b) \right) \right] \\    
    &= \sum_{\Vec{b} \in [0:m]^n} \Pr_{Z \sim \mathcal{D}}[Z=\Vec{b} ] \times \mathds{1}\left[\vec{a}(\Vec{b}) \neq \left(\Vec{y}(\Vec{b}),  \Vec{q}(\vec b) \right)\right] \\
    &\leq  \sum_{\Vec{b} \in [0:m]^n} \Pr_{Z \sim \mathcal{D}}[Z=\Vec{b} ] \times \sum_{i=1}^n \mathds{1}\left[\vec{a}(\Vec{b})_i \neq \left(\Vec{y}(\Vec{b})_i,  \Vec{q}(\vec b)_i \right)\right] \\
    &=  \sum_{i=1}^n \sum_{\Vec{b} \in [0:m]^n} \Pr_{Z \sim \mathcal{D}}[Z=\Vec{b} ] \times \mathds{1}\left[\vec{a}(\Vec{b})_i \neq \left(\Vec{y}(\Vec{b})_i,  \Vec{q}(\vec b)_i \right)\right]  \\
     &= \sum_{i=1}^n \sum_{\substack{\alpha \in [0:m]^{i-1},\\ \beta \in [0:m]^{n-i} }} \sum_{b \in [0:m]} \Pr_{Z \sim \mathcal{D}}[Z= \alpha , b , \beta ] \times \mathds{1}\left[\vec{a}(\alpha , b , \beta )_i \neq \left(\Vec{y}(\alpha , b , \beta )_i,  \Vec{q}(\alpha , b , \beta )_i \right) \right] \,.
\end{align*}
\clearpage
Finally, using the above, we derive the lemma:
{\allowdisplaybreaks
\begin{align*}
         \E_{(\alpha,i, \beta) \sim \mathcal{D}_L }\left[ d_{\DSIC}^{\mathcal{D}_{\alpha,i, \beta}}(a_{\alpha, i, \beta}) \right] &= \sum_{i=1}^n \sum_{\substack{\alpha \in [0:m]^{i-1},\\ \beta \in [0:m]^{n-i} }} \Pr_{X \sim \mathcal{D}_L}\left[X= (\alpha, i, \beta) \right]d_{\DSIC}^{\mathcal{D}_{\alpha,i, \beta}}(\Vec{a}_{\alpha, i, \beta}) \\
         &= \sum_{i=1}^n \sum_{\substack{\alpha \in [0:m]^{i-1},\\ \beta \in [0:m]^{n-i} }} \Pr_{X \sim \mathcal{D}_L}[X= (\alpha, i, \beta) ]\, \times \\ & \quad \quad \quad \quad \quad \quad  \quad \quad \quad\Pr_{b \sim \mathcal{D}_{\alpha, i, \beta}}\left[\left(\Tilde{x}^{\alpha,i,\beta}(b),\Tilde{p}^{\alpha,i,\beta}(b)\right)  \neq a_{\alpha,i, \beta}(b) \right] \\
         &= \sum_{i=1}^n \sum_{\substack{\alpha \in [0:m]^{i-1},\\ \beta \in [0:m]^{n-i} }} \Pr_{X \sim \mathcal{D}_L}[X= (\alpha, i, \beta) ] \, \times \\ & \quad \quad \quad \sum_{b \in [0:m]} \Pr_{B \sim \mathcal{D}_{\alpha, i, \beta}}[B=b] \mathds{1}\left[\left(\Tilde{x}^{\alpha,i,\beta}(b),\Tilde{p}^{\alpha,i,\beta}(b)\right)  \neq a_{\alpha,i, \beta}(b) \right]\\
         &= \sum_{i=1}^n \sum_{\substack{\alpha \in [0:m]^{i-1},\\ \beta \in [0:m]^{n-i} }} \sum_{b \in [0:m]}  \Pr_{X \sim \mathcal{D}_L}[X= (\alpha, i, \beta) ] \Pr_{B \sim \mathcal{D}_{\alpha, i, \beta}}[B=b]  \, \times \\&  \quad \quad \quad \quad \quad \quad \quad \quad \quad \mathds{1}\left[\left(\Tilde{x}^{\alpha,i,\beta}(b),\Tilde{p}^{\alpha,i,\beta}(b)\right)  \neq a_{\alpha,i, \beta}(b) \right] \\
         &= \sum_{i=1}^n \sum_{\substack{\alpha \in [0:m]^{i-1},\\ \beta \in [0:m]^{n-i} }} \sum_{b \in [0:m]}  \frac{1}{n} \Pr_{Z \sim \mathcal{D}}[Z=\alpha , b , \beta] \times \\ & \quad \quad \quad \quad  \quad  \quad \quad \quad \quad   \mathds{1}\left[\left(\Tilde{x}^{\alpha,i,\beta}(b),\Tilde{p}^{\alpha,i,\beta}(b)\right)  \neq a_{\alpha,i, \beta}(b) \right]\\
          &= \frac{1}{n} \sum_{i=1}^n \sum_{\substack{\alpha \in [0:m]^{i-1},\\ \beta \in [0:m]^{n-i} }} \sum_{b \in [0:m]} \Pr_{Z \sim \mathcal{D}}[Z= \alpha , b , \beta ] \, \times \\& \quad \quad  \quad \quad \quad \quad \quad \quad \quad \mathds{1}\left[\vec{a}(\alpha , b , \beta )_i \neq \left(\Vec{y}(\alpha , b , \beta )_i,  \Vec{q}(\alpha , b , \beta )_i \right) \right] \\
         &\geq \epsilon/n \,,
     \end{align*}
}
where we used \Cref{lem: rexpress distribution} in the fourth line.\Halmos
\end{proof}

\subsection{Proof of \Cref{lemma: line DSIC tester richer alloc}}
\label{app:proof lemma line DSIC tester richer alloc}

\begin{proof}{Proof}
The tester we consider is the one given by \Cref{algo:mid_tester}. The idea of the algorithm is the following: Given query access to $a=(x,p): [0:m] \to [0:d] \times [0,dm]$, we try to find all the "jumps" in the allocation function $x$. That is, for each $\ell\in [0:d]$ we look for the smallest bid $b$ such that $x(b)=\ell$. We record these bids in a set $J_a$. We first check if the points in $J_a$ reveal $x$ isn't monotone (we always have $t_i < t_{i+1}$). And that at each jump point, the increase in the price is "possible" ($t_i-1 \leq h_i \leq t_{i}$). 

Finally, we build a $\DSIC$ function that agrees with $a$ on $J_a$. In particular, if $a$ is $\DSIC$ this function will exactly be $a$. Otherwise, we know that $a$ and the built function must disagree on a randomly chosen point with high probability.

The binary searches use at most $\Omic(\min\{m,d\log(m)\})$ queries: answers are cached, so the search stops costing new queries once all $m+1$ points have been queried. Querying any jump points not already cached costs at most $d+1$ additional queries, which is absorbed by this bound. The last loop uses $\lceil2/\epsilon\rceil$ queries.

It remains to prove this tester is indeed one-sided. Since $s_0$ is the smallest allocation for which we found a bid $b$ with $x(b)=s_0$, we must have $v_{s_0}=t_0=0$, unless $x$ isn't monotone. 
Assume $a=(x,p)$ is $\DSIC$, then $x$ is monotone. Then, for $\ell \in [0:d]$, such that there exists a bid $b$ with $x(b)=\ell$, we will add $\ell$ to $S$. Furthermore, for $s_i$ in $S$, $t_i$ is always the smallest bid with $x(t_i)=s_i$, and we have $J_a=J_x$. Since $x$ is monotone, we always have $t_i \leq t_{i+1}$. Because of how we define $y$, it's not hard to see that we have $y=x$. We then define $h_i = \left(p(t_i)-p(t_{i-1})\right) / \left(x(t_i)-x(t_{i-1})\right)$ and build the function $p^y$. It's easy to see that $p^y$ is a valid price function for $y$ and thus also for $x$. However, note that since $p$ is a valid price for $x$, it is implicitly defined by values $h_1', \ldots, h_k'$ such that for all $b \in [0:m]$, we have
\begin{align*}
p(b) = 0 + \sum_{j=1}^{i^b} h'_j \times (x(t_j) - x(t_{j-1}))\,.
\end{align*}
But, because of how $h_j$ is computed we have $h_j=h'_j$ for all $j$. So $p^y=p^x$. So we if $a=(x,p)$ is $\DSIC$ then $x=y$ and $p=p^y$, so it's clear we will accept $a$. 

Now assume $a$ is $\epsilon$ far from $\DSIC$ under $\mathcal{D}'$. If we haven't rejected $a$ by the end of the second for loop, we consider the function $(y,p^y)$. Since $y$ is monotone and $p^y$ is a valid price function for $y$, this is a $\DSIC$ function. However, since $d_{\DSIC}^{\mathcal{D}'}(a) \geq \epsilon$ this implies 

\begin{align*} \Pr_{b \sim \mathcal{D}'}[ y(b) \neq x(b) \lor p(b) \neq p^y(b)] \geq \epsilon \,. \end{align*} We sample $b$ according to $\mathcal{D}'$ and check this condition $\lceil2/\epsilon\rceil$ times, so the tester fails to reject $a$ with probability at most $(1-\epsilon)^{\lceil2/\epsilon\rceil} \leq e^{-2}<1/3$. Hence, this is a one-sided adaptive $\DSIC$ tester under $\mathcal{D}'$.

Recall that we assume sampling from $\mathcal{D}'$ and querying $a$ take time $\Tilde{\Omic}(n)$. 

Hence the first for loop, where perform the binary search phase takes time $\Tilde{\Omic}(nd)$. Sorting $S$ takes time $\Tilde{\Omic}(d)$. The second for loops takes time $|S|=\Omic(d)$.

One doesn't have to write down $y,p^y$ for all $b \in [0:m]$. Given $b$, one can perform binary search in $J_a$ (assuming it's sorted) to find $t_b$. Then $p^y(b)$ is the sum of at most $i^b \leq |J_a|$ many values, all of which are already known. So computing $y(b), p^y(b)$ would take time $\Omic(d)$. 

So each time we sample a point, it takes $\Omic(d)$ time to compute $(x(b),p(b))$ and it took $\Tilde{\Omic}(n)$ to query $a(b)$ and sample $\mathcal{D}'$. So the last for loop takes time $\Tilde{\Omic}(nd/\epsilon)$.

So the runtime of the algorithm is $\Tilde{\Omic}\left(nd/\epsilon \right)$.\Halmos
\end{proof}

\subsection{Proof of \Cref{theorem:tree_DSIC} and \Cref{theorem:tree_not_DSIC}}
\label{app:tree_lemmas}

First, we will work with an equivalent definition of $\DSIC$, which will make the theorems of this section easier to prove.

\begin{definition}[$\DSIC$ function]\label{def:p-DSIC}
    $a=(x,p) : [0:m] \to [0:d] \times [0:dm]$ is $\DSIC$ if $p(0)=0$ and, for all $b \in [m]$:
    \begin{itemize}
        \item $x(b-1) \leq x(b)$
        \item $p(b)=p(b-1)+h_{b} \times \left(x(b)-x(b-1)\right)$ where $b-1 \leq h_{b} \leq b$. 
    \end{itemize}
\end{definition}

We will consider the tree $T_a$ as we defined it in \Cref{def: tree T_a}. Here are some simple observations:
\begin{observation}\label{obs:parent_empty}
    Consider the tree $T_a$ for some function $a : [0:m] \to [0:d] \times [0,dm]$. 
    
    \begin{itemize}
        \item For any node $u$ with parent $p$ if $\mathcal{P}_{u}=\emptyset$ then $\mathcal{T}_{p}=\emptyset$.
        \item  For an internal node $u$, if $\mathcal{T}_u=\emptyset$, then for any of its successors $v$ we have $\mathcal{P}_v = \emptyset$. If $v$ is internal, this also means $\mathcal{T}_v=\emptyset$.
        \item Consider an internal node $u=[s:t]$. Let $b \in \{\frac{s+t-1}{2}, \frac{s+t+1}{2}\}$, then for any of $u$'s successor $v$, since $\mathcal{P}_{v} \subseteq \mathcal{T}_u$, we have that if $f \in \mathcal{P}_{v}$ then $f(b)=a(b)$.
    \end{itemize}
\end{observation}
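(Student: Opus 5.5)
Each of the three items follows directly from \Cref{def: tree T_a}: for a non-root node $u$ with parent $p$ we have $\mathcal{P}_u=\mathcal{T}_p$, and for an internal node $u$ we have $\mathcal{T}_u\subseteq\mathcal{P}_u$. I will handle the items in the order stated.

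For the first item, let $u$ have parent $p$. Since $\mathcal{P}_u=\mathcal{T}_p$ by definition, $\mathcal{P}_u=\emptyset$ forces $\mathcal{T}_p=\emptyset$, and nothing more is needed.

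For the second item, I would argue by induction along the path from $u$ down to a successor $v$. For a child $c$ of $u$, the definition gives $\mathcal{P}_c=\mathcal{T}_u=\emptyset$. If $c$ is internal, then $\mathcal{T}_c\subseteq\mathcal{P}_c=\emptyset$, so $\mathcal{T}_c=\emptyset$ as well. Iterating this step down the path shows $\mathcal{P}_v=\emptyset$ for every successor $v$, and also $\mathcal{T}_v=\emptyset$ whenever $v$ is internal.

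For the third item, I would first establish the chain of inclusions $\mathcal{P}_v\subseteq\mathcal{T}_u$ for every successor $v$ of $u$. This is again an induction along the path from $u$ to $v$, using two facts at each step: $\mathcal{P}_c=\mathcal{T}_{\mathrm{parent}(c)}$, and $\mathcal{T}_w\subseteq\mathcal{P}_w$ for internal $w$. Now take $f\in\mathcal{P}_v$. Then $f\in\mathcal{T}_u$, and by the definition of $\mathcal{T}_u$ this means $f$ agrees with $a$ at both middle points $\frac{s+t\pm1}{2}$ of $u$. This is exactly the claim.

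None of these steps is a real obstacle. The only point needing care is the precise form of the inclusion chain: the intermediate $\mathcal{T}_w$ sets exist only at internal nodes, so the induction should alternate between the equality $\mathcal{P}_c=\mathcal{T}_{\mathrm{parent}(c)}$ and the inclusion $\mathcal{T}_w\subseteq\mathcal{P}_w$.
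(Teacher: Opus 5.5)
Your proof is correct and is essentially the paper's approach. The paper states these items without a separate proof, as immediate consequences of the definition of $T_a$; your induction via $\mathcal{P}_c=\mathcal{T}_{\mathrm{parent}(c)}$ and $\mathcal{T}_w\subseteq\mathcal{P}_w$ simply makes explicit the inclusion $\mathcal{P}_v\subseteq\mathcal{T}_u$ that the paper cites inline for the third item.
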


We begin with the proof of \Cref{theorem:tree_DSIC}.
\begin{proof}{Proof}

Assume that $a$ is DSIC. We prove by induction on the depth of
the nodes that $a\in\mathcal P_u$ for every node $u$, and that
$a\in\mathcal T_u$ whenever $u$ is internal.

For the root $r=[0:m]$, we have $a\in\mathcal P_r$ because $a$
is DSIC and agrees with itself at $0$ and $m$. Moreover,
$a\in\mathcal T_r$, since $a$ also agrees with itself at the
two middle points of $r$.

Now let $u=[s:t]$ be a non-root node with parent $p$.
By the induction hypothesis, $a\in\mathcal T_p$.
Since $\mathcal P_u=\mathcal T_p$, we obtain
$a\in\mathcal P_u$. If $u$ is internal, the set
$\mathcal T_u$ consists of the functions in $\mathcal P_u$
that agree with $a$ at
\[
    \frac{s+t-1}{2}
    \qquad\text{and}\qquad
    \frac{s+t+1}{2}.
\]
The function $a$ itself satisfies these two conditions, so
$a\in\mathcal T_u$ as well. This completes the induction. Thus $\mathcal P_u\neq\varnothing$ for every node $u$, and
$\mathcal T_u\neq\varnothing$ for every internal node $u$. \Halmos
\end{proof}

We will now turn to the proof of \Cref{theorem:tree_not_DSIC}, which is more involved. We will first prove the following lemma.
\begin{lemma}\label{lemma:bad_set}
    Consider a function $a:[0:m]\to [0:d] \times [0,dm]$. Assume the root of $T_a$ is such that $\mathcal{P}_{[0:m]} \neq \emptyset$. Then let 
    \begin{align*}\mathcal{B} := \{ u \;|\; \text{$u$ isn't a leaf and } \mathcal{P}_{u} \neq \emptyset \text{ and } \mathcal{T}_{u}=\emptyset \} \,.\end{align*}

    \begin{itemize}
        \item If $u, u'$ are in $\mathcal{B}$ then $u$ isn't an ancestor of $u'$. 
        \item $b \in [0:m]$ is such that $\mathcal{P}_{[b:b]}=\emptyset$ if and only if there exists a \emph{unique} node $u \in \mathcal{B}$ such that $b \in u$.   
    \end{itemize}
    
\end{lemma}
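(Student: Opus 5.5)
The proof only needs the monotonicity of the sets along root-to-leaf paths, i.e.\ \Cref{obs:parent_empty}. Along any root-to-leaf path $r=u_0,u_1,\ldots,u_k=[b:b]$ we have the chain
\[
\mathcal{P}_{u_0}\supseteq \mathcal{T}_{u_0}=\mathcal{P}_{u_1}\supseteq \mathcal{T}_{u_1}=\cdots=\mathcal{P}_{u_k}.
\]
The first inclusion at each node holds because $\mathcal{T}_u\subseteq\mathcal{P}_u$ by definition. The equalities hold because $\mathcal{P}_{u_{i+1}}=\mathcal{T}_{u_i}$. So once some set in this chain is empty, every later set is empty as well. Both bullets will be read off from this chain.

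For the first bullet, suppose $u,u'\in\mathcal{B}$ and $u$ is a proper ancestor of $u'$. Since $u\in\mathcal{B}$, we have $\mathcal{T}_u=\emptyset$. By the second item of \Cref{obs:parent_empty}, every successor $v$ of $u$ then has $\mathcal{P}_v=\emptyset$. In particular $\mathcal{P}_{u'}=\emptyset$, which contradicts $u'\in\mathcal{B}$. (If $u=u'$ there is nothing to prove, since the claim concerns distinct nodes.)

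For the second bullet, fix $b$ and the path from the root to the leaf $[b:b]$. The nodes $u$ with $b\in u$ are exactly the nodes on this path, because the intervals at each depth partition $[0:m]$.

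Suppose first that $\mathcal{P}_{[b:b]}=\emptyset$. By hypothesis $\mathcal{P}_{u_0}\neq\emptyset$, so the chain starts nonempty and ends empty. Hence there is a first set in the chain that is empty. This first empty set cannot be of the form $\mathcal{P}_{u_{i+1}}$, since that set equals the preceding $\mathcal{T}_{u_i}$, which would then already be empty. So the first empty set is $\mathcal{T}_{u_i}$ for some internal node $u_i$, and $\mathcal{P}_{u_i}\neq\emptyset$ holds because $\mathcal{P}_{u_i}$ comes earlier in the chain. Thus $u_i\in\mathcal{B}$ and $b\in u_i$. This node is unique: any two nodes of $\mathcal{B}$ containing $b$ both lie on the same path, so one is an ancestor of the other, which the first bullet forbids.

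Conversely, suppose some $u\in\mathcal{B}$ contains $b$. Then $u$ is an internal node on the path and $\mathcal{T}_u=\emptyset$. The leaf $[b:b]$ is a successor of $u$, so \Cref{obs:parent_empty} gives $\mathcal{P}_{[b:b]}=\emptyset$.

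No step here is hard. The only points needing care are that containment of $b$ in a node is the same as lying on $b$'s root-to-leaf path, and that the hypothesis $\mathcal{P}_{[0:m]}\ne\emptyset$ is exactly what guarantees that the first empty set in the chain is a $\mathcal{T}$-set rather than $\mathcal{P}_{[0:m]}$.
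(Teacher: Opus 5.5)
Your proof is correct and follows essentially the same route as the paper. The first bullet is the identical contradiction: $\mathcal{T}_u=\emptyset$ forces $\mathcal{P}_v=\emptyset$ for every node $v$ below $u$. For the second bullet you locate the node of $\mathcal{B}$ on the root-to-leaf path by scanning downward for the first empty set in the chain, whereas the paper climbs upward from the leaf's parent until $\mathcal{P}$ becomes nonempty; uniqueness then comes from the first bullet and the converse from the same observation. Your ``first empty set'' formulation simply makes the paper's climbing argument, and the role of the hypothesis $\mathcal{P}_{[0:m]}\neq\emptyset$, a bit more explicit.
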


\begin{proof}{Proof}

    First, let $u, u'$ be two nodes in $\mathcal{B}$. Assume for contradiction that $u$ is an ancestor of $u'$. Then, we have that $\mathcal{T}_u=\emptyset$. So by \Cref{obs:parent_empty} we have that $\mathcal{P}_v=\emptyset$ for any successor $v$ of $u$. But $u'$ is in $\mathcal{B}$, so $\mathcal{P}_{u'} \neq \emptyset$, which contradicts the fact $u'$ is a successor of $u$.

    Note $b$ be such that $\mathcal{P}_{[b:b]}=\emptyset$, we first prove that $[b:b]$ has an ancestor $u$, meaning $b \in u$ such that $u \in B$.
    
    By \Cref{obs:parent_empty}, we know that for the parent $p$ of $b$ we must have $\mathcal{T}_{p}=\emptyset$. Hence, either $p \in \mathcal{B}$, or $\mathcal{P}_p = \emptyset$. In the first case, we're done. Else we can look at the parent of $p$. Since the root is such that $\mathcal{P}_{[0:m]} \neq \emptyset$, we're guaranteed to hit an ancestor $v$ of $b$ that we can put $v$ in $\mathcal{B}$, and we have $b \in v$.

    If $b$ had two ancestors $u,u' \in \mathcal{B}$. Then without loss of generality $u$ is an ancestor of $u'$, which is contradiction.

    Finally for the other direction, if $b \in u$ and $u \in \mathcal{B}$, then by \Cref{obs:parent_empty} we have that for the successors $v$ of $u$ have $P_{v}=\emptyset$. So, $P_{[b:b]}=\emptyset$.\Halmos
\end{proof}

We now turn to the proof of \Cref{theorem:tree_not_DSIC}.
\begin{proof}{Proof}
    Let $d_{\DSIC}^\mathcal{D}(a) \geq \epsilon$,  let $P=\Pr_{b \sim \mathcal{D}}\left[ \mathcal{P}_{[b:b]}=\emptyset \right]$ and assume for contradiction $P < \epsilon.$

First, we can't have that for the root $[0:m]$ has $P_{[0:m]}=\emptyset$. Indeed, all the leaves of the tree would have $\mathcal{P}_{[b:b]}=\emptyset$. Which would contradict our assumption since $1=P<\epsilon \leq 1$.

    Hence, consider the set $\mathcal{B}$ as defined in \Cref{lemma:bad_set}. Let 
    \begin{align*}\gamma=\sum_{[s:t] \in \mathcal{B}} \sum_{b \in [s:t]} \Pr_{X \sim \mathcal{D}}[ X=b ] \,.\end{align*}

    By \Cref{lemma:bad_set} if $u=[s:t] \in \mathcal{B}$, then for all $b \in u$, we have $\mathcal{P}_{[b:b]}=\emptyset$. And for $b$ with $\mathcal{P}_{[b:b]}=\emptyset$, there is a unique $u \in \mathcal{B}$ with $b \in u$. Thus, $\gamma = P$.

    We will now construct a $\DSIC$ function $\alpha=(\xi, \phi): [0:m] \to [0:d] \times [0:dm]$ such that $\Pr_{b \sim \mathcal{D}}[\alpha(b)\neq a(b)] \leq \gamma$. Assuming, we can do this, we immediately reach a contradiction since $\gamma=P<\epsilon\leq d_{\DSIC}^\mathcal{D}(a)$.

    For each node $u=[s:t] \in \mathcal{B}$, pick a function $f_u=(x_u, p_u)$ in $\mathcal{P}_u$. Note that $f_u$ is $\DSIC$.
    Finally, we define $\alpha$ as follow :

\begin{equation*}
    \alpha(b) =
    \begin{cases}
      f_u(b) & \text{if } b \in u \text{ for some } u \in \mathcal{B}. \\
      a(b)        & \text{otherwise.} 
    \end{cases}
\end{equation*}

Thus functions is well defined since for every $b$, there is at most one $u \in \mathcal{B}$ with $b \in u$ by \Cref{lemma:bad_set}. We also have 
\begin{align*}
\Pr_{b \sim \mathcal{D}}[\alpha(b)\neq a(b)]& \leq\sum_{[s:t] \in \mathcal{B}} \sum_{b \in [s:t]} \Pr_{X \sim \mathcal{D}}[ X=b ]=\gamma\,. 
\end{align*}

So it remains to prove $\alpha$ is $\DSIC$. We will now prove this using the equivalent definition of $\DSIC$ given in \Cref{def:p-DSIC}.

First, $\phi(0)=0$. There are two cases.
\begin{enumerate}
    \item $\alpha(0)=a(0)$, meaning $\phi(0)=p(0)$. By our assumption $P_{[0:m]}$ isn't empty, so it must be that $p(0)=0$, as otherwise there would be no $\DSIC$ function consistent with $a$ on $b=0$. 
    \item Otherwise there exists $u \in \mathcal{B}$ with $\alpha(0)=f_u(0)$, meaning $\phi(0)=p_u(0)$. Since $f_u$ is $\DSIC$, we must have that $p_u(0)=0$.
\end{enumerate}

We will now show that for any  $b \in [m]$, we have $\xi(b-1) \leq \xi(b)$, and there exists $b-1 \leq h_b \leq b$ such that $\phi(b)=\phi(b-1)+h_b\times\left( \xi(b)-\xi(b-1)\right)$. 

 \begin{itemize}
    \item If $b, b-1$ are in the same interval $u \in \mathcal{B}$.  Then $\alpha(b-1)=f_{u}(b-1)$, $\alpha(b)=f_{u}(b)$. Since, $f_{u}$ is $\DSIC$ we must have that  $\xi(b)=x_u(b)\geq x_u(b-1)=\xi(b-1)$. We also have there must exist $ b-1 \leq h_b \leq b$ such that 
    \begin{align*}\phi(b)=p_u(b)=p_u(b-1)+ h_b \times\left( x_u(b)-x_u(b-1)\right)   =\phi(b-1)+h_b \times\left( \xi(b)-\xi(b-1)\right). \end{align*}

    \item Else, let $u=[s:t]$ be the last interval in $T_a$ containing both $b, b-1$. In particular, it must be that $b,b-1$ are the "middle points" of $[s:t]$, that is $b-1=\frac{s+t-1}{2}$, $b=\frac{s+t+1}{2}$. Since $b,b-1$ aren't in the same interval of $\mathcal{B}$, it must be that $\mathcal{P}_u, \mathcal{T}_u \neq \emptyset$. 
Indeed, suppose first that $\mathcal P_u=\varnothing$.
Since $\mathcal P_{[0:m]}\neq\varnothing$, let $w$ be the
first node on the path from the root to $u$ for which
$\mathcal P_w=\varnothing$, and let $v$ be its parent.
Then $\mathcal P_v\neq\varnothing$ and
$\mathcal T_v=\mathcal P_w=\varnothing$, so $v\in\mathcal B$.
But $v$ is an ancestor of $u$ and therefore contains both
$b-1$ and $b$, contradicting our assumption that these
points do not lie in a common interval of $\mathcal B$.
Hence $\mathcal P_u\neq\varnothing$.
If $\mathcal T_u=\varnothing$, then $u$ itself belongs to
$\mathcal B$, giving the same contradiction.
Thus $\mathcal T_u\neq\varnothing$ as well.
    
    Note that for any function $f$ in $\mathcal{T}_u$, we must have $f(b)=a(b)$ and $f(b-1)=a(b-1)$.

    Let $\beta \in \{b-1,b\}$ :
    \begin{itemize}
        \item If $\beta$ isn't in some interval in $\mathcal{B}$, then $\alpha(\beta)=a(\beta)$.
        \item If $\beta$ is in some interval $k \in \mathcal{B}$, then $\alpha(\beta)=f_{k}(\beta)$, where $k$ is a descendant of $u$. Hence we have $f_k \in \mathcal{P}_{k} \subseteq \mathcal{T}_{u}$. So $f_k(\beta)=a(\beta)$.
    \end{itemize}
    So, we have $\alpha(b)=a(b)$ and $\alpha(b-1)=a(b-1)$.

    Furthermore, since $\mathcal{T}_u \neq \emptyset$, we must $x(b-1) \leq x(b)$ and there exists $b-1 \leq h_b \leq b$ such that $p(b)=p(b-1)+h_b\times\left( x(b)-x(b-1)\right)$, as otherwise there would be no $\DSIC$ functions in $\mathcal{T}_u$.

    So this proves that $\xi(b-1) = x(b-1) \leq x(b) = \xi(b)$. And there is $b-1 \leq h_b \leq b$ with \begin{align*}\phi(b)=p(b)=p(b-1)+h_b\times\left( x(b)-x(b-1)\right)=\phi(b-1)+h_b\times\left( \xi(b)-\xi(b-1)\right) \end{align*}
\end{itemize}
   
This proves $\alpha$ is $\DSIC$. Hence, we must have $\Pr_{b \sim \mathcal{D}}\left[ \mathcal{P}_{[b:b]}=\emptyset \right] \geq \epsilon$.\Halmos

\end{proof}

\subsection{Proof of \Cref{thm:ub_amazing}}
\label{app:proof of amazing theorem} We first present a simple algorithm to check if $\mathcal{P}_{[b:b]}=\emptyset$ with $\Omic(\log(m))$ queries.

\setcounter{equation}{0}
\begin{algorithm}[H]
\caption{Algorithm to check if $\mathcal{P}_{[b^*:b^*]}= \emptyset$.}\label{algo:empty_tester}
\hspace*{\algorithmicindent}\textbf{Input: }{Query access to $a : [0:m] \to [0:d] \times [0,dm]$, a value $b^* \in [0:m]$}
\begin{algorithmic}[1]
\State Let $C=\{0,m\}$, $s=0, t=m$.  
\State Query $a(0), a(m)$. 
 \While{$s \neq t$}
\State Add $\frac{s+t-1}{2}, \frac{s+t+1}{2}$ to $C$ and query $a(\frac{s+t-1}{2}), a(\frac{s+t+1}{2})$.  
\If{$b^* \leq \frac{s+t-1}{2}$}
\State Set $t=\frac{s+t-1}{2}$. 
\Else 
\State Set $s= \frac{s+t+1}{2}$.
\EndIf
\EndWhile
\For{each function $g : [0:m] \to [0:d]$}
\If{$g$ isn't monotone}
\State Continue.
\EndIf
\State Check if the following system has a solution:
\Statex{\centering
\fbox{\parbox{\textwidth-20\fboxsep-2\fboxrule}{
\text{\textbf{Variables} : $h_b\, \forall\, b \in [m]\,, p_b\, \forall\, b \in [0:m]$}\\
\textbf{Constraints :}
\begin{align}
        & b-1 \leq h_{b} \leq b & \forall b \in [m] \\
        & 0 \leq  p_b  \leq dm & \forall b \in [m] \\
        &  p_b=p_{b-1}+ h_{b} \times \left(g(b)-g(b-1)\right)  \:\:\: & \forall b \in [m] \\
        &  p_0 = 0 & \\
        &  (g(b),p_b)=a(b) & \forall b \in C
\end{align}
}}}
\If{a solution exists}
\State Output $\mathcal{P}_{[b^*:b^*]}\neq \emptyset$.
\EndIf
\EndFor
\State Output $\mathcal{P}_{[b^*:b^*]}= \emptyset$. 
\end{algorithmic}
\end{algorithm}

\begin{theorem}
    Given $b^* \in [0:m]$ and query access to $a : [0:m] \to [0:d] \times [0,dm]$, \Cref{algo:empty_tester} correctly checks if $\mathcal{P}_{[b^*:b^*]}$ is empty in $T_a$. The algorithm uses $\Omic(\log(m))$ queries and runs in time $\Tilde{\Omic}(n)+d^{{\Omic(m)}}$ in the unit-cost model.
\end{theorem}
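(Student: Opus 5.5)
The plan has three parts: correctness, query complexity, and running time.

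For correctness, I first identify the set $C$ built by the while loop with the constraint points defining $\mathcal{P}_{[b^*:b^*]}$. Unfolding \Cref{def: tree T_a}, $\mathcal{P}_{[0:m]}$ imposes agreement with $a$ at $0$ and $m$. Each step from a node $u=[s:t]$ to its child imposes agreement at $\frac{s+t-1}{2}$ and $\frac{s+t+1}{2}$. The while loop follows exactly the root-to-leaf path toward $[b^*:b^*]$: since $m+1=2^k$, each interval has even length until it becomes a singleton, so the midpoints are integers and the loop terminates when $s=t=b^*$. It adds exactly these midpoints. Hence, by induction along the path,
\[
\mathcal{P}_{[b^*:b^*]}=\{f \text{ DSIC} : f(c)=a(c)\ \forall c\in C\}.
\]

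Next, I use the equivalent characterization \Cref{def:p-DSIC}. A pair $f=(g,q)$ is DSIC iff $g$ is monotone, $q(0)=0$, and $q(b)=q(b-1)+h_b(g(b)-g(b-1))$ for some $h_b\in[b-1,b]$. The range constraint $0\le q(b)\le dm$ is automatic, but keeping it in the system is harmless since the codomain is $[0,dm]$. Therefore $\mathcal{P}_{[b^*:b^*]}\neq\emptyset$ iff there is a monotone $g$ for which the displayed system is feasible, with $p_b=q(b)$ and $h_b$ as witnesses. The algorithm enumerates all $g$, skips the non-monotone ones, and accepts iff some system is feasible; this matches the characterization in both directions. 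The one subtle point, which I expect to be the main (mild) obstacle, is that once $g$ is fixed, each product $h_b\,(g(b)-g(b-1))$ has a constant coefficient. This makes the system a genuine linear feasibility program in the variables $(h_b,p_b)$, and it is what justifies enumerating $g$ instead of solving a bilinear problem.

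The query count is $2$ for the endpoints plus $2$ per iteration over $k=\log_2(m+1)$ iterations, so $\Omic(\log m)$; the enumeration makes no further queries. For running time, the path computation costs $\Omic(\log m)$ queries at $\Tilde{\Omic}(n)$ each. There are $(d+1)^{m+1}=d^{\Omic(m)}$ candidate functions $g$, each checked for monotonicity in $\Omic(m)$ time. Each LP has $\Omic(m)$ variables and constraints and is solvable in time $\mathrm{poly}(m)$, which is absorbed into $d^{\Omic(m)}$. Alternatively, the LP can be solved directly: the $p_b$ are determined by the $h_b$, and feasibility reduces to checking that, between consecutive points of $C$, the required price increment lies in the interval $\big[\sum (b-1)\Delta g_b,\ \sum b\,\Delta g_b\big]$. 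This gives a total of $\Tilde{\Omic}(n)+d^{\Omic(m)}$.
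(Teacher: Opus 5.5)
Your proposal is correct and follows essentially the same route as the paper: you identify $C$ with the root-to-leaf constraint points defining $\mathcal{P}_{[b^*:b^*]}$, enumerate the monotone allocations $g$, and observe that for fixed $g$ the system is a linear feasibility problem, with the same query and running-time accounting. Your explicit interval criterion between consecutive points of $C$ is a welcome sharpening of the paper's bare ``polynomial in $m$'' claim for the LP step.
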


\begin{proof}{Proof}
    For any $[s:t]$ that is an ancestor of $[b^*:b^*]$ in $T_a$, we have  $\mathcal{P}_{[b^*:b^*]} \subseteq \mathcal{T}_{[s:t]}$ and we also have $\mathcal{P}_{[b^*:b^*]} \subseteq \mathcal{P}_{[0:m]}$. In particular, $\mathcal{P}_{[b^*:b^*]}$ is the set of $\DSIC$ functions that are consistent with $a(0), a(m)$ and  $a(\frac{s+t-1}{2}), a(\frac{s+t+1}{2})$ for all nodes $[s:t]$ that are ancestors of $[b^*:b^*]$ in $T_a$.

    Hence, the algorithm starts by querying $a$ on $0$ and $m$. Then the algorithm goes from $[0:m]$ to $[b:b]$ in $T_a$ and each time it's at interval $[s:t]$, the algorithms queries $a$ on $\frac{s+t-1}{2}$ and $\frac{s+t+1}{2}$. Which takes $\Omic(\log(m))$ many queries and takes time $\Tilde{\Omic}(n)$.

    Now to check if the set $\mathcal{P}_{[b^*:b^*]}$ is empty, we look at all possible allocations $g : [0:m] \to [0:d]$. If the allocation isn't monotone, we reject since no $\DSIC$ functions would have such an allocation scheme. 

    Hence, we check if it's possible to build a $\DSIC$ function $\alpha(b)=(g(b),p_b)$ that is $\DSIC$ and consistent with $a$ on the points in $C$. Constraints $(1)$ to $(4)$ correspond to enforcing prices to be well-defined. Constraint $(5)$ ensures that the function agrees with $a$ on the points in $C$.

    If the constraints have a solution then clearly $\mathcal{P}_{[b^*:b^*]} \neq \emptyset$. If for all functions $g$, $g$ isn't monotone or the constraints have no solution we have that $\mathcal{P}_{[b^*:b^*]} = \emptyset$.

    The while loop takes $\Tilde{\Omic}(n)$ time under the oracle-cost convention. The algorithm then loops over all $(d+1)^{m+1}$ functions $g : [0:m] \to [0:d]$. For each one, checking monotonicity and linear feasibility takes time polynomial in $m$.

    Thus the overall running time is $\Tilde{\Omic}(n)+d^{{\Omic}(m)}$.\Halmos
\end{proof}

With the above algorithm, we can build a tester with query complexity $\Omic(\log(m)/\epsilon)$ to test if $a$ is $\DSIC$ or $d_{\DSIC}^{\mathcal{D}'}(a) \geq \epsilon$ for an arbitrary distribution $\mathcal{D}'$:

\begin{algorithm}[H] 
\caption{One-sided non-adaptive $\DSIC$ tester for $a:[0:m] \to [0:d] \times [0,dm]$ under distribution $\mathcal{D}'$. }\label{algo:best_algo}
\hspace*{\algorithmicindent} \textbf{Input: }{Distance parameter $0<\epsilon\leq1$, sample access to a distribution $\mathcal{D}'$ over $[0:m]$ and query access to $a=(x,p): [0:m] \to [0:d] \times [0,dm]$.}
 \begin{algorithmic}[1]
\For{ $j=1$ to $\lceil2/\epsilon\rceil$ }
\State Sample a point $b$ from $\mathcal{D}'$ and use \Cref{algo:empty_tester} to check if $\mathcal{P}_{[b:b]}=\emptyset$. 
\If{$\mathcal{P}_{[b:b]}=\emptyset$}
\State Reject $a$.
\EndIf
\EndFor
\State Accept $a$. 
 \end{algorithmic}
\end{algorithm}

\begin{lemma}\label{lem: optimal tester for lines}
     Given a function $a:[0:m] \to [0:d] \times [0,dm]$, sample access to a distribution $\mathcal{D}'$ over $[0:m]$, and $0<\epsilon\leq1$, \Cref{algo:best_algo} is a non-adaptive one-sided tester with query complexity $\Omic(\log(m)/\epsilon)$ to test if $a$ is $\DSIC$ or $d^{\mathcal{D}'}_\DSIC(a) \geq \epsilon$. The tester uses $\lceil2/\epsilon\rceil$ samples and runs in time $\Tilde{\Omic}(n/\epsilon)+d^{\Omic(m)}/\epsilon$.
\end{lemma}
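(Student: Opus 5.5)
The proof assembles three earlier ingredients: the correctness of \Cref{algo:empty_tester}, \Cref{theorem:tree_DSIC} for completeness, and \Cref{theorem:tree_not_DSIC} for soundness. The resource counts then follow by direct bookkeeping.

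\emph{Completeness (one-sidedness).} Suppose $a$ is $\DSIC$. By \Cref{theorem:tree_DSIC}, every node $u$ of $T_a$ has $\mathcal{P}_u\neq\emptyset$; in particular every leaf $[b:b]$ does. Since \Cref{algo:empty_tester} correctly decides emptiness of $\mathcal{P}_{[b:b]}$, no iteration can reject, whatever points are sampled. Hence $a$ is accepted with probability $1$. If $m+1$ is not a power of two, I first pass to the padded instance $(\overline{a},\overline{\mathcal D})$ described before \Cref{def: tree T_a}. I would note briefly that padding by $a(m)$ preserves the $\DSIC$ property, since the allocation stays constant and so the price stays constant. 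The padding also preserves the distance, so both directions transfer.

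\emph{Soundness.} Suppose $d^{\mathcal{D}'}_\DSIC(a)\geq\epsilon$. By \Cref{theorem:tree_not_DSIC}, a single $b\sim\mathcal{D}'$ satisfies $\mathcal{P}_{[b:b]}=\emptyset$ with probability at least $\epsilon$. In that case \Cref{algo:empty_tester} reports emptiness and the tester rejects. The $\lceil 2/\epsilon\rceil$ samples are independent, so the probability of never rejecting is at most $(1-\epsilon)^{\lceil2/\epsilon\rceil}\le e^{-2}<1/3$.

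\emph{Non-adaptivity and costs.} The set $C$ of points queried for a given $b$ depends only on $b$ and the fixed tree structure. It consists of $0$, $m$, and the two midpoints of each ancestor of $[b:b]$, and is never influenced by query answers. So all queries can be determined from the samples in advance, and the tester is non-adaptive. Each call uses $\Omic(\log m)$ queries, giving $\Omic(\log(m)/\epsilon)$ in total and exactly $\lceil2/\epsilon\rceil$ samples. Each call runs in time $\Tilde{\Omic}(n)+d^{\Omic(m)}$, which over $\Omic(1/\epsilon)$ iterations gives $\Tilde{\Omic}(n/\epsilon)+d^{\Omic(m)}/\epsilon$.

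\emph{Main obstacle.} The only delicate point is the equivalence behind \Cref{algo:empty_tester}: deciding whether some monotone $g$ and prices satisfy the LP is the same as deciding whether a $\DSIC$ $f$ agreeing with $a$ on $C$ exists. This relies on the per-step characterization in \Cref{def:p-DSIC} matching the jump-based definition. The definitions differ in that the LP allows an arbitrary $h_b$ wherever $g$ is flat, but such $h_b$ contribute nothing to the prices, so the two descriptions agree. I would verify this once and then cite it.
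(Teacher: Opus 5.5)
Your proposal is correct and matches the paper's proof: completeness comes from \Cref{theorem:tree_DSIC}, soundness from \Cref{theorem:tree_not_DSIC} with the bound $(1-\epsilon)^{\lceil 2/\epsilon\rceil}\le e^{-2}<1/3$, and non-adaptivity and the query, sample and time bounds follow because each root-to-leaf query path depends only on the sampled $b$. Your extra remarks on padding and on the agreement between the per-step and jump-based $\DSIC$ definitions are sound. The paper does not repeat them here, since it treats them as settled by the setup preceding \Cref{def: tree T_a} and by the correctness of \Cref{algo:empty_tester}.
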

\begin{proof}{Proof}
    The samples may be drawn before any oracle response is seen, and each root-to-leaf query path depends only on its sampled $b$. Thus the tester is non-adaptive. It calls \Cref{algo:empty_tester} $\lceil2/\epsilon\rceil$ times, so the total number of queries is $\Omic(\log(m)/\epsilon)$ and the running time is $\Tilde{\Omic}(n/\epsilon)+d^{\Omic(m)}/\epsilon$.
    
    From \Cref{theorem:tree_DSIC} we know that if $a$ is $\DSIC$, for all $b \in [0:m]$ we have that $\mathcal{P}_{[b:b]} \neq \emptyset$. So we never reject $a$ if it $\DSIC$. From \Cref{theorem:tree_not_DSIC}, we know that if $d^{\mathcal{D}'}_\DSIC(a) \geq \epsilon$, then $\Pr_{b \sim \mathcal{D}'}[\mathcal{P}_{[b:b]}=\emptyset] \geq \epsilon$. Hence the tester fails to reject $a$ with probability at most $(1-\epsilon)^{\lceil2/\epsilon\rceil} \leq e^{-2}<1/3$.\Halmos
\end{proof}

The tester is non-adaptive and improves the query complexity of \Cref{algo:mid_tester}, at the cost of running time exponential in $m$.

\begin{proof}{Proof of \Cref{thm:ub_amazing}}

    The algorithm is obtained by taking \Cref{algo:newpricetester} and replacing calls to  \Cref{algo:mid_tester} by calls to \Cref{algo:best_algo}. Recall that the correctness, query and sampling complexity of \Cref{algo:best_algo} were characterized in  \Cref{lem: optimal tester for lines}.

    The proof of correctness is the same as the proof of \Cref{thm:ub}, since \Cref{lemma: line DSIC tester richer alloc} and \Cref{lem: optimal tester for lines} show \Cref{algo:mid_tester} and \Cref{algo:best_algo} have the same correctness guarantees.     
    
    A call to \Cref{algo:best_algo} with distance parameter $2^{-j}$ requires $\Omic(\log(m)2^j)$ queries. Thus

    \begin{align*}
       q&= \sum_{j=1}^\ell \Omic\left(\frac{\ell}{2^j\epsilon'}\right)
              \Omic\left(\log(m)2^j\right) \\
        &= \Omic\left(\frac{\log(m)\ell^2}{\epsilon'}\right) \\
        &= \Omic\left(\log(m)\frac{n}{\epsilon}
              \log^2\left(\frac{2n}{\epsilon}\right)\right)\,.
    \end{align*}

     \Cref{algo:best_algo} and \Cref{algo:mid_tester} have the same sampling complexity. Hence, as before, the tester uses $\Omic((n/\epsilon)\log^2(2n/\epsilon))$ conditional samples.

     Under the unit-cost convention of \Cref{thm:ub}, a call to $T$ with distance parameter $2^{-j}$ takes time $\Tilde{\Omic}(2^jn)+2^jd^{\Omic(m)}$. Therefore the running time is
    \begin{align*}
    \sum_{j=1}^\ell \Omic\left(\frac{\ell}{2^j\epsilon'}\right)
       \left(\Tilde{\Omic}(2^jn)+2^jd^{\Omic(m)}\right)
    &= \frac{\ell^2}{\epsilon'}
       \left(\Tilde{\Omic}(n)+d^{\Omic(m)}\right) \\
    &= \Tilde{\Omic}\left(\frac{n^2}{\epsilon}\right)
       +\Tilde{\Omic}\left(\frac{n}{\epsilon}\right)d^{\Omic(m)}.
    \end{align*}
    \Halmos

\end{proof}

\end{document}